\documentclass[12pt]{article}

\usepackage{geometry}
\usepackage{amsmath}    
\usepackage{enumitem}
\usepackage{amssymb}    
\usepackage{amsthm}     
\usepackage{bm}         
\usepackage{mathtools}  
\usepackage{subcaption}
\usepackage{booktabs}
\usepackage{threeparttable}
\usepackage{array}
\usepackage{caption}
\usepackage[T1]{fontenc}
\usepackage[utf8]{inputenc}
\usepackage{lmodern}
\usepackage{xcolor}
\usepackage{geometry}   
\usepackage{hyperref}   
\usepackage{enumitem}   
\usepackage{natbib}     
\usepackage{setspace}
\usepackage{todonotes}
\usepackage{comment}
\newtheorem{theorem}{Theorem}[section]
\newtheorem{lemma}[theorem]{Lemma}
\newtheorem{remark}{Remark}[section]
\newtheorem{assumption}{Assumption}[section]
\newtheorem{corollary}{Corollary}[section]

\newtheorem{example}{Example}
\counterwithin{equation}{section}

\newcommand{\yuya}[1]{\todo[inline,color=orange!50]{\textbf{Yuya:  }#1}}

\title{\setlength{\baselineskip}{10mm}Inference for High-Dimensional Network Data\thanks{\setlength{\baselineskip}{4.4mm}We benefited from useful comments by participants in 2026 New York Camp Econometrics. All remaining errors are ours. Sasaki gratefully acknowledges the generous research support of Brian and Charlotte Grove.}}
\author{Yuya Sasaki\thanks{Yuya Sasaki. Department of Economics, Vanderbilt University. \texttt{yuya.sasaki@vanderbilt.edu}} \qquad Baoning Zheng\thanks{Baoning Zheng. Department of Economics, Vanderbilt University. \texttt{baoning.zheng@vanderbilt.edu}}}
\date{}

\begin{document}
\maketitle
\begin{abstract}
\begin{spacing}{1.15}
We develop a novel method of inference for network-dependent high-dimensional random vectors. Dependence is characterized via a functional dependence measure based on graph distance, allowing the approximation theory to capture the interaction between the decay of dependence and the growth of network neighborhoods. We establish Gaussian approximation results for the maximum norm under finite-moment and sub-Weibull conditions, providing explicit conditions under which the dimension may increase with the network size. We also propose a high-dimensional network HAC covariance estimator and establish its convergence properties, yielding a feasible procedure for simultaneous inference. Simulation studies demonstrate favorable finite-sample performance of the proposed method. We apply the procedure to study how spillover effects vary with an index of network homophily by constructing confidence bands for the conditional spillover-effect function. The application reveals heterogeneity and local significance that would be obscured by conventional low-dimensional inference.

\smallskip\noindent
{\bf JEL Codes:} C12, C21

\smallskip\noindent
{\bf Keywords:} Gaussian approximation, high dimensional data, inference, network
\end{spacing}
\end{abstract}

\section{Introduction}
Network-dependent data arise in a wide range of empirical settings in which individuals, firms, regions, markets, or institutions are connected. Prominent examples include social interactions and peer effects, production and supply-chain networks, financial networks, trade networks, and interference in experiments conducted on social or geographic networks. In these settings, statistical dependence is governed by graph distance. Units that are close in the network may exhibit strong dependence, while the influence between units may decay as their graph distance increases. An important feature of network dependence is that its aggregate strength is determined jointly by the rate at which dependence decays with graph distance and the rate at which network neighborhoods grow.

While a growing literature has developed asymptotic theory for network-dependent data, much of the existing theory concerns low-dimensional statistics. At the same time, many empirical problems require simultaneous inference for high-dimensional parameters. Examples include inference for functional treatment-effect parameters (e.g., conditional average treatment effects or CATEs, continuous treatment effects, etc.) and inference for structural parameters that are (partially) identified by conditional moment (in)equalities. High-dimensional Gaussian approximation provides a powerful foundation for such inference, but existing results are primarily designed for independent observations, time series, spatial random fields, local dependence structures, or exchangeable arrays. These results do not directly accommodate dependence that propagates over a general observed network, where the accumulation of dependence is governed by both graph distance and network topology.

This paper develops high-dimensional Gaussian approximation and a method of simultaneous inference for sums of network-dependent random vectors. We consider observations indexed by the nodes of an observed network and characterize their dependence through a functional dependence measure based on perturbations of primitive shocks. The effect of a shock on an observation decays with their graph distance, while the network itself determines how many potentially dependent observations occur at each distance. This formulation provides a direct way to combine the strength and spatial propagation of dependence with the topology of the network.

Our main contribution is to establish Gaussian approximation for the maximum of a high-dimensional sum of network-dependent observations. We allow the dimension of the random vectors to increase with the network size and provide results under both finite-moment and sub-Weibull conditions. The approximation theory explicitly accommodates the interaction among the dimension of the statistic, moment conditions, decay of functional dependence, and growth of network neighborhoods. The proofs combine functional-dependence coupling and localization with a partition of the network into approximately independent interior clusters separated by buffer regions. This construction reduces the network-dependent problem to a high-dimensional Gaussian approximation for independent cluster sums while controlling the errors introduced by localization, buffers, and cross-cluster dependence.

Our second contribution is to make the Gaussian approximation feasible for simultaneous inference. We develop a high-dimensional network HAC covariance estimator in which observations are weighted according to graph distance and establish its consistency in high dimensions. Combining the covariance estimator with the Gaussian approximation yields feasible critical values for the maximum statistic, and hence simultaneous confidence intervals and hypothesis testing for high-dimensional network statistics.

Our framework is related to several strands of the literature. A large literature studies cross-sectional dependence through spatial random fields. When dependence is driven by geographic distance or a similar spatial metric, random fields indexed by Euclidean locations provide a natural framework. Important contributions include \cite{conley1999gmm}, \cite{KelejianPrucha2007}, \cite{kim2011spatial}, \cite{BesterConleyHansen2011}, \cite{JenishPrucha2009}, and \cite{jenish2012spatial}. Our setting instead takes the observed network and its graph distance as the primitive structure governing statistical dependence.

More closely related is the recent literature on network dependence. \cite{KMS2021} introduce a general framework based on $\psi$-dependence and graph distance, and establish LLNs, CLTs, and network HAC variance estimation. Building on this framework, \cite{Leung2022} studies causal inference under interference, while \cite{Leung2023} characterizes the validity of cluster-robust inference. See also \cite{kojevnikov2021bootstrap}, \cite{johnsson2021estimation}, and \cite{LeeSong2019} for related inference methods. This literature, however, has not investigated Gaussian approximation or inference for high-dimensional vectors. Our paper complements this literature
by developing such methods under network dependence.

Recently, \cite{gao2026coupling} study graph-dependent data and establish novel maximal inequalities for empirical processes. Our paper provides a complementary contribution with a different objective. While \cite{gao2026coupling} develop maximal inequalities for uniformly controlling empirical processes, our focus is on high-dimensional Gaussian approximation for processes that need not be stochastically equicontinuous.

Our work is also related to limit theory for cross-sectionally dependent data based on martingale and mixingale methods. \cite{KuersteinerPrucha2013} establish a CLT for martingale difference sequences. \cite{kuersteiner2020dynamic} incorporates network dependence into a martingale framework, while \cite{kuersteiner2019limit} develops conditional spatial mixingale limit theory accommodating stochastic network formation. Related results based on dependency graphs and local dependence include \cite{stein1972bound}, \cite{janson1988normal}, \cite{BaldiRinott1989}, \cite{rinott1996multivariate}, and \cite{chen2004normal}; applications to network data include \cite{AronowSamii2017}, \cite{Leung2020}, and \cite{song2018measuring}.

Our theoretical results build on the literature on high-dimensional Gaussian approximation. For independent random vectors and empirical processes, important contributions include \cite{CCK2013,CCK2014,chernozhukov2015comparison,CCK2016,CCK2017}, \cite{CCK2019}, \cite{deng2020beyond}, \cite{KuchibhotlaMukherjeeBanerjee2021}, and \cite{chernozhukov2023nearly}. High-dimensional Gaussian approximation has subsequently been extended to several dependent-data settings. For time series, \cite{ZW2017} establish Gaussian approximation under the physical dependence measure \citep{wu2005nonlinear}, while \cite{ZhangCheng2018} derive related results under functional dependence and develop inference based on the non-overlapping block bootstrap. For local dependence, \cite{fang2021high} establish high-dimensional CLTs under dependency graph structures. \cite{chiang2023inference} establish a high-dimensional Gaussian approximation for exchangeable arrays. For spatial dependence, \cite{kurisu2024gaussian} extend high-dimensional Gaussian approximation to spatial random fields. Our results complement this literature by allowing dependence to propagate over an observed network and by making explicit how network neighborhood growth interacts with dependence decay in determining the validity of high-dimensional Gaussian approximation.

We investigate the finite-sample performance of the proposed methods through Monte Carlo simulations. The empirical application examines heterogeneous spillover effects and demonstrates how the proposed procedure can be used to construct simultaneous confidence bands for network-based treatment- and spillover-effect parameters.

\medskip\noindent
{\bf Organization:}
Section \ref{sec:setup} introduces the setting, the statistic of interest, and analysis tools. Section \ref{sec:GA} presents the main Gaussian approximation results. Section \ref{sec:estimation-cov} develops a method of inference. Sections \ref{sec:simGA} and \ref{sec:HAC} study the finite sample performance through Monte Carlo simulations. Section \ref{sec:application} applies the method to heterogeneous spillover analysis in a network experiment. Proofs and additional technical details are collected in the appendix.

\medskip\noindent
{\bf Notation:}
For a random variable $X$ and $q>0$, 
we write $X \in \mathcal{L}^q$ if
$
\|X\|_q := \big( \mathbb{E}|X|^q \big)^{1/q} < \infty,
$
and for a vector $v = (v_1,\dots,v_p)^\top$, let the norm-$s$ length be
$|v|_s = \left( \sum_{j=1}^p |v_j|^s \right)^{1/s}$ for $s \ge 1.$
Write the $p \times p$ identity matrix as $\mathrm{Id}_p$. 
For two sequences of positive numbers $(a_n)$ and $(b_n)$, 
we write $a_n \asymp b_n$ (resp., $a_n \lesssim b_n$ or 
$a_n \ll b_n$) if there exists some constant $C>0$ such that
$C^{-1} \le \frac{a_n}{b_n} \le C
\quad
(\text{resp., } \frac{a_n}{b_n} \le C 
\text{ or } \frac{a_n}{b_n} \to 0)$
for all large $n$. 
We use $C, C_1, C_2, \dots$ to denote positive constants whose values may differ from place to place. 
A constant with a symbolic subscript is used to emphasize the dependence of the value on the subscript. 
We allow $p = p_n$ to increase with $n$,
and all asymptotic statements are taken as $n\to\infty$.

\section{Network Dependence Framework}{\label{sec:setup}}
\subsection{Network Topology}
Suppose that a researcher observes a set of cross-sectional units, denoted by $V_n=\{1,\dots,n\}$.
Modeling statistical dependence across these units typically requires a notion of distance.
In time-series applications, dependence is indexed by temporal distance, whereas in spatial applications, it is typically indexed by the Euclidean distance.
In the current paper, we consider dependence induced by the topology of an observed network.

Specifically, suppose that a researcher observes an undirected network $\mathcal{G}_n=(V_n,E_n)$ on $V_n$, where
$
E_n\subseteq \{\{i,u\}: i,u\in V_n,\ i\neq u\}
$
denotes the set of links.
For any $i,u\in V_n$, let $d_n(i,u)$ denote the distance between units $i$ and $u$ in $\mathcal{G}_n$, defined as the length of the shortest path connecting them.
The resulting function $d_n$ defines a metric on $V_n$ and serves as the fundamental notion of proximity throughout the paper.

Define the volume neighborhood $\mathcal N_n(i,r)$ as the set of nodes within distance $r$ of node $i$, and the shell neighborhood $\mathcal N_n^\partial(i,r)$ as the set of nodes at exactly distance $r$ from node $i$:
\[
\mathcal N_n(i,r)
=
\{u\in V_n:d_n(i,u)\le r\}
\qquad\text{and}\qquad
\mathcal N_n^\partial(i,r)
=
\{u\in V_n:d_n(i,u)=r\}.
\]
Their cardinality will be denoted by
\begin{equation}\label{eq:neighbor size}
N_n(i,r)
=
|\mathcal N_n(i,r)|
\qquad\text{and}\qquad
N_n^\partial(i,r)
=
|\mathcal N_n^\partial(i,r)|.
\end{equation}
The sizes of these local neighborhoods play a central role in our analysis.
Because dependence propagates through the network, the cumulative dependence surrounding a node depends not only on the strength of dependence but also on the rate at which neighborhoods expand.

\subsection{ Functional Dependence}

Network dependence can be characterized through a variety of weak dependence notions, including mixing conditions, dependency graphs, and functional dependence.\footnote{These alternative approaches are also encompassed by the $\psi$-dependence framework of \citet{kojevnikov2021bootstrap}.}
Among these alternatives, we focus on the functional dependence approach, originally introduced by \citet{wu2005nonlinear}, as it provides a convenient framework for deriving localization bounds, nonasymptotic probability inequalities, and high-dimensional Gaussian approximations required for our analysis.
Related functional-dependence ideas have been widely used in the high-dimensional time-series literature. See \citet{ZW2017} and \citet{ZhangCheng2018} for example.

The functional dependence framework  represents each observation as a function of primitive shocks and measures dependence through the effects of shock perturbations.
In the network setting, this formulation is particularly natural, as the impact of a shock can be allowed to decay with its graph distance from the observation.
Specifically, let
\[
\varepsilon_n:=\{\varepsilon_{n,u}:u\in V_n\}
\]
be a collection of mutually independent primitive shocks indexed by the nodes of the network.
Each observation is generated as a measurable function of the entire shock field,
\[
X_{n,i}=H_{n,i}(\varepsilon_n)\in\mathbb R^p,
\qquad
\mathbb E(X_{n,i})=0
\quad
\text{for all } i\in V_n,
\]
where $H_{n,i}$ is allowed to vary across units, thereby accommodating heterogeneity.
Write
\[
X_{n,i}
=
(X_{n,i1},\ldots,X_{n,ip})^\top
\]
for the coordinate representation of the $p$-dimensional vector $X_{n,i}$.
This framework encompasses a broad class of linear and nonlinear network processes. 

Throughout the paper, the network \(\mathcal{G}_n=(V_n,E_n)\) is treated as observed, and all probabilistic statements are understood conditionally on the realized network structure.
For any node \(u\in V_n\), let \(\varepsilon_n^{*(u)}\) denote the coupled shock field obtained by replacing \(\varepsilon_{n,u}\) with an independent copy \(\varepsilon'_{n,u}\) while leaving all remaining shocks unchanged. 
Define
\[
X_{n,i}^{*(u)}
:=
H_{n,i}(\varepsilon_n^{*(u)}).
\]

Following the functional dependence approach of \citet{ZW2017}, we measure the influence of the shock at node \(u\) on the \(j\)-th coordinate of observation \(i\) by the \(L_q\) distance
\[
\delta_{n,i,u,q,j}
:=
\left\|
X_{n,ij}
-
X^{*(u)}_{n,ij}
\right\|_q
\]
between \(X_{n,ij}\) and its coupled counterpart $X_{n,ij}^{\ast(u)}$.
This quantity \(\delta_{n,i,u,q,j}\) measures the sensitivity of \(X_{n,ij}\) to a perturbation of the shock at node \(u\).
Large values indicate that the shock at \(u\) has a substantial effect on observation \(i\).

Although the dependence measure may vary across units and shocks, network dependence is fundamentally governed by network distance. We therefore summarize the maximal influence of shocks from distance $s$ through the distance-wise functional dependence measure
\[
\delta_{n,s,q,j}
:=
\sup_{i\in V_n}
\sup_{u:d_n(i,u)=s}
\delta_{n,i,u,q,j}.
\]

To quantify the cumulative influence of distant shocks,
we  define the cumulative tail dependence envelop
\(\Delta_{n,m,q,j}\) and its high-dimensional envelope
\(\Psi_{n,q}(m)\) by
\begin{equation}\label{eq:psi}
\Delta_{n,m,q,j}
=
\sup_{i\in V_n}\sum_{s\ge m}
N_n^\partial(i,s)\,
\delta_{n,s,q,j},
\qquad
\Psi_{n,q}(m)
=
\max_{1\le j\le p}
\Delta_{n,m,q,j},
\end{equation}

The special case \(m=0\) plays a particularly important role. 
Define
\[
\Theta_{n,q,j}
:=
\Delta_{n,0,q,j}.
\]
Following \citet{ZW2017}, we refer to \(\Theta_{n,q,j}\) as the dependence-adjusted norm of the \(j\)-th coordinate, as it combines both moment magnitude and cumulative network dependence.

The quantities \(\delta_{n,s,q,j}\), \(\Delta_{n,m,q,j}\), and \(\Psi_{n,q}(m)\) will serve as the primary measures of network dependence throughout the paper.


\section{Gaussian Approximations}{\label{sec:GA}}
This section presents the main assumptions and results on Gaussian approximations.

To facilitate this objective, we introduce the notation
$$
T_X=\sum_{i \in V_n}X_{n,i},
\qquad\text{and}\qquad
\Sigma_n=\mathrm {Var}(T_X/\sqrt{n})=n^{-1}\sum_{i,i'\in V_n}\mathrm {Cov}(X_{n,i},X_{n,i'})
$$
for the network sum and scaled variance, respectively.
Let $\Sigma_0=\mathrm{diag}(\Sigma_n)=\mathrm{diag}(\sigma_{11},\cdots,\sigma_{pp})$ denote the diagonal matrix of $\Sigma_n$, and let $D_0=\Sigma_0^{1/2}$ be its square root. 
The location is normalized to $\mathbb E(X_{n,i})=0$ for all $i\in V_n$ throughout.

We aim to establish the Gaussian approximation:
\begin{equation}{\label{eq:GA}}
\rho_n 
:= \sup_{u \ge 0} 
\left| 
\mathbb{P}\!\left( 
\,\bigl| D_0^{-1} T_X/\sqrt{n} \bigr|_\infty \le u 
\right)
-
\mathbb{P}\!\left(
\bigl| D_0^{-1} Z \bigr|_\infty \le u
\right)
\right|
\to 0
\end{equation}
where $Z \sim N(0,\Sigma_n)$.

\subsection{Cluster Partition}
The key step in establishing the Gaussian approximation  \eqref{eq:GA} is to reduce the network-dependent statistic
to a sum of approximately independent cluster-level contributions.
Once such a representation is obtained, the Gaussian approximation theory for maxima of sums of independent random vectors \citep{CCK2013} becomes applicable. 

To achieve this goal, we partition the network into many balanced clusters and show that dependence across clusters are asymptotically negligible after localization.
Specifically, we suppose that there exists a squence of cluster sets satisfying the following assumption.

\begin{assumption}[Regular Cluster Partition]
\label{ass:clusters}
The network $V_n$ can be partitioned\footnote{
That is, $\{\mathcal C_g\}_{g=1}^{G_n}$ satisfies $\bigcup_{g=1}^{G_n}\mathcal C_g=V_n$ and $\mathcal C_{g_1}\cap\mathcal C_{g_2}=\varnothing$ for $g_1\neq g_2.
$} 
into clusters
$
\{\mathcal C_g\}_{g=1}^{G_n}
$
satisfying the following properties:
\begin{itemize}
    \item[(i)] $G_n\to\infty$ and $G_n=o(n)$.
    \item[(ii)] There exist a sequence $l_n\to\infty$ and constants $0<\underline c<\bar c<\infty$
such that $\underline c\,l_n \le l_g \le \bar c\,l_n$ for each $g=1,\ldots,G_n$, where $l_g := |\mathcal C_g|$.
\end{itemize}
\end{assumption}

Assumption \ref{ass:clusters} paves the way for a blocking structure that will be used for our analysis.
The required property $G_n\to\infty$ in part (i) of the assumption guarantees that the statistic of interest can be represented as a sum of many cluster-level contributions, while the balance condition (ii) prevents any single cluster from dominating the statistic.

Given $\{\mathcal C_g\}_{g=1}^{G_n}$, we define the boundary
\[
\partial\mathcal C_g
:=
\Bigl\{
i\in\mathcal C_g:
\exists\,k\notin\mathcal C_g
\text{ such that }
(i,k)\in E_n
\Bigr\}
\]
of a cluster $\mathcal C_g$.
This object will be used to quantify interactions across clusters.
The aggregate boundary size is denoted by
\[
\eta_n(G_n)
:=
\sum_{g=1}^{G_n}
|\partial\mathcal C_g|.
\]

The following examples illustrate how balanced cluster partitions can be constructed for commonly used network models.

\begin{example}[Ring Network]
\label{ex:ring}
Consider the ring graph
$\mathcal G_n=(V_n,E_n)$
where each node is connected to its two nearest index neighbors.
That is, node $1$ is connected to nodes $n$ and $2$, node $i$ is connected to nodes $i-1$ and $i+1$ for $i \in \{2,...,n-1\}$, and node $n$ is connected to nodes $n-1$ and $1$.
The graph distance can be written as
$d_n(i,i') = \min\{|i-i'|,\,n-|i-i'|\}.$

Partition the nodes into contiguous segments of equal
length $a_n$, i.e.,
\[
\mathcal C_g
=
\{(g-1)a_n+1,\ldots,ga_n\}
\qquad\text{and}\qquad
G_n={n}/{a_n}.
\]
By construction, all clusters have the same size
$
l_g=a_n.
$
Therefore, setting
$a_n\to\infty$
and
$a_n=o(n)$
satisfies Assumption~\ref{ass:clusters}.

Since each cluster has exactly two boundary nodes, we have
$|\partial \mathcal C_g|=2$
and
$\eta_n(G_n)=2G_n.$
\end{example}

\begin{example}[Regular Random Geometric Graph]
\label{ex:rgg}
Consider a random geometric graph
\(\mathcal G_n=(V_n,E_n)\),
where each node \(i\in V_n\) is associated with a random spatial location
$
U_{n,i}\in[0,1]^2.
$
Two nodes are connected whenever their Euclidean distance is no larger than a connection radius
$
r_n
=
c_r n^{-1/2}
$
for some fixed constant \(c_r>0\), that is,
\[
(i,i')\in E_n
\quad\Longleftrightarrow\quad
\lVert U_{n,i}-U_{n,i'}\rVert
\le r_n.
\]
The choice \(r_n\asymp n^{-1/2}\) makes the connection radius of the same order as the typical spacing between \(n\) nodes in a two-dimensional region.

Suppose further that there exist constants
\(c_{\mathrm{sep}},C_{\mathrm{Cov}}>0\), independent of \(n\), such that
\[
\min_{i\neq i'}
\lVert U_{n,i}-U_{n,i'}\rVert
\ge
c_{\mathrm{sep}}r_n
\qquad\text{and}\qquad
\sup_{x\in[0,1]^2}
\min_{i\in V_n}
\lVert x-U_{n,i}\rVert
\le
C_{\mathrm{Cov}}r_n.
\]
The first condition prevents excessive local crowding, while the second rules out spatial holes larger than the connection scale. Together, these two conditions allow irregular random configurations while ensuring that the node locations remain sufficiently regular at scale \(r_n\).

Partition the unit square into disjoint square regions
\(\{\mathcal R_g\}_{g=1}^{G_n}\)
with side length \(a_n\), and define
$
\mathcal C_g
=
\left\{
i\in V_n:
U_{n,i}\in\mathcal R_g
\right\}
$
and
$
G_n=a_n^{-2}.
$
Let
$
a_n\to0
$
and
$
\frac{a_n}{r_n}\to\infty.
$
Under the separation and covering conditions, the number of nodes contained in each square is of the same order as the square's area relative to the local spacing scale:
\[
l_g
=
|\mathcal C_g|
\asymp
\left(
\frac{a_n}{r_n}
\right)^2
\asymp
na_n^2
=
\frac{n}{G_n}
\]
uniformly over \(g\), with probability approaching one. Hence the resulting partition satisfies the balanced-cluster requirement in Assumption~\ref{ass:clusters}.

A node belongs to the cluster boundary if it lies within graph distance one of another cluster. Since every edge has Euclidean length at most \(r_n\), such nodes are contained in a spatial strip of width of order \(r_n\) along the boundary of the corresponding square region. The area of this strip is of order
$
a_n r_n.
$
By the same local-density regularity,
\[
|\partial\mathcal C_g|
\asymp
\frac{a_n r_n}{r_n^2}
=
\frac{a_n}{r_n}
\asymp
\left(
\frac{n}{G_n}
\right)^{1/2}
\]
uniformly over \(g\), with probability approaching one. Summing over all clusters gives
\[
\eta_n(G_n)
=
\sum_{g=1}^{G_n}
|\partial\mathcal C_g|
\asymp
n^{1/2}G_n^{1/2}.
\]
\end{example}

\subsection{ Localization and Approximate Independence}
Even if the network is partitioned into many clusters, observations belonging to different clusters generally remain dependent since shocks may propagate through arbitrarily long network paths.
In this section, we are going to localize such dependence inside the clusters and construct independent sums by imposing some conditions.

The first condition requires that the influence of a shock decays sufficiently fast with network distance, as formally stated below.
\begin{assumption}[Exponential Decay]
\label{ass:decay rate}
For some $q>4$, there exists $0<\rho<1$ such that
\[
\delta_{n,s,q,j}
\le
C_q\rho^s
\]
for all $s \ge 0$ and $j \in \{1,...,p\}$ for all $n$.
\end{assumption}

The second condition restricts the growth of network neighborhoods. 

\begin{assumption}[Polynomial Shell and Volume Growth]
\label{ass:volume}
There exist constants $C_V>0$ and $d\ge1$, such that
\[
N_n(r)
=
\sup_{i\in V_n}
N_n(i,r)
\le
C_V(1+r)^{d}
\]
for all $r \ge 0$ and $n$.
\end{assumption}

Assumptions \ref{ass:decay rate} and \ref{ass:volume} are motivated by the fact that cumulative dependence depends on both the decay of shock propagator and the number of nodes affected, respectively.
Assumption \ref{ass:volume} is satisfied by bounded-growth networks such as rings and regular random geometric graphs. 
However, it excludes graph sequences whose neighborhoods expand too rapidly relative to  the dependence decay.

\begin{remark}\label{rem:local}
Under Assumption \ref{ass:decay rate}  and \ref{ass:volume},
\begin{align*}
&\Theta_{n,q,j}
=
\Delta_{n,0,q,j}
\le
C_VC_q
\sum_{s\ge0}
(1+s)^d\rho^s
=
O(1)
\qquad\text{for all $j$,}\\
&\Psi_{n,q}(m)
:=
\max_{1\le j\le p}
\Delta_{n,m,q,j}
\lesssim
(1+m)^d\rho^m.
\end{align*}
\end{remark}
This bound motivates us to introduce the localized approximation
\begin{equation}{\label{eq:localize}}
X_{n,i}^{(m)}
=
\mathbb E
\bigl(
X_{n,i}
\mid
\mathcal F_{i,m}
\bigr),
\end{equation}
where
\[
\mathcal F_{i,m}
=
\sigma
\{
\varepsilon_{n,u}
:
d_n(i,u)\le m
\}.
\]
The localized approximation $X_{n,i}^{(m)}$ depends only on the shocks contained in the $m$-neighborhood of node $i$. 
When $m$ is chosen of logarithmic order in $np$, the localized approximation error
\(
X_{n,i}-X_{n,i}^{(m)}
\)
is asymptotically negligible. 
A formal statement is provided in Appendix \ref{sec:appendix}.

After this localization procedure, the dependence is effectively confined to local neighborhoods. 
Consequently, cluster sums constructed from observations whose $m$-neighborhoods are entirely contained within different clusters become independent.
This observation forms the main bases for our blocking argument used in our proofs of Theorems \ref{th:GA_q} and \ref{th:GA_sub} to be presented in the following subsection.

\subsection{Gaussian Approximation Results}\label{sec:gaussian_approximation}

The  assumptions in previous section ensure that localized observations can be organized into approximately independent cluster sums.
With this, we are now ready to apply the Gaussian approximation theory for independent observations \citep{CCK2013}, under the nondegeneracy condition on both the overall statistic and the cluster-level contributions.

\begin{assumption}[Variance Regularity]
\label{ass:variance lb}
There exist constants $c>0$ and $c_1>0$ such that
\begin{align*}
&\min_{1\le j\le p}\sigma_{jj}\ge c
\qquad\text{and}
\\
&\inf_{1\le g\le G_n}\inf_{1\le j\le p}
\mathrm{Var}\left(
l_g^{-1/2}\sum_{i\in \mathcal{C}_g}X_{n,ij}
\right)
\ge c_1.
\end{align*}
\end{assumption}

\begin{theorem}[Finite-Moment Gaussian Approximation]\label{th:GA_q}
Suppose that Assumption \ref{ass:clusters}, \ref{ass:decay rate}, \ref{ass:volume}, and \ref{ass:variance lb} hold.
If  the partition satisfies
\begin{align*}
\mathrm{(i)} \ & \ 
G_n
\gg
\max\{
{\log(np)}^{7},
p^{2/(q-2)}
{\log(np)}^{3q/(q-2)}\},
\\
\mathrm{(ii)} \ & \ \max_{1\le g\le G_n}
\frac{|\partial \mathcal C_g|
(\log(np))^{d}
G_n}{n}
\to0,
\qquad\text{and}\\
\mathrm{(iii)} \ & \ \eta_n(G_n)
(\log(np))^{d}
\ll
n p^{-2/q}(\log p)^{-1},
\end{align*}
then the Gaussian approximation \eqref{eq:GA} holds.
\end{theorem}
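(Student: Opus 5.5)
The plan is to reduce $\rho_n$ to a Gaussian approximation for a sum of independent cluster-level vectors, and then invoke the finite-moment high-dimensional CLT of \citet{CCK2013} (in its refined form from \citet{CCK2017}).

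\textbf{Step 1: localization.} Fix $m=m_n=C\log(np)$ with $C$ large relative to $1/\log(1/\rho)$, and pass from $X_{n,i}$ to $X_{n,i}^{(m)}$ defined in \eqref{eq:localize}. The error $T_X-T_X^{(m)}$ is controlled coordinatewise as in \citet{ZW2017}. Write
\[
X_{n,ij}-X^{(m)}_{n,ij}=\sum_{u:d_n(i,u)>m} P_u X_{n,ij},
\]
where $P_u$ is the projection (martingale difference) with respect to an ordering of the shocks. Burkholder's inequality summed over $i$ then gives
\[
\Bigl\|\sum_i (X_{n,ij}-X^{(m)}_{n,ij})\Bigr\|_q\lesssim \sqrt n\,\Delta_{n,m,q,j}.
\]
By Remark~\ref{rem:local} this is $\lesssim \sqrt n\,(1+m)^d\rho^m$. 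A union bound over $p$ coordinates bounds the max-norm error by $p^{1/q}(1+m)^d\rho^m$, which is polynomially small in $np$. The anti-concentration inequality for Gaussian maxima (Nazarov's inequality), together with Assumption~\ref{ass:variance lb}, converts this into an $o(1)$ contribution to $\rho_n$.

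\textbf{Step 2: blocking.} For each cluster, set
\[
\mathcal I_g=\{i\in\mathcal C_g:\mathcal N_n(i,m)\subseteq\mathcal C_g\}
\qquad\text{and}\qquad
\mathcal B_g=\mathcal C_g\setminus\mathcal I_g.
\]
The interior sums $Y_g=\sum_{i\in\mathcal I_g}X^{(m)}_{n,i}$ depend on disjoint sets of shocks, so $\{Y_g\}_{g=1}^{G_n}$ are independent. Each buffer node lies within distance $m$ of $\partial\mathcal C_g$, so by Assumption~\ref{ass:volume},
\[
|\mathcal B_g|\le C_V|\partial\mathcal C_g|(1+m)^d\lesssim |\partial \mathcal C_g|(\log(np))^d .
\]
The buffer sum $\sum_g\sum_{i\in\mathcal B_g}X^{(m)}_{n,i}$ has $q$-th moment of order $\sqrt{\eta_n(G_n)(\log (np))^d}$ per coordinate, again by the dependence-adjusted norm bound $\Theta_{n,q,j}=O(1)$. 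A union bound over $p$ coordinates, normalized by $\sqrt n$ and multiplied by the anti-concentration factor $\sqrt{\log p}$, shows this term is negligible exactly under condition (iii). Condition (ii) guarantees $|\mathcal B_g|=o(l_g)$ uniformly in $g$. The cluster variance lower bound in Assumption~\ref{ass:variance lb} therefore transfers to $\mathrm{Var}(l_g^{-1/2}Y_{gj})$, and the covariance of $\sum_g Y_g/\sqrt n$ differs from $\Sigma_n$ by a negligible amount in max-entry norm. Gaussian comparison (\citet{chernozhukov2015comparison}) handles this covariance change at the cost of $(\Delta\log^2 p)^{1/3}$.

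\textbf{Step 3: CLT for independent blocks.} Apply the high-dimensional CLT to $\xi_g=Y_g/\sqrt{l_g}$, with $G_n$ independent summands. The summands have moments $\|\xi_{gj}\|_q=O(1)$ by the Burkholder/$\Theta$ bound. Their maximum has envelope $\|\max_j|\xi_{gj}|\|_q\lesssim p^{1/q}$, and the summands are nondegenerate. With $B_n\asymp p^{1/q}$, the resulting error is of order
\[
\Bigl(\frac{B_n^2\log^7(pG_n)}{G_n}\Bigr)^{1/6}+\Bigl(\frac{B_n^2\log^3(pG_n)}{G_n^{1-2/q}}\Bigr)^{1/3}.
\]
Condition (i) is precisely what makes both terms vanish, with $p^{2/(q-2)}$ arising from solving the second term for $G_n$. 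The balance in Assumption~\ref{ass:clusters}(ii) allows reweighting by $l_g$ without changing these orders.

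\textbf{Main obstacle.} I expect the hardest part to be Step 2, specifically the uniform control of the cross-covariance between interior and buffer parts and the accounting of variance after dropping buffers. Covariances of $X^{(m)}$ across distances must be summed using $N^\partial_n(i,s)\delta_{n,s,q,j}$ so that the network growth appears only through $(\log(np))^d$. I would first bound
\[
|\mathrm{Cov}(X_{n,ij},X_{n,i'j})|\lesssim\sum_u\delta_{n,i,u,2,j}\delta_{n,i',u,2,j},
\]
and then sum over $i'$ using Assumption~\ref{ass:volume}.
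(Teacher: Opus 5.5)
Your proposal is correct and is essentially the paper's proof. You localize at $m\asymp\log(np)$, strip $m$-buffers around $\partial\mathcal C_g$ (negligible by condition (iii), with condition (ii) preserving the cluster-level nondegeneracy of Assumption~\ref{ass:variance lb}), apply a high-dimensional CLT to the independent interior cluster sums, and close with Gaussian comparison and anti-concentration. You use the \citet{CCK2017} envelope bound with $B_n\asymp p^{1/q}$ where the paper uses the \citet{CCK2013} bound through the quantile $u_m(\lambda)$; this is fine, because condition (i) also forces $G_n\gg p^{2/q}\log^7(pG_n)$, which your first term needs.

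One small slip: the identity $X_{n,ij}-X^{(m)}_{n,ij}=\sum_{u:d_n(i,u)>m}P_uX_{n,ij}$ needs an ordering of the shocks that depends on $i$, so Burkholder cannot be applied across $i$ to get the $\sqrt n$ rate as written. This is harmless: the cruder triangle-inequality bound with a factor $n$ (as in Lemma~\ref{lem:trunc_omega}) already suffices, because $\rho^m$ is polynomially small in $np$.
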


Restriction (i) in the statement of this theorem requires that the partition contains sufficiently many clusters, whereas
restrictions (ii) and (iii) require that the contribution of boundary observations is asymptotically negligible both at the cluster level and in aggregate, respectively.
These general conditions are stated at a high-level.
Thus, let us revisit Examples~\ref{ex:ring} and~\ref{ex:rgg}, and discuss sufficient conditions in the contexts of these examples.

\bigskip\noindent
{\bf Example~\ref{ex:ring}} (Ring Network){\bf, Revisited.}
Consider the ring network introduced in Example~\ref{ex:ring}. Since
$|\partial\mathcal C_g|=2$ and $\eta_n(G_n)=2G_n$,
the lower and upper bounds on \(G_n\)  given by conditions (i) and (iii) are
\[
G_n
\gg
\max\left\{
(\log np)^7,\,
p^{2/(q-2)}(\log np)^{3q/(q-2)}
\right\},
\]
and
\[
G_n
\ll
np^{-2/q}(\log np)^{-(d+1)}.
\]
For $p\ge n$,  a sufficient 
rate condition ensuring that a feasible choice of \(G_n\) exists is
\[
p^{\frac{2}{q-2}+\frac{2}{q}}
(\log p)^{d+1+3q/(q-2)}
=o(n).
\]

\bigskip\noindent
{\bf Example~\ref{ex:rgg}} (Regular Random Geometric Graph){\bf, Revisited.}
Consider the regular random geometric graph introduced in Example~\ref{ex:rgg}. Since
$
|\partial\mathcal C_g|
\asymp
\left({n}/{G_n}\right)^{1/2}
$ 
and
$
\eta_n(G_n)
\asymp
n^{1/2}G_n^{1/2},
$
the lower and upper bounds on \(G_n\)  given by conditions (i) and (iii) are
\[
G_n
\gg
\max\left\{
(\log np)^7,\,
p^{2/(q-2)}(\log np)^{3q/(q-2)}
\right\},
\]
and
\[
G_n
\ll
np^{-4/q}(\log np)^{-(2d+2)}.
\]
For $p\ge n$, a sufficient 
rate condition ensuring that a feasible choice of \(G_n\) exists is
\[
p^{\frac{2}{q-2}+\frac{4}{q}}
(\log p)^{2d+2+3q/(q-2)}
=o(n).
\]
\bigskip

Theorem \ref{th:GA_q}, and thus the discussions of the two examples above, focus on the case where the dependence-adjusted norm $\Theta_{n,q,j}$ exists for some $q>4$. This framework allows the dimension $p$ to grow with $n$ only at a polynomial rate. To have ultra high dimensionality for $p$, we can consider a stronger moment condition under which the dependence-adjusted norm $\Theta_{n,q,j}$ exists for all orders.
We extend Assumption \ref{ass:decay rate} to hold for all orders $q \ge 4$.

\begin{assumption}[Sub-Weibull Dependence Decay]
\label{ass:subweibull}
Assumption \ref{ass:decay rate} holds for all \(q\ge4\). 
Moreover, there exist constants \(C>0\), \(\nu>0\), and \(\rho \in (0,1)\) such that
\[
\delta_{n,s,q,j}
\le
C q^\nu \rho^s
\]
holds for all $s\ge0$, $q\ge4$, and $j \in \{1,...,p\}$ for all $n$.
\end{assumption}

The parameter \(\nu\) governs the growth rate of higher-order moments and characterizes the tail behavior. 
The next theorem states a counterpart of Theorem \ref{th:GA_q} for the case of sub-Weibull tails.

\begin{theorem}[Sub-Weibull Gaussian Approximation]{\label{th:GA_sub}}
Suppose that Assumption \ref{ass:clusters}, \ref{ass:volume}, \ref{ass:variance lb}, and \ref{ass:subweibull} hold. 
If the partition satisfies
\begin{align*}
\mathrm{(i)} \ & \ G_n\gg (\log(np))^{\max\{7,\,4+2\nu\}},
\\
\mathrm{(ii)} \ & \ \max_{1\le g\le G_n}
\frac{|\partial \mathcal C_g|
(\log(np))^{d}
G_n}{n}
\to0,
\qquad\text{and}\\
\mathrm{(iii)} \ & \ \eta_n(G_n)\left(\log(np)\right)^{d}
\ll
n(\log p)^{-2-2\nu}.
\end{align*}
then the Gaussian Approximation in \eqref{eq:GA} holds.
\end{theorem}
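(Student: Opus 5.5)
The plan is to reproduce the blocking argument behind Theorem \ref{th:GA_q}, replacing every finite-moment ingredient by its sub-Weibull (Orlicz-norm) counterpart. First, choose the localization radius $m=m_n=C_m\log(np)$ with $C_m$ large. Let $X^{(m)}_{n,i}=\mathbb E(X_{n,i}\mid\mathcal F_{i,m})$ as in \eqref{eq:localize}. Because Assumption \ref{ass:subweibull} gives $\delta_{n,s,q,j}\le Cq^\nu\rho^s$ for every $q\ge4$, the localization argument (martingale decomposition over shells, Burkholder's inequality, and Remark \ref{rem:local}) yields
\[
\max_j\Big\|n^{-1/2}\sum_i (X_{n,ij}-X^{(m)}_{n,ij})\Big\|_q\lesssim q^{\nu+1/2}(1+m)^d\rho^m
\]
uniformly in $q$. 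The $q^{\nu+1/2}$ factor comes from the Burkholder constant $\sqrt q$. Converting this to a tail bound via Markov's inequality with $q\asymp\log(np)$ and a union bound over $p$ coordinates shows that the localization error is $o_P((\log p)^{-1/2})$. Here we use that $\rho^m$ is polynomially small in $np$. Anti-concentration of the Gaussian max (Nazarov's inequality, with $\sigma_{jj}\ge c$ by Assumption \ref{ass:variance lb}) then makes this error negligible for $\rho_n$.

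Second, within each cluster $\mathcal C_g$, define the buffer
\[
\mathcal B_g=\{i\in\mathcal C_g: d_n(i,V_n\setminus\mathcal C_g)\le 2m\}
\]
and the interior $\mathcal I_g=\mathcal C_g\setminus\mathcal B_g$. Let $Y_g=\sum_{i\in\mathcal I_g}X^{(m)}_{n,i}$. Every node of $\mathcal B_g$ lies within distance $2m$ of $\partial\mathcal C_g$. By Assumption \ref{ass:volume},
\[
|\mathcal B_g|\le C|\partial\mathcal C_g|(1+2m)^d\lesssim|\partial\mathcal C_g|(\log np)^d .
\]
Interior localized variables of different clusters depend on disjoint sets of shocks, so the $Y_g$ are independent. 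Condition (ii) gives $|\mathcal B_g|/l_g\to0$. Hence the variance lower bound of Assumption \ref{ass:variance lb} transfers to $l_g^{-1/2}Y_g$ after bounding the removed part's variance by $|\mathcal B_g|\,\Theta^2$ via the network-dependence covariance bound.

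The aggregate buffer sum $n^{-1/2}\sum_g\sum_{i\in\mathcal B_g}X^{(m)}_{n,i}$ is controlled by the same moment inequality. It has $\|\cdot\|_q\lesssim q^{\nu+1/2}(\eta_n(\log np)^d/n)^{1/2}$, so a union bound at $q\asymp\log p$ makes it $o_P((\log p)^{-1/2})$ exactly when
\[
\eta_n(\log np)^d\,(\log p)^{2\nu+2}\ll n .
\]
This is condition (iii). Variance perturbations between $\Sigma_n$ and $\mathrm{Var}(n^{-1/2}\sum_g Y_g)$ are handled through the Gaussian comparison inequality of \cite{chernozhukov2015comparison}. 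Their max-entry difference is bounded by the localization and buffer errors times $\Theta$, and that bound is $o((\log p)^{-2})$ under the same conditions.

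Third, apply the CCK Gaussian approximation for independent, non-identically distributed vectors (the sub-exponential/sub-Weibull version, e.g.\ Proposition 2.1 of \cite{CCK2017} or \cite{KuchibhotlaMukherjeeBanerjee2021}) to the $G_n$ normalized summands $\tilde Y_g=(n/G_n)^{-1/2}Y_g$. Condition (ii) of the Assumption \ref{ass:clusters} balance makes these summands comparable in scale. Their $\psi_{1/\nu'}$-Orlicz norms are bounded by $C$, via the moment bound $\|\tilde Y_{gj}\|_q\lesssim q^{\nu+1/2}$. The resulting error is of order
\[
\Big(\frac{B_n^2\log^7(pG_n)}{G_n}\Big)^{1/6},
\]
with $B_n$ of order $(\log np)^{\nu}$ after truncation, and this vanishes under condition (i), $G_n\gg(\log np)^{\max\{7,4+2\nu\}}$. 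The main obstacle is exactly this step: tracking how the sub-Weibull constant $q^\nu$ propagates through the cluster sums into the Orlicz envelope $B_n$, so that the exponent $4+2\nu$ emerges. I would first try truncating each $\tilde Y_{gj}$ at level $(\log np)^{\nu+1/2}$, bounding the truncation remainder by the moment bound at $q\asymp\log(np)$, and then invoking the bounded-variable version of the CCK bound.

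Finally, combine the three approximation errors using the anti-concentration inequality to conclude $\rho_n\to0$.
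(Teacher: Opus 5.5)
Your architecture is the paper's, and those parts are fine: localization at $m\asymp\log(np)$, a buffer/interior split that makes the localized interior cluster sums exactly independent, a sub-Weibull tail bound for the aggregate buffer sum that reproduces condition (iii) exactly, a Gaussian comparison step, and anti-concentration. Your $2m$-buffer in place of the paper's $m$-buffer is harmless.

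The gap is in Step 3, the only place condition (i) enters. A bound of the form $\bigl(B_n^2\log^7(pG_n)/G_n\bigr)^{1/6}$ cannot reach condition (i) once $B_n$ grows:
\begin{itemize}
\item With your stated $B_n\asymp(\log np)^{\nu}$ it requires $G_n\gg(\log np)^{7+2\nu}$.
\item If $B_n$ is the truncation level $(\log np)^{\nu+1/2}$, it requires $(\log np)^{8+2\nu}$.
\item Even the sharpest $\psi_1$-envelope of variables truncated there, of order $(\log np)^{\nu-1/2}$, still requires $(\log np)^{6+2\nu}$.
\end{itemize}
The first two are strictly stronger than $G_n\gg(\log np)^{\max\{7,4+2\nu\}}$ for every $\nu>0$, and the third whenever $\nu>1/2$. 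So your claim that this error ``vanishes under condition (i)'' is false. As written, the argument proves only a version of the theorem with a stronger requirement on $G_n$.

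What is missing is a Gaussian approximation bound in which heavy tails enter only additively, through a maximal quantile, rather than inside the $1/6$-power term. This is Theorem 2.2 of \cite{CCK2013}, which the paper uses in the form of Lemma~\ref{lem:cluster-ga}:
\[
G_n^{-1/8}\bigl(M_3^{3/4}\vee M_4^{1/2}\bigr)\bigl(\log(pG_n/\lambda)\bigr)^{7/8}
+G_n^{-1/2}\bigl(\log(pG_n/\lambda)\bigr)^{3/2}u(\lambda)+\lambda .
\]
\begin{itemize}
\item Your own moment bound $\|\tilde Y_{gj}\|_q\lesssim q^{\nu+1/2}$ gives $M_3,M_4=O(1)$, so the first term needs only $G_n\gg(\log np)^7$.
\item The same bound gives a $\psi_\beta$ tail with $\beta=2/(1+2\nu)$. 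A union bound over the $G_np$ entries then yields $u(\lambda)\lesssim(\log(pG_n/\lambda))^{\nu+1/2}$; the Gaussian counterpart contributes only $\sqrt{\log(pG_n/\lambda)}$.
\item Letting $\lambda\to0$ slowly, the second term is of order $G_n^{-1/2}(\log np)^{3/2+\nu+1/2}=G_n^{-1/2}(\log np)^{2+\nu}$. This vanishes exactly when $G_n\gg(\log np)^{4+2\nu}$.
\end{itemize}
This is where the exponent $\max\{7,4+2\nu\}$ comes from, and no truncation step is needed.
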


\bigskip\noindent
{\bf Example~\ref{ex:ring}} (Ring Network){\bf, Revisited.}
Consider again the ring network introduced in Example~\ref{ex:ring}.
Now, suppose the conditions of Theorem~\ref{th:GA_sub} requiring sub-Weibull tails.
For $p\ge n$, all conditions reduce to
\[
(\log p)^{\max\{9+d+2\nu,\;6+d+4\nu\}}
=o(n).
\]
Equivalently, the admitted dimension can be ultra-high-dimensional:
\begin{align*}
\log p=o(n^c),
\qquad\text{where}\qquad
c=
\begin{cases}
1/(9+d+2\nu)
& \text{if }\frac{3}{2}\ge \nu\ge 0,\\[6pt]
1/(6+d+4\nu)
& \text{if }\nu\ge \frac{3}{2}.
\end{cases}
\end{align*}

\bigskip\noindent
{\bf Example~\ref{ex:rgg}} (Regular Random Geometric Graph){\bf, Revisited.}
Consider again the regular random geometric graph in Example~\ref{ex:rgg}. 
Now, suppose the conditions of Theorem~\ref{th:GA_sub} requiring sub-Weibull tails.
For $p\ge n$, all conditions reduce to
\[
(\log p)^{\max\{11+2d+4\nu,\;8+2d+6\nu\}}
=o(n).
\]
Equivalently, the admitted dimension can be ultra-high-dimensional:
\begin{align*}
\log p=o(n^c),
\qquad\text{where}\qquad
c=
\begin{cases}
1/(11+2d+4\nu)
& \text{if }\dfrac{3}{2}\ge \nu\ge 0,\\[6pt]
1/(8+2d+6\nu)
& \text{if }\nu\ge \dfrac{3}{2}.
\end{cases}
\end{align*}

\section{Inference}\label{sec:estimation-cov}
To implement statistical inference based on the Gaussian approximation developed in Section \ref{sec:GA}, we require a consistent estimator of the network covariance matrix \(\Sigma_n\). Unlike the low-dimensional setting, however, consistency under conventional matrix norms is generally insufficient for high-dimensional inference. 
Instead, we need uniform consistency under the matrix infinity norm.

Several covariance estimation procedures have been proposed for network-dependent data, including the HAC estimator \citep{KMS2021}, cluster-robust methods \citep{Leung2023}, and bootstrap procedures \citep{kojevnikov2021bootstrap}. 
We focus on the network HAC estimation approach, and establish its convergence rate under the matrix infinity norm.
  
\subsection{Network HAC Estimator}
Recall the network covariance 
\[
\Sigma_n = \mathrm{Var}(T_X/\sqrt{n})=\frac1n\sum_{i,i'\in V_n}\mathrm{Cov}(X_{n,i},X_{n,i'})\in \mathbb{R}^{p\times p}.
\]
For simplicity of exposition, we first assume
\(\mathbb E(X_{n,i})=0\) for all \(i\in V_n\).
Under Assumption \ref{ass:decay rate}, covariance contributions from pairs of nodes decay with network distance. Hence, nearby pairs carry the dominant contribution to \(\Sigma_n\), while distant pairs contribute primarily to a small tail component. This motivates a network HAC estimator that keeps local covariance terms, while down-weighting and eventually truncating covariance terms at larger network distances.

For \(s\ge0\), define the distance-specific covariance matrix
\[
\Gamma_n(s)=
\frac{1}{n}
\sum_{i\in V_n}
\sum_{i'\in \mathcal{N}_n^\partial(i,s)}
\mathbb{E}[X_{n,i}X_{n,i'}^\top] ,
\]
By construction,\[
\Sigma_n=\sum_{s\ge0}
\Gamma_n(s).
\]
This representation is analogous to the autocovariance decomposition in time-series analysis, where the graph distance \(s\) plays the role of a lag.

To implement the down-weighting of distant covariance terms, we use a kernel function $w$ satisfying the following conditions.
\begin{assumption}[Kernel]{\label{ass:kernel}}
A kernel function \(w:\mathbb R^+\to[-1,1]\) satisfies \(w(0)=1\) and \(w(x)=0\) for \(x>1\). Moreover, there exist
constants \(C_w<\infty\) and \(\kappa>0\) such that
\[
|w(x)-1|\le C_w x^\kappa,
\qquad 0\le x\le1.
\]
\end{assumption}

Let $b_n$ denote the bandwidth and write 
\[
w_n(s)
=
w(s/b_n).
\]
With this notation, the network HAC estimator can be defined by
\begin{align}
\label{eq:hac}
\widehat\Sigma_n(b_n)
=&
\sum_{s\ge0}
w_n(s)\widehat\Gamma_n(s),
\qquad\text{where}
\\
\widehat\Gamma_n(s)
=&
\frac1n
\sum_{i\in V_n}
\sum_{i'\in\mathcal N_n^\partial(i,s)}
X_{n,i}X_{n,i'}^\top.
\notag
\end{align}
The bandwidth \(b_n\) controls the range of network distances incorporated into the estimator, while the kernel function $w$ controls how covariance contributions are attenuated as the distance increases.

The next theorem characterizes the deterministic approximation error induced by kernel weighting and bandwidth truncation.
\begin{theorem}{\label{lem:bias}} 
If Assumptions \ref{ass:decay rate}, \ref{ass:volume}, \ref{ass:kernel} hold, then
\[
\left|
\mathbb{E}\widehat\Sigma_n(b_n)-\Sigma_n
\right|_\infty
=
O\left(b_n^{-\kappa}+b_n^{2d+1}\rho^{b_n}\right).
\]
In particular, if \(b_n\to\infty\), then
\[
\left|
\mathbb{E}\widehat\Sigma_n(b_n)-\Sigma_n
\right|_\infty=o(1).
\]
\end{theorem}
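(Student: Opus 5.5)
Since $\mathbb E\widehat\Gamma_n(s)=\Gamma_n(s)$ by linearity, the bias decomposes as
\[
\mathbb{E}\widehat\Sigma_n(b_n)-\Sigma_n
=
\sum_{0\le s\le b_n}\bigl(w_n(s)-1\bigr)\Gamma_n(s)
-\sum_{s>b_n}\Gamma_n(s),
\]
using $w_n(s)=0$ for $s>b_n$ (Assumption \ref{ass:kernel}). The plan is to bound $|\Gamma_n(s)|_\infty$ entrywise by a term of order $(1+s)^{d}\rho^{s/2}$ (or a comparable geometric rate). Then the first sum is controlled by the kernel smoothness and the second by the tail of the geometric series.

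\textbf{Covariance decay.} Fix coordinates $j,k$ and nodes $i,i'$ with $d_n(i,i')=s$. Set $m=\lfloor (s-1)/2\rfloor$. The localized approximations $X^{(m)}_{n,ij}$ and $X^{(m)}_{n,i'k}$ from \eqref{eq:localize} depend on disjoint sets of shocks, so they are independent with mean zero. This gives
\[
|\mathrm{Cov}(X_{n,ij},X_{n,i'k})|\le \|X_{n,ij}-X^{(m)}_{n,ij}\|_2\|X_{n,i'k}\|_2+\|X^{(m)}_{n,ij}\|_2\|X_{n,i'k}-X^{(m)}_{n,i'k}\|_2 .
\]
The moments $\|X_{n,ij}\|_2$ are bounded by $\Theta_{n,2,j}=O(1)$ via the standard martingale-difference (projection) argument over shocks, as in Remark \ref{rem:local}. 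The localization error is bounded by $\Psi_{n,2}(m)\lesssim (1+m)^d\rho^{m}$, which is the appendix statement referenced after \eqref{eq:localize}. That argument enumerates shocks outside the $m$-ball, writes the error as a sum of projection differences, and bounds each by $\delta_{n,s',q,j}$. Hence each covariance is at most $C(1+s)^d\rho^{s/2}$ uniformly in $j,k$.

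Summing over $i\in V_n$ and $i'\in\mathcal N^\partial_n(i,s)$, and using $N^\partial_n(i,s)\le N_n(s)\le C_V(1+s)^d$ (Assumption \ref{ass:volume}), gives
\[
|\Gamma_n(s)|_\infty\lesssim (1+s)^{2d}\tilde\rho^{\,s},\qquad \tilde\rho=\rho^{1/2}.
\]
The paper's stated rate $b_n^{2d+1}\rho^{b_n}$ suggests the authors absorb the square root by renaming $\rho$, since Assumption \ref{ass:decay rate} only asserts that some $\rho<1$ exists. I would do the same, noting that $\rho$ in the conclusion is a generic constant in $(0,1)$.

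\textbf{Assembling the bound.} For the truncation tail, $\sum_{s>b_n}(1+s)^{2d}\rho^s\lesssim b_n^{2d}\rho^{b_n}$, which is within $b_n^{2d+1}\rho^{b_n}$. For the kernel part, $|w_n(s)-1|\le C_w (s/b_n)^\kappa$ for $s\le b_n$. Therefore
\[
\sum_{s\le b_n}|w_n(s)-1|\,|\Gamma_n(s)|_\infty\le C_w b_n^{-\kappa}\sum_{s\ge0}s^\kappa(1+s)^{2d}\rho^s=O(b_n^{-\kappa}).
\]
Both terms vanish as $b_n\to\infty$, which gives the second claim.

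\textbf{Main obstacle.} The delicate step is the covariance decay, namely obtaining the localization bound uniformly over coordinates with only $O(1)$ constants. This requires the projection/martingale decomposition of $X_{n,ij}-X^{(m)}_{n,ij}$ over shocks ordered by distance, with Burkholder's inequality replaced by the triangle inequality in $L^2$. I would first try to use the appendix localization lemma directly. If its form does not fit, I would instead bound $\mathrm{Cov}$ via the coupling $\mathrm{Cov}(X_{n,ij},X_{n,i'k})=\sum_u \mathbb E[P_uX_{n,ij}\,P_uX_{n,i'k}]$, where $P_u$ is the martingale increment for shock $u$. Each term is then bounded by $\delta_{n,d(i,u),2,j}\delta_{n,d(i',u),2,k}\le C\rho^{d(i,u)+d(i',u)}\le C\rho^{s}$ by the triangle inequality. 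Summing over $u$ with the volume growth bound gives $(1+s)^{d}\rho^{s}$-type decay without any square-root loss.
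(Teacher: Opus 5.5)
Your localization route does not deliver the rate as stated. Splitting at $m\approx s/2$ and applying Cauchy--Schwarz pairs one small factor, $\|X_{n,ij}-X^{(m)}_{n,ij}\|_2\lesssim(1+m)^d\rho^m$, with an $O(1)$ factor. That yields only $|\Gamma_n(s)|_\infty\lesssim(1+s)^{2d}\rho^{s/2}$. Asymmetric radii $m_1+m_2<s$ cannot do better, since the bound is then of order $\rho^{m_1}+\rho^{m_2}$.

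Renaming $\rho$ is not what the paper does, and it is not harmless. The $\rho$ in the conclusion is the constant of Assumption~\ref{ass:decay rate}, and $b_n^{2d}\rho^{b_n/2}$ is not $O(b_n^{2d+1}\rho^{b_n})$, because the ratio $\rho^{-b_n/2}/b_n\to\infty$. So this route proves the theorem with $\rho^{1/2}$ in place of $\rho$, which is strictly weaker, although it does give the $o(1)$ claim. The exponent $2d+1$ in the paper does not come from absorbing a square root. It comes from $N_n^\partial(i,s)\lesssim(1+s)^d$ times $\sum_{r\le s}(1+r)^d\lesssim(1+s)^{d+1}$, with the full factor $\rho^s$ kept.

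The fix is your fallback, which is exactly the paper's proof. Lemma~\ref{lem:covariance}(i) is the replace-one-shock covariance inequality: the projection identity $\mathrm{Cov}(f,g)=\sum_u\mathbb E[P_uf\,P_ug]$, with $P_u$ the martingale increment for shock $u$ and $\|P_uf\|_2\le\|f-f^{*(u)}\|_2$. It gives $|\mathrm{Cov}(X_{n,ia},X_{n,i'b})|\le C\sum_u\delta_{n,i,u,2,a}\delta_{n,i',u,2,b}$, so every summand is a product of two small factors. Lemma~\ref{lem:control-Afd} then sums this over shells.

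Be careful in that sum. Bounding each term by $C\rho^s$ and summing over $u\in V_n$ gives only $n\rho^s$. You must keep $\rho^{d_n(i,u)+d_n(i',u)}$ and split by $r=d_n(i,u)$:
\begin{itemize}
\item The at most $N_n(s)\le C_V(1+s)^d$ nodes with $r\le s$ contribute at most $\rho^s$ each, since $d_n(i',u)\ge s-r$.
\item For $r>s$ you have $d_n(i',u)\ge r-s$. So the shell at distance $r$ contributes at most $C_V(1+r)^d\rho^{2r-s}$, and these terms sum to $O((1+s)^d\rho^s)$.
\end{itemize}
This gives $|\Gamma_n(s)|_\infty\lesssim(1+s)^{2d}\rho^s$, marginally sharper than the paper's $(1+s)^{2d+1}\rho^s$. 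From there your kernel and tail assembly goes through verbatim and coincides with the paper's.
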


To establish consistency of the HAC estimator, it remains to control its stochastic fluctuations around its expectation. The following theorem establishes nonasymptotic concentration inequalities for the HAC estimator, both entrywise and uniformly over all matrix elements, under finite-moment conditions.

\begin{theorem}\label{lem:hac-doob-variance_q}
Suppose that Assumptions \ref{ass:decay rate}, \ref{ass:volume}, and \ref{ass:kernel} hold.
For each $(a,b)\in [p]^2$, there exists a constant $c>0$ such that for all $x>0$,

\begin{equation}\label{eq:hac-entry-tail}
\mathbb P\!\left(
n\bigl|\,\widehat\Sigma_{n,ab}(b_n)-\mathbb E\widehat\Sigma_{n,ab}(b_n)\,\bigr|\ge x
\right)
\lesssim
\frac{
n^{q/4} (N_n(b_n))^{q/2} \Psi_{n,q}^q(0)
}{x^{q/2}}.
\end{equation}
Consequently,

\begin{equation}\label{eq:hac-max-tail}
\mathbb P\!\left(
n\bigl|\widehat\Sigma_n(b_n)-\mathbb E\widehat\Sigma_n(b_n)\bigr|_{\infty}
\ge x
\right)
\lesssim\,\frac{p^2\,n^{q/4}(N_n(b_n))^{q/2}\Psi_{n,q}^q(0)}{x^{q/2}}.
\end{equation}
\end{theorem}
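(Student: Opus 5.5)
We write $n\widehat\Sigma_{n,ab}(b_n)$ as a sum of products and control its centred version by a martingale decomposition over the shocks. Burkholder's inequality is applied at moment order $q/2$, followed by Markov's inequality. Concretely,
\[
S_{ab}:=n\widehat\Sigma_{n,ab}(b_n)=\sum_{i\in V_n} Y_i,
\qquad
Y_i:=X_{n,ia}\sum_{i':\,d_n(i,i')\le b_n} w_n(d_n(i,i'))\,X_{n,i'b},
\]
since $w_n(s)=0$ for $s>b_n$. Each $Y_i$ is a measurable function of the shock field $\varepsilon_n$.

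First, enumerate the nodes $u_1,\dots,u_n$ and let $\mathcal F_k=\sigma(\varepsilon_{n,u_1},\dots,\varepsilon_{n,u_k})$. This gives
\[
S_{ab}-\mathbb E S_{ab}=\sum_{k=1}^n D_k,
\qquad
D_k=\mathbb E(S_{ab}\mid\mathcal F_k)-\mathbb E(S_{ab}\mid\mathcal F_{k-1}).
\]
This is a martingale difference sequence. The standard coupling identity gives $\|D_k\|_{q/2}\le \|S_{ab}-S_{ab}^{*(u_k)}\|_{q/2}$, where $S_{ab}^{*(u_k)}$ replaces $\varepsilon_{n,u_k}$ by an independent copy. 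Since $q>4$, we have $q/2>2$. Burkholder's inequality, followed by Minkowski's inequality applied to $\|\sum_k D_k^2\|_{q/4}$, yields
\[
\|S_{ab}-\mathbb E S_{ab}\|_{q/2}^2\lesssim \sum_{k=1}^n \|S_{ab}-S_{ab}^{*(u_k)}\|_{q/2}^2 .
\]

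Second, bound the perturbation of a single product. For a node $u$,
\[
Y_i-Y_i^{*(u)}=(X_{n,ia}-X_{n,ia}^{*(u)})\,B_i+X^{*(u)}_{n,ia}(B_i-B_i^{*(u)}),
\]
where $B_i=\sum_{i'\in\mathcal N_n(i,b_n)}w_n(d_n(i,i'))X_{n,i'b}$. Hölder's inequality with exponents $(q,q)$ gives
\[
\|Y_i-Y_i^{*(u)}\|_{q/2}\le \delta_{n,i,u,q,a}\|B_i\|_q+\|X_{n,ia}\|_q\sum_{i'\in\mathcal N_n(i,b_n)}\delta_{n,i',u,q,b}.
\]
We use $\|X_{n,ia}\|_q\lesssim\Theta_{n,q,a}\le\Psi_{n,q}(0)$, which follows from the martingale representation of $X_{n,ia}$ itself. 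Combined with the crude bound $\|B_i\|_q\le N_n(b_n)\Psi_{n,q}(0)$, this would give a suboptimal power of $N_n(b_n)$. The bound should instead be organized so that summing over $i$ first uses the decay. Summing over $i$, $\sum_i\delta_{n,i,u,q,a}\le \sup_u\sum_s N^\partial_n(u,s)\delta_{n,s,q,a}\le\Psi_{n,q}(0)$, using the symmetry of $d_n$. Likewise $\sum_i\sum_{i'\in\mathcal N_n(i,b_n)}\delta_{n,i',u,q,b}\le N_n(b_n)\Psi_{n,q}(0)$, since each $i'$ is counted at most $N_n(b_n)$ times. For $\|B_i\|_q$ we use a Burkholder bound on the sum of $N_n(b_n)$ terms instead of the triangle inequality, giving $\|B_i\|_q\lesssim N_n(b_n)^{1/2}\Psi_{n,q}(0)$ up to the dependence norm. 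Together these give $\|S_{ab}-S^{*(u)}_{ab}\|_{q/2}\lesssim N_n(b_n)\Psi^2_{n,q}(0)$, or the sharper $N_n(b_n)^{1/2}$-type bound.

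Third, combine the pieces. Inserting the perturbation bound into the Burkholder step gives $\|S_{ab}-\mathbb ES_{ab}\|_{q/2}\lesssim n^{1/2}N_n(b_n)\Psi_{n,q}^2(0)$. Markov's inequality at order $q/2$ then gives
\[
\mathbb P(|S_{ab}-\mathbb ES_{ab}|\ge x)\lesssim n^{q/4}N_n(b_n)^{q/2}\Psi^{q}_{n,q}(0)/x^{q/2}.
\]
This matches \eqref{eq:hac-entry-tail}, after reconciling powers, possibly by absorbing one $\Psi_{n,q}(0)$ factor into the constant via Remark \ref{rem:local}. The maximal bound \eqref{eq:hac-max-tail} then follows from a union bound over the $p^2$ pairs $(a,b)$, because all constants are uniform in $(a,b)$.

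The main obstacle is the second step: obtaining exactly the factor $N_n(b_n)^{q/2}$, i.e.\ linear in $N_n(b_n)$ at the $L^{q/2}$ level, while keeping $n^{1/2}$. The first attempt is to exchange the sums over $i$ and over the shock $u$ carefully, so that the kernel-window count $N_n(b_n)$ appears only once and the decay sum appears through $\Psi_{n,q}(0)$.
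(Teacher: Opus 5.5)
Your proposal is correct and is essentially the paper's own proof. Both use a Doob martingale decomposition over the shocks, the coupling bound $\|S_{ab}-S_{ab}^{*(u)}\|_{q/2}\lesssim N_n(b_n)\Psi_{n,q}^2(0)$ obtained from Hölder with exponents $(q,q)$ (the paper's quadratic-form coupling lemma with $r=q/2$), Burkholder at order $q/2$, Markov's inequality, and a union bound over the $p^2$ pairs.

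The ``obstacle'' you flag is not real. The crude bound $\|B_i\|_q\le N_n(b_n)\Psi_{n,q}(0)$, combined with summing $\delta_{n,i,u,q,a}$ over $i$ first, already gives exactly the linear factor $N_n(b_n)$; this is what the paper uses. Markov at order $q/2$ then turns $(\Psi_{n,q}^2(0))^{q/2}$ into $\Psi_{n,q}^q(0)$, so there are no powers to reconcile. Your aside about an $N_n(b_n)^{1/2}$-type bound is not delivered by your organization, since the second Hölder term stays linear in $N_n(b_n)$, but it is also not needed.
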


Under the stronger sub-Weibull moment condition, the concentration inequality in Theorem \ref{lem:hac-doob-variance_q} can be strengthened to an exponential tail bound, as formally stated below.
For convenience, we define the dependence-adjusted sub-Weibull norm
\[
\Theta_{n,\psi_\nu,j}
:=
\sup_{q\ge2}
\frac{\Theta_{n,q,j}}{q^\nu}
\qquad\text{and}\qquad
\Phi_{n,\psi_\nu}
:=
\max_{1\le j\le p}
\Theta_{n,\psi_\nu,j}.
\]

\begin{theorem}\label{lem:hac-doob-variance_subwei}
Suppose that Assumptions \ref{ass:volume}, \ref{ass:subweibull}, and \ref{ass:kernel} hold.
For each $(a,b)\in [p]^2$, there exists a constant $c>0$ such that for all $x>0$,
\begin{equation}\label{eq:hac-entry-tail}
\mathbb P\!\left(
n\bigl|\,\widehat\Sigma_{n,ab}(b_n)-\mathbb E\widehat\Sigma_{n,ab}(b_n)\,\bigr|\ge x
\right)
\lesssim
 \exp\!\left(
- c \,
\frac{x^{\gamma}}
{ \big(\sqrt{n}\,N_n(b_n)\,\Phi_{n,\psi_\nu}^2\big)^{\gamma}}
\right).
\end{equation}
Consequently,
\begin{equation}\label{eq:hac-max-tail}
\mathbb P\!\left(
n\bigl|\widehat\Sigma_n(b_n)-\mathbb E\widehat\Sigma_n(b_n)\bigr|_{\infty}
\ge x
\right)
\lesssim\,
p^2 \exp\!\left(
- c \,
\frac{x^{\gamma}}
{ \big(\sqrt{n}\,N_n(b_n)\,\Phi_{n,\psi_\nu}^2\big)^{\gamma}}
\right),
\end{equation}
where $\gamma=1/(1+2\nu)$ for $\nu$ given in Assumption \ref{ass:subweibull}.
\end{theorem}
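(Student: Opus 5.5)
We plan to follow the moment-based route used for Theorem \ref{lem:hac-doob-variance_q}, but to track the growth of the constants in $q$ explicitly. At the end we convert the resulting moment bound into a sub-Weibull tail by choosing $q$ optimally. Fix $(a,b)$ and write
\[
S_{ab} := n\bigl(\widehat\Sigma_{n,ab}(b_n)-\mathbb E\widehat\Sigma_{n,ab}(b_n)\bigr)=\sum_{i\in V_n}\bigl(Y_i-\mathbb E Y_i\bigr),
\qquad
Y_i := X_{n,ia}\sum_{s\le b_n} w_n(s)\sum_{i'\in\mathcal N_n^\partial(i,s)} X_{n,i'b}.
\]
Each $Y_i$ is a measurable function of the shock field, so it has its own functional dependence measure.

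\textbf{Step 1: dependence of the summands.} The first step is to bound the functional dependence of $Y_i$ with respect to a shock $\varepsilon_{n,u}$. We write $Y_i-Y_i^{*(u)}$ as a telescoping product difference and apply H\"older's inequality with exponents $(2q,2q)$:
\[
\|Y_i-Y_i^{*(u)}\|_q \le \sum_{i'\in\mathcal N_n(i,b_n)} \Bigl(\delta_{n,i,u,2q,a}\|X_{n,i'b}\|_{2q}+\|X_{n,ia}\|_{2q}\,\delta_{n,i',u,2q,b}\Bigr).
\]
Summing over $u$ weighted by shell sizes (in the pattern of $\Theta$), and using $\|X_{n,ij}\|_{2q}\le\Theta_{n,2q,j}$, we obtain a dependence-adjusted norm for $Y$ of order $N_n(b_n)\Theta_{n,2q,a}\Theta_{n,2q,b}\lesssim N_n(b_n)(2q)^{2\nu}\Phi_{n,\psi_\nu}^2$. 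The exponent $2\nu$ arises because $Y$ is a product of two factors, each of sub-Weibull order $\nu$.

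\textbf{Step 2: a Burkholder bound with explicit constant.} Next we bound $\|S_{ab}\|_q$ by a martingale decomposition. We order the nodes, set $\mathcal F_k=\sigma(\varepsilon_{n,1},\dots,\varepsilon_{n,k})$, and write $S_{ab}=\sum_k D_k$ with $D_k=\mathbb E[S_{ab}\mid\mathcal F_k]-\mathbb E[S_{ab}\mid\mathcal F_{k-1}]$. Each difference satisfies
\[
\|D_k\|_q\le\sum_i\|Y_i-Y_i^{*(k)}\|_q .
\]
Burkholder's inequality with constant $C\sqrt q$, followed by Minkowski, then gives
\[
\|S_{ab}\|_q\le C\sqrt q\Bigl(\sum_k\|D_k\|_q^2\Bigr)^{1/2}.
\]
The swap of sums over $i$ and $k$ uses the uniform-in-$u$ version of the Step 1 bound, which yields
\[
\|S_{ab}\|_q\lesssim \sqrt{q}\,q^{2\nu}\sqrt n\,N_n(b_n)\Phi_{n,\psi_\nu}^2\lesssim q^{1/2+2\nu}\,K_n,
\qquad
K_n:=\sqrt n\,N_n(b_n)\Phi_{n,\psi_\nu}^2 .
\]
In Theorem \ref{lem:hac-doob-variance_q} the $n^{q/4}$ and $N_n^{q/2}$ factors came from a cruder route, so here we aim for this sharper $\sqrt n\,N_n$ bound. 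We expect the main obstacle to be keeping the $q$-dependence exactly at order $q^{1/2+2\nu}$, or at least $q^{1/2+\nu'}$ with $\gamma=1/(1+2\nu)$ recoverable. Concretely, we must make sure the Burkholder constant is $O(\sqrt q)$ and that the H\"older step costs only $(2q)^{2\nu}\asymp q^{2\nu}$. If the exact exponent fails, we would loosen it to $q^{1+2\nu}$. Since $\gamma=1/(1+2\nu)$ is exactly the reciprocal of $1+2\nu$, this looser exponent is in fact the one the stated tail matches, so this moment growth suffices.

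\textbf{Step 3: from moments to the tail.} From a moment bound $\|S_{ab}\|_q\le C q^{1/\gamma}K_n$ valid for all $q\ge 2$, Markov's inequality gives
\[
\mathbb P(|S_{ab}|\ge x)\le\bigl(Cq^{1/\gamma}K_n/x\bigr)^q .
\]
We choose $q=(x/(eCK_n))^{\gamma}$ when this exceeds $2$, which yields $\exp(-c\,x^\gamma/K_n^\gamma)$. For smaller $x$ the bound is trivial after enlarging the implicit constant. Finally, a union bound over the $p^2$ pairs $(a,b)$ gives the maximal inequality \eqref{eq:hac-max-tail}.
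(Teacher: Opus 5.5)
Your proposal is correct and follows essentially the same route as the paper. The shared steps are: a Doob martingale decomposition over the shocks; the H\"older $(2q,2q)$ coupling bound $\sup_u\|S_{ab}-S_{ab}^{*(u)}\|_q\lesssim N_n(b_n)\Psi_{n,2q}^2(0)$, which the paper isolates as Lemma~\ref{lem:hac-coupling} rather than deriving summand by summand; Burkholder's inequality; $\Psi_{n,2q}(0)\le(2q)^\nu\Phi_{n,\psi_\nu}$; loosening the moment growth to $q^{1+2\nu}=q^{1/\gamma}$, which, as you note, is all that is needed and even survives a Burkholder constant of order $q$; optimizing $q$ in Markov's inequality; and a union bound over the $p^2$ entries.

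One side remark needs correcting. The factor $n^{q/4}N_n(b_n)^{q/2}\Psi_{n,q}^q(0)/x^{q/2}$ in Theorem~\ref{lem:hac-doob-variance_q} does not come from a cruder route. It equals $\bigl(\sqrt n\,N_n(b_n)\Psi_{n,q}^2(0)/x\bigr)^{q/2}$, which is the very same $\sqrt n\,N_n(b_n)$ moment bound at order $q/2$ combined with Markov's inequality.
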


Combining the bias characterization with the concentration inequalities yields uniform consistency rates for the HAC estimator under the matrix infinity norm, as formally stated in the following corollary.

\begin{corollary}[Consistency Rate of the HAC Estimator]{\label{cor:sigmadis}}${}$\\
      (i) Under the conditions of Theorems \ref{lem:bias} and \ref{lem:hac-doob-variance_q}, $|\widehat\Sigma_n(b_n)-\Sigma_n|_\infty=O_p(R_n)$ holds, where
    \[R_n=p^{4/q}n^{-1/2}N_n(b_n)\Psi^2_{n,q}(0)+b_n^{-\kappa}+b_n^{2d+1}\rho^{b_n}.\]
    (ii) Under the conditions of Theorems \ref{lem:bias} and \ref{lem:hac-doob-variance_subwei}, 
    $|\widehat\Sigma_n(b_n)-\Sigma_n|_\infty=O_p(R^*_n)$ holds, where
    \[R^*_n=n^{-1/2}\,N_n(b_n)\,\Phi_{n,\psi_\nu}^2(\log{p})^{1/\gamma}+b_n^{-\kappa}+b_n^{2d+1}\rho^{b_n}.\]
\end{corollary}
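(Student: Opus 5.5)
The corollary follows from the triangle inequality
\[
|\widehat\Sigma_n(b_n)-\Sigma_n|_\infty\le |\widehat\Sigma_n(b_n)-\mathbb E\widehat\Sigma_n(b_n)|_\infty+|\mathbb E\widehat\Sigma_n(b_n)-\Sigma_n|_\infty .
\]
The second (deterministic) term is $O(b_n^{-\kappa}+b_n^{2d+1}\rho^{b_n})$ by Theorem \ref{lem:bias}. This supplies the bias part of both $R_n$ and $R_n^*$, and it is common to (i) and (ii). It therefore remains to show that the stochastic term is $O_p$ of the first summand of $R_n$ (respectively $R_n^*$). We do this by inverting the uniform tail bounds \eqref{eq:hac-max-tail} of Theorems \ref{lem:hac-doob-variance_q} and \ref{lem:hac-doob-variance_subwei}.

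For (i), fix $M>0$ and apply the finite-moment maximal bound with
\[
x=M\,n\cdot p^{4/q}n^{-1/2}N_n(b_n)\Psi_{n,q}^2(0)=M\,p^{4/q}n^{1/2}N_n(b_n)\Psi^2_{n,q}(0).
\]
Then $x^{q/2}=M^{q/2}p^2n^{q/4}N_n(b_n)^{q/2}\Psi_{n,q}^q(0)$, which cancels the numerator exactly. Hence
\[
\mathbb P\bigl(|\widehat\Sigma_n-\mathbb E\widehat\Sigma_n|_\infty\ge M p^{4/q}n^{-1/2}N_n(b_n)\Psi^2_{n,q}(0)\bigr)\lesssim M^{-q/2}.
\]
The implicit constant does not depend on $n$, because the $\lesssim$ in Theorem \ref{lem:hac-doob-variance_q} is uniform in $n$. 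The right-hand side can be made arbitrarily small by choosing $M$ large, which is the definition of $O_p$.

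For (ii), take
\[
x=M\sqrt n\,N_n(b_n)\Phi^2_{n,\psi_\nu}(\log p)^{1/\gamma}.
\]
The exponent in \eqref{eq:hac-max-tail} becomes $-cM^\gamma\log p$, so the bound is $\lesssim p^{2-cM^\gamma}$. For $M$ with $cM^\gamma>3$ this is at most $p^{-1}$ when $p\to\infty$. If $p$ stays bounded, we use instead $p^2e^{-cM^\gamma\log p}$ together with $(\log p)^{1/\gamma}\ge(\log 2)^{1/\gamma}$, replacing $\log p$ by $\max\{\log p,1\}$ if necessary. Either way the probability tends to zero uniformly as $M\to\infty$. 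Dividing by $n$ yields the first term of $R_n^*$.

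The only delicate point is this bookkeeping. The implicit constants in Theorems \ref{lem:hac-doob-variance_q} and \ref{lem:hac-doob-variance_subwei} must be uniform in $(a,b)$ and in $n$, so that the union bound over $p^2$ entries already built into \eqref{eq:hac-max-tail} is legitimate. The case of bounded $p$ in (ii) must be handled as above. Once that is checked, combining the two pieces gives $|\widehat\Sigma_n(b_n)-\Sigma_n|_\infty=O_p(R_n)$ and $O_p(R_n^*)$ respectively.
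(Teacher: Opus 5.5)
Your proposal is correct and follows the route the paper intends. The paper gives no separate proof and presents the corollary as the combination of the bias bound of Theorem \ref{lem:bias} with the uniform tail bounds of Theorems \ref{lem:hac-doob-variance_q} and \ref{lem:hac-doob-variance_subwei}, which is exactly your triangle inequality followed by inverting the maximal tail bounds with your choices of $x$.

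A small simplification: in (ii) the case split on $p$ is unnecessary, since for $p\ge 2$ and $cM^\gamma\ge 2$ one has $p^{2-cM^\gamma}\le 2^{2-cM^\gamma}$ uniformly in $n$. Your $\max\{\log p,1\}$ convention is needed only in the degenerate case $p=1$, where $R_n^*$ as written would lose the $n^{-1/2}$ stochastic term.
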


Thus far, we have focused on the HAC estimator \(\widehat\Sigma_n(b_n)\), which assumes that $\mathbb E(X_{n,i})=0$ for all $i\in V_n$.
Now, consider the more general case where $\mathbb E (X_{n,i}) = \mu_0$ (not necessarily 0) for all $i \in V_n$, where $\mu_0$ is unknown to a researcher. 
Let the sample mean be denoted by $\bar X_n= T_X/n$.
The feasible HAC estimator in this general case is
\begin{align*}
\widetilde\Sigma_n(b_n)=&\sum_{s\ge 0}w_n(s)\widetilde\Gamma_n(s),
\qquad\text{where}
\\
\widetilde\Gamma_n(s)
=&
\frac{1}{n}
\sum_{i\in V_n}
\sum_{i'\in \mathcal{N}_n^\partial(i,s)}
(X_{n,i}-\bar X_n)(X_{n,i'}-\bar X_n)^\top.
\end{align*}

Under the conditions of Theorem \ref{lem:hac-doob-variance_q}, we have 
\[|\widetilde \Sigma_n(b_n)-\widehat \Sigma_n(b_n)|_\infty= O_p\left(
N_n(b_n)\Psi^2_{n,q}(0) p^{2/q}n^{-1}
\right).\]
Likewise, under the conditions of Theorem \ref{lem:hac-doob-variance_subwei}, we have  
\[
|\widetilde\Sigma_n(b_n)-\widehat\Sigma_n(b_n)|_\infty
=
O_p\left(
N_n(b_n)\Phi_{n,\psi_\nu}^2 (\log p)^{1/\gamma}n^{-1}
\right).
\]
Therefore, the residual errors $R_n$ or $R^*_n$ in Corollary \ref{cor:sigmadis} are of the same order between $|\widetilde\Sigma_n(b_n)-\Sigma_n|_\infty$ and $|\widehat\Sigma_n(b_n)-\Sigma_n|_\infty$.
We summarize the implications of these results formally as the following corollary.

\begin{corollary}[Consistency Rate of the Feasible HAC Estimator]\label{cor:sigmafea}${}$\\
     (i) Under the conditions of Theorems \ref{lem:bias} and \ref{lem:hac-doob-variance_q}, $|\widetilde\Sigma_n(b_n)-\Sigma_n|_\infty=O_p(R_n)$ holds, where
    \[R_n=p^{4/q}n^{-1/2}N_n(b_n)\Psi^2_{n,q}(0)+b_n^{-\kappa}+b_n^{2d+1}\rho^{b_n}.\]
    (ii) Under the conditions of Theorems \ref{lem:bias} and \ref{lem:hac-doob-variance_subwei}, 
    $|\widetilde\Sigma_n(b_n)-\Sigma_n|_\infty=O_p(R^*_n)$ holds, where
    \[R^*_n=n^{-1/2}\,N_n(b_n)\,\Phi_{n,\psi_\nu}^2(\log{p})^{1/\gamma}+b_n^{-\kappa}+b_n^{2d+1}\rho^{b_n}.\]
\end{corollary}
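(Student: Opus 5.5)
The plan is to write $\widetilde\Sigma_n(b_n)-\Sigma_n = \bigl(\widetilde\Sigma_n(b_n)-\widehat\Sigma_n(b_n)\bigr)+\bigl(\widehat\Sigma_n(b_n)-\Sigma_n\bigr)$ and apply the triangle inequality for $|\cdot|_\infty$. Here $\widehat\Sigma_n(b_n)$ is built from the centered variables $X_{n,i}-\mu_0$. Since $\widetilde\Sigma_n$ is invariant to replacing $X_{n,i}$ by $X_{n,i}-\mu_0$ (the sample mean shifts accordingly), we may assume $\mu_0=0$ without loss of generality. The second term is $O_p(R_n)$ or $O_p(R_n^*)$ directly by Corollary \ref{cor:sigmadis}(i) or (ii). It therefore suffices to show that the first term is $O_p\bigl(N_n(b_n)\Psi^2_{n,q}(0)p^{2/q}n^{-1}\bigr)$, respectively $O_p\bigl(N_n(b_n)\Phi_{n,\psi_\nu}^2(\log p)^{1/\gamma}n^{-1}\bigr)$, and that these are of smaller order than the leading term of $R_n$, resp.\ $R_n^*$.

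For the difference, expand the product. For each $s$,
\[
\widetilde\Gamma_n(s)-\widehat\Gamma_n(s)=-\frac1n\sum_{i}\sum_{i'\in\mathcal N_n^\partial(i,s)}\bigl(X_{n,i}\bar X_n^\top+\bar X_nX_{n,i'}^\top\bigr)+\frac{1}{n}\sum_i N_n^\partial(i,s)\,\bar X_n\bar X_n^\top .
\]
Summing with weights $|w_n(s)|\le1$ and $w_n(s)=0$ for $s>b_n$, the $(a,b)$ entry is bounded by terms of the form $|\bar X_{n,b}|\cdot n^{-1}\bigl|\sum_i c_i X_{n,ia}\bigr|$ plus $N_n(b_n)|\bar X_n|_\infty^2$. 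Here $c_i=\sum_{s\le b_n}w_n(s)N^\partial_n(i,s)$, or the analogous weighted neighbourhood counts after exchanging the order of summation over $i'$, and all satisfy $|c_i|\le N_n(b_n)$. The key probabilistic input is a maximal bound for weighted network sums: for deterministic weights with $|c_i|\le N_n(b_n)$, $\|\sum_i c_iX_{n,ia}\|_q\lesssim N_n(b_n)\sqrt n\,\Theta_{n,q,a}$. This follows from the martingale-difference decomposition over the shocks $\varepsilon_{n,u}$ (ordering the nodes), Burkholder's inequality, and the coupling bound $\|P_u X_{n,ia}\|_q\le\delta_{n,i,u,q,a}$, summed against shell sizes exactly as in the proof of Theorem \ref{lem:hac-doob-variance_q}. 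A union bound over $p$ coordinates then gives $\max_a|\sum_ic_iX_{n,ia}|=O_p(p^{1/q}N_n(b_n)\sqrt n\Psi_{n,q}(0))$ and $|\bar X_n|_\infty=O_p(p^{1/q}n^{-1/2}\Psi_{n,q}(0))$. Multiplying the two bounds gives $O_p(N_n(b_n)\Psi_{n,q}^2(0)p^{2/q}n^{-1})$, and the quadratic term has the same order.

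In the sub-Weibull case, the same moment bound holds for every $q$ with constant $q^\nu\Phi_{n,\psi_\nu}$. This yields a $\psi$-type tail $\exp(-c\,x^{1/(1/2+\nu)})$, and the union bound over $p$ gives $\max$ factors of order $(\log p)^{1/2+\nu}=(\log p)^{1/(2\gamma)}$ each. Their product is $(\log p)^{1/\gamma}$, giving the stated rate.

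Finally, compare orders. Since $p^{2/q}n^{-1}\le p^{4/q}n^{-1/2}$ and $(\log p)^{1/\gamma}n^{-1}\le(\log p)^{1/\gamma}n^{-1/2}$, the difference term is dominated by the first component of $R_n$, resp.\ $R_n^*$, which yields the corollary. The main obstacle is the weighted-sum moment inequality with non-uniform weights $c_i$. I would handle it by reusing the Burkholder/projection argument already used in Theorem \ref{lem:hac-doob-variance_q}, with $c_i$ absorbed through $\sup_i|c_i|\le N_n(b_n)$ inside $\sum_i |c_i|\delta_{n,i,u,q,a}\le N_n(b_n)\sup_u\sum_i\delta_{n,i,u,q,a}\lesssim N_n(b_n)\Theta_{n,q,a}$.
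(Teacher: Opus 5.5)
Your proposal is correct and follows essentially the same argument as the paper. You reduce via the triangle inequality to Corollary~\ref{cor:sigmadis}, and expand $\widetilde\Sigma_n(b_n)-\widehat\Sigma_n(b_n)$ into two cross terms plus a quadratic term with row-sum weights bounded by $N_n(b_n)$. You then control the weighted sums and $\bar X_n-\mu_0$ through the shock-wise Doob martingale, Burkholder's inequality and a union bound (or the sub-Weibull maximal inequality), and your comparison $p^{2/q}n^{-1}\le p^{4/q}n^{-1/2}$ is the step the paper states informally when it says the residual errors are ``of the same order.''
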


\begin{remark}[Positive Semidefiniteness]
The network HAC estimator $\widetilde\Sigma_n(b_n)$ is not necessarily positive semidefinite in finite samples for a general network-distance kernel. This feature is shared by network HAC estimators more generally, including that of \citet{KMS2021}. When positive semidefiniteness is required for implementation, such as for Gaussian simulation, one may replace $\widetilde\Sigma_n(b_n)$ by a positive-semidefinite regularization, for example by replacing its negative eigenvalues with zero. 
\end{remark}

\subsection{Simultaneous Inference}{\label{sec:inference}}
To conduct hypothesis testing and construct simultaneous confidence intervals, we need to approximate the critical value of the limiting Gaussian distribution in Theorems \ref{th:GA_q} and \ref{th:GA_sub}. Let \(\chi_\theta\) denote the \(\theta\)-quantile of
\[
|D_0^{-1}Z|_\infty,\qquad
Z\sim N(0,\Sigma_n),
\]
where \(0<\theta<1\).
If \(\Sigma_n\) were known, \(\chi_\theta\) could be computed directly by simulation.

In practice, \(\Sigma_n\) is unknown and must be replaced by the network HAC estimator
\(\widetilde\Sigma_n(b_n)\). Let $\widetilde D_0=[\mathrm{diag}(\widetilde\Sigma_n(b_n))]^{1/2}$. We estimate $\theta$-quantile $\chi_\theta$ by the conditional $\theta$-quantile $\widetilde{\chi}_\theta$ of
$
\left| \widetilde{D}_0^{-1} (\widetilde{\Sigma}_n(b_n))^{1/2} \eta \right|_\infty
$ given $(X_{n,i})_{i\in V_n}$, where $\eta \sim N(0,Id_p)$ is independent of $(X_{n,i})_{i\in V_n}$. Note that $\widetilde{\chi}_\theta$ can be computed by simulations where we use a Gaussian multiplier resampling method with the estimated network covariance matrices. 

Given a significance level \(\alpha\in(0,1)\), we reject the null hypothesis
\(
H_0:\mu=\mu_0
\)
whenever

\[
\sqrt n
\left|
\widetilde D_0^{-1}
(\bar X_n-\mu_0)
\right|_\infty
>
\widetilde\chi_{1-\alpha}.
\]
The corresponding simultaneous \((1-\alpha)\)-confidence intervals for
\(\mu=(\mu_1,\ldots,\mu_p)^\top\) are

\[
\bar X_{n,j}
\pm
\frac{\widetilde\chi_{1-\alpha}}
{\sqrt n}
\widetilde\sigma_{jj}^{1/2},
\qquad
j=1,\ldots,p,
\]
where
\(
\widetilde\sigma_{jj}
=
(\widetilde\Sigma_n(b_n))_{jj}.
\)

The following corollary establishes the asymptotic validity of this inference procedure.
\begin{corollary}[Validity of Inference]{\label{cor:test}}
    (i) Let conditions in Theorem \ref{th:GA_q} and  Corollary \ref{cor:sigmafea}(i) hold. In addition, assume $R_n\log^2 p \to 0$ with $R_n$ in Corollary \ref{cor:sigmafea}(i). Then,
    \begin{equation}\label{eq:test}
\sup_{\theta \in (0,1)}
\left|
\mathbb{P}\!\left(
\sqrt{n}\,\big|\widetilde D_0^{-1}(\bar X_n-\mu_0)\big|_\infty 
\ge \widetilde\chi_{1-\theta}
\right)
- \theta
\right|
\;\to\; 0.
\end{equation}
(ii) Let conditions in Theorem \ref{th:GA_sub} and Corollary \ref{cor:sigmafea}(ii) hold. In addition, assume $R^*_n\log^2 p \to 0$ with $R^*_n$ in Corollary \ref{cor:sigmafea}(ii). Then, 
we have validity of the test as \eqref{eq:test}.
\end{corollary}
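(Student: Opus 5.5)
The plan is the standard three-step argument: Gaussian approximation for the studentized statistic, a Gaussian-to-Gaussian comparison, and a quantile-perturbation step. I treat part (i) in detail; part (ii) is identical with $R_n^*$ in place of $R_n$. Under $H_0$, write $S_n=\sqrt n\,|\widetilde D_0^{-1}(\bar X_n-\mu_0)|_\infty$ and $S_n^0=|D_0^{-1}T_X/\sqrt n|_\infty$, where $X_{n,i}$ is recentered at $\mu_0$.

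\textbf{Step 1: replace $\widetilde D_0$ by $D_0$.} Assumption \ref{ass:variance lb} gives $\sigma_{jj}\ge c$. Corollary \ref{cor:sigmafea}(i) gives $\max_j|\widetilde\sigma_{jj}-\sigma_{jj}|=O_p(R_n)$. Hence
\[
\max_j\bigl|\sigma_{jj}^{1/2}/\widetilde\sigma_{jj}^{1/2}-1\bigr|=O_p(R_n).
\]
It follows that $|S_n-S_n^0|\le S_n^0\cdot O_p(R_n)$. By Theorem \ref{th:GA_q} and standard Gaussian maxima bounds, $S_n^0=O_p(\sqrt{\log p})$. So $|S_n-S_n^0|=O_p(R_n\sqrt{\log p})$. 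Choose a deterministic $\epsilon_n$ with $R_n\sqrt{\log p}\ll\epsilon_n$ and $\epsilon_n\sqrt{\log p}\to0$; this is possible because $R_n\log^2p\to0$. Anti-concentration for $|D_0^{-1}Z|_\infty$ (Nazarov's inequality, using unit variances) gives $\sup_u P(u<|D_0^{-1}Z|_\infty\le u+\epsilon_n)\lesssim \epsilon_n\sqrt{\log p}\to0$. Combining this with \eqref{eq:GA}, we get $\sup_u|P(S_n\le u)-P(|D_0^{-1}Z|_\infty\le u)|\to0$.

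\textbf{Step 2: Gaussian comparison of the critical values.} Let $\widehat\Sigma^{c}=\widetilde D_0^{-1}\widetilde\Sigma_n(b_n)\widetilde D_0^{-1}$ (after the psd regularization, if needed) and $\Sigma^{c}=D_0^{-1}\Sigma_nD_0^{-1}$. Conditionally on the data, $\widetilde\chi_\theta$ is the $\theta$-quantile of $|N(0,\widehat\Sigma^c)|_\infty$. The same argument as in Step 1 gives $\Delta:=|\widehat\Sigma^c-\Sigma^c|_\infty=O_p(R_n)$; eigenvalue clipping changes entries by at most the operator-norm error, which we control or else assume away. The Gaussian comparison inequality of \cite{chernozhukov2015comparison} then gives
\[
\sup_u\bigl|P(|N(0,\widehat\Sigma^c)|_\infty\le u\mid X)-P(|N(0,\Sigma^c)|_\infty\le u)\bigr|\lesssim \Delta^{1/3}(\log p)^{2/3}.
\]
This tends to zero in probability because $R_n\log^2p\to0$; that is precisely where the stated rate condition enters.

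\textbf{Step 3: quantile sandwich.} Let $\pi_n=\Delta^{1/3}(\log p)^{2/3}$. On an event of probability $1-o(1)$, $\pi_n\le\varpi_n\to0$ for some deterministic $\varpi_n$. On that event,
\[
\chi_{1-\theta-\varpi_n}\le\widetilde\chi_{1-\theta}\le\chi_{1-\theta+\varpi_n},
\]
with the usual truncation of indices to $(0,1)$. Then
\[
P(S_n\ge\widetilde\chi_{1-\theta})\le P(S_n\ge\chi_{1-\theta-\varpi_n})+o(1)=\theta+\varpi_n+o(1),
\]
where the last equality uses Step 1 and the continuity of the Gaussian maximum's distribution. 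The reverse inequality follows symmetrically, and all bounds are uniform in $\theta$.

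I expect the main obstacle to be Step 2, specifically the interplay between the covariance error and the logarithmic factors together with the psd modification. If the comparison bound in $|\cdot|_\infty$ with exponent $1/3$ were unavailable, I would fall back on the Slepian-type interpolation bound $\Delta^{1/2}\log p$. That bound is also $o_p(1)$ under $R_n\log^2p\to0$, so either route works.
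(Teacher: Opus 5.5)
The paper states this corollary without a written proof. Your three-step argument (studentization via anti-concentration, Gaussian-to-Gaussian comparison for the critical value, quantile sandwich) is the natural one, and it uses exactly the tools the paper relies on elsewhere:
\begin{itemize}
\item The anti-concentration step mirrors the $\eta\sqrt{\log p}$ term in the proof of Theorems~\ref{th:GA_q}--\ref{th:GA_sub}.
\item The comparison step is Lemma~3.1 of \cite{CCK2013}. Its rate $\pi(x)=x^{1/3}(1\vee\log(p/x))^{2/3}$ at $x\asymp R_n$ is at most $(R_n\log^2 p)^{1/3}$ plus a term $R_n^{1/3}\log^{2/3}(1/R_n)\to0$. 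This explains the hypothesis $R_n\log^2p\to0$. You dropped the $\log(1/\Delta)$ factor, but that is harmless.
\end{itemize}
Steps 1 and 3 are correct as written. In Step 2, $\Delta=O_p(R_n)$ does follow, because both normalized matrices have unit diagonal and $|\sigma_{jk}|\le(\sigma_{jj}\sigma_{kk})^{1/2}$.

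The one remark that does not hold is that eigenvalue clipping is something you ``control.''
\begin{itemize}
\item By Weyl's inequality, the clipping correction is positive semidefinite with entries bounded by $\|\widetilde\Sigma_n(b_n)-\Sigma_n\|_{\mathrm{op}}$.
\item Corollary~\ref{cor:sigmafea}, however, only bounds the max-norm error. In dimension $p$ the operator-norm error can be as large as $p$ times the max-norm error.
\item So nothing available makes the clipped matrix $O_p(R_n)$-close to $\Sigma_n$ entrywise.
\end{itemize}
You therefore need your ``assume away'' branch: either $\widetilde\Sigma_n(b_n)\succeq 0$ with probability tending to one, or a regularization shown to perturb entries by only $O_p(R_n)$. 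The paper makes the same implicit assumption when it writes $(\widetilde\Sigma_n(b_n))^{1/2}$ in defining $\widetilde\chi_\theta$. So this is not a gap relative to the paper, but it should be stated as a hypothesis rather than claimed as controlled.
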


\section{Simulation Studies I: Gaussian Approximation}
\label{sec:simGA}
In this section, we examine the finite-sample performance of the high-dimensional Gaussian approximation property using Monte Carlo simulations.

\subsection{Design Overview}
\label{subsec:sim_overview}
In each Monte Carlo repetition, we first generate a network $\mathcal{G}_n$, and then simulate node-level observations $\{X_{n,i}\}_{i\in V_n}$ from a network-dependent data generating process (DGP). 
We compare the distribution of the standardized maximum statistic with its Gaussian benchmark. Performance of the approximation is assessed by using QQ plots.

\subsection{Data Generating Processes}
\label{subsec:sim_dgp}
We take the ring network and the regular random geometric graph introduced in Examples~\ref{ex:ring} and \ref{ex:rgg}, respectively. For the random geometric graph, we use the standard random geometric graph where node locations are generated independently from the uniform distribution on $[0,1]^2$, and choose the connection radius 
\(
r_n
=
\left(\frac{\bar k}{\pi n}\right)^{1/2},
\)
where $\bar k$ is the target expected degree. 
This construction preserves the scaling $r_n\asymp n^{-1/2}$ used in Example~\ref{ex:rgg}.
Conditional on each realized finite network, the network regularity conditions are satisfied for suitable realization-specific constants. In particular, neighborhood sizes satisfy a polynomial growth bound in graph distance, and the network admits a partition into approximately balanced clusters with small boundaries.

Conditional on the network realization, we generate $p$-dimensional  observations $\{X_{n,i}\}$ according to
\begin{equation}
X_{n,i}
=
\sum_{s\ge 0}
\frac{\gamma^s}{N_n^{\partial}(i,s)} M_{n,s}
\left(
\sum_{i'\in N_n^{\partial}(i,s)}
\varepsilon_{n,i'}
\right),
\qquad i\in V_n,
\label{eq:sim_dgp_kms}
\end{equation}
where $N_n^{\partial}(i,s)$ is the neighborhood size defined in \eqref{eq:neighbor size}. The shocks  $\{\varepsilon_{n,i}\}$ are independently generated from a standardized
Student-$t$ distribution, i.e.,
\(
\varepsilon_{n,i}\sim T_t/\sqrt{t/(t-2)},
\) where $T_t$ denotes a Student-$t$ random variable with $t$ degrees of freedom.
The matrices $\{M_{n,s}\}_{s\ge0}$ introduce cross-coordinate dependence in the
$p$-dimensional observations. They are generated once and then kept fixed across
Monte Carlo repetitions, with entries independently drawn from $N(0,1/p)$.   Parameter $\gamma\in\{0.6,0.8\}$
controls the strength of network dependence.

This DGP is a linear network functional of independent node-level shocks. The
weight attached to shocks at graph distance $s$ decays exponentially through
$\gamma^s$, while the normalization by $N_n^{\partial}(i,s)$ prevents distant
shells with many nodes from mechanically dominating the variance. Hence the
design directly captures the interaction between dependence decay and network
distance growth.  

Recall the notation $T_X = \sum_{i\in V_n} X_{n,i}$, and the covariance matrix of
$T_X/\sqrt n$ is given by
\begin{equation}
\Sigma_n
=
\mathrm {Var}(T_X/\sqrt n)
=
\frac{1}{n}\sum_{i,i'\in V_n}
\mathrm {Cov}(X_{n,i},X_{n,i'}).
\label{eq:Sigma-def}
\end{equation}
Equivalently, exchanging the order of summations yields the closed-form
representation
\begin{equation}
\Sigma_n
=
\frac{1}{n}
\sum_{i\in V_n}
v_{i} v_{i}^\top,
\qquad\text{where}\qquad
v_i
:=
\sum_{i'\in V_n}
\gamma^{d(i,i')} M_{n,d(i,i')}\mathbf 1\{d_n(i,i')<\infty\}.
\label{eq:Sigma-closed}
\end{equation}

The statistic of interest is the standardized maximum statistic
\begin{equation}
T_n
=
\sqrt{n}\left|
D_0^{-1} \sum_{i\in V_n} X_{n,i}
\right|_\infty,
\label{eq:sim_Tn}
\end{equation}
where $D_0=\left(\mathrm{diag}(\Sigma_n)\right)^{1/2}$. Let $Z\sim N(0,\Sigma_n)$ denote the Gaussian
benchmark, and let $T_Z=|D_0^{-1}Z|_\infty$ be its counterpart of the statistic $T_n$.

\subsection{Simulation results}
\label{subsec:sim_takeaways}
For each $(n,p,\gamma,t)$ configuration, we plot the empirical distributions of $T_n$ against the empirical distributions of $T_Z$ in 2000 Monte Carlo realizations. Simulation results are illustrated in Figures ~\ref{fig:ring_QQ_nu4_nu8} and ~\ref{fig:RGG_QQ_nu4_nu8} for the ring network and the random geometric graph, respectively. Figure \ref{fig:rgg_1x3} illustrates the random geometric graph networks.

\begin{figure}[p]
\centering

\begin{minipage}[t]{0.48\textwidth}
\centering

\begin{subfigure}[t]{0.48\linewidth}
  \includegraphics[width=\linewidth]{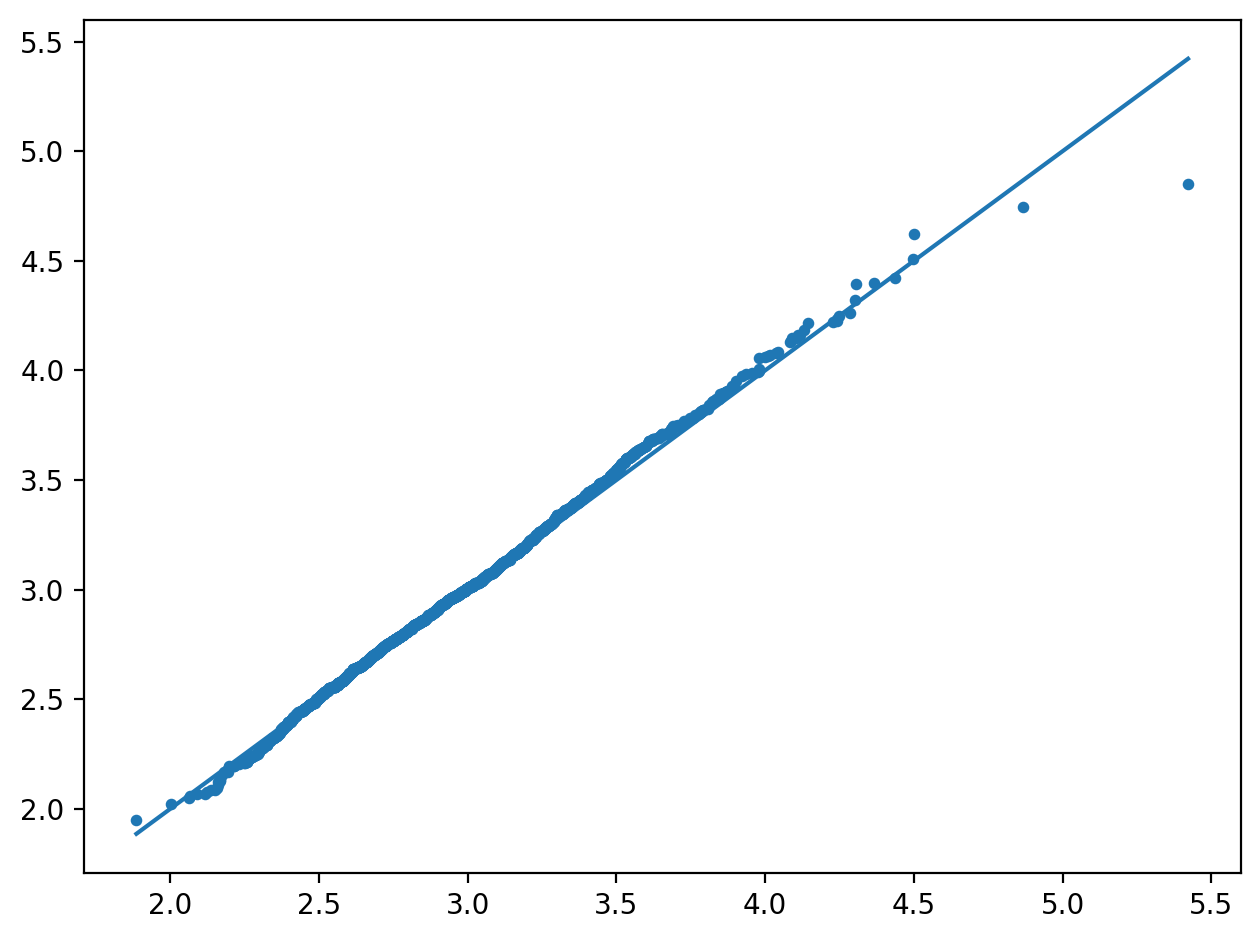}
\end{subfigure}\hfill
\begin{subfigure}[t]{0.48\linewidth}
  \includegraphics[width=\linewidth]{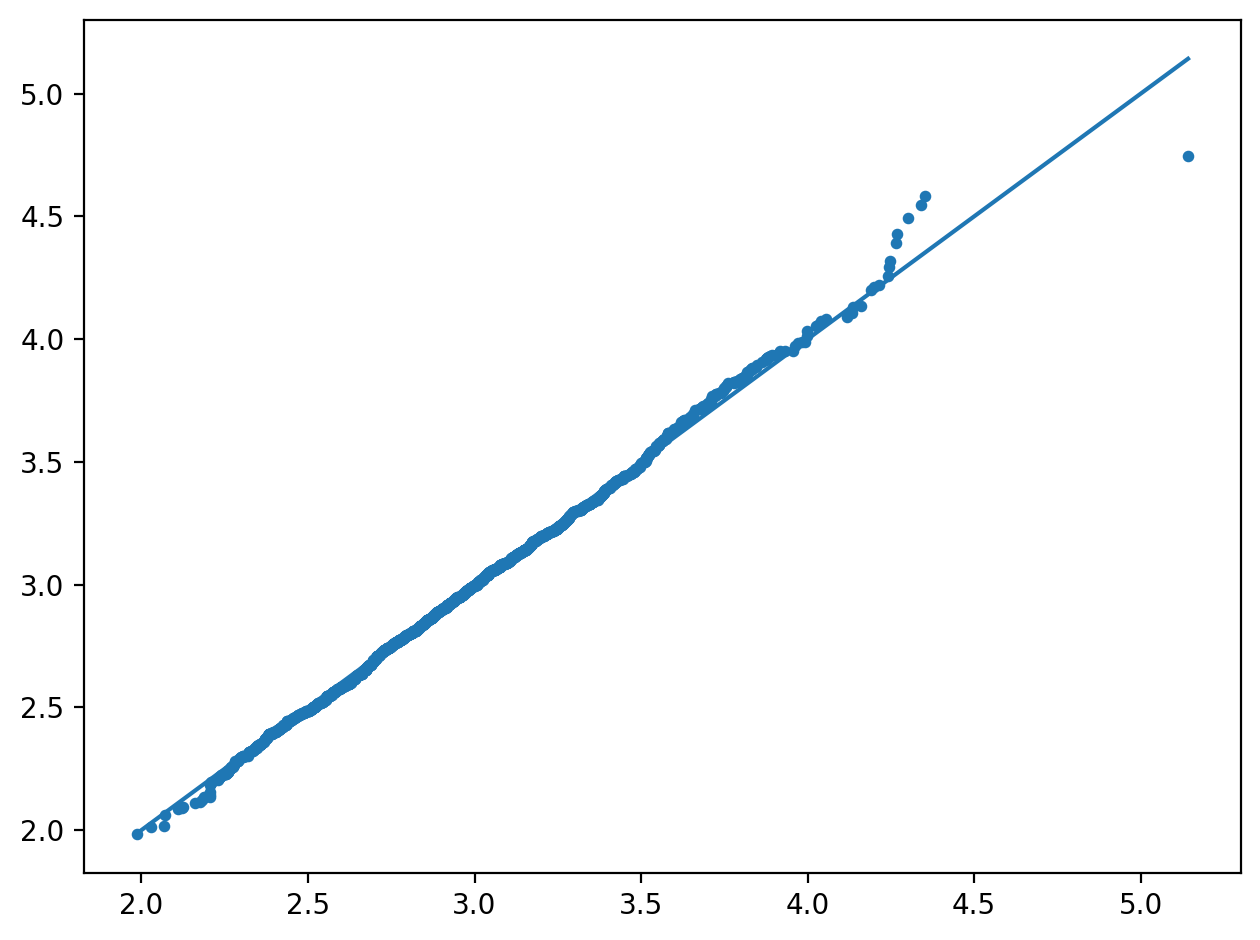}
\end{subfigure}

\vspace{2mm}

\begin{subfigure}[t]{0.48\linewidth}
  \includegraphics[width=\linewidth]{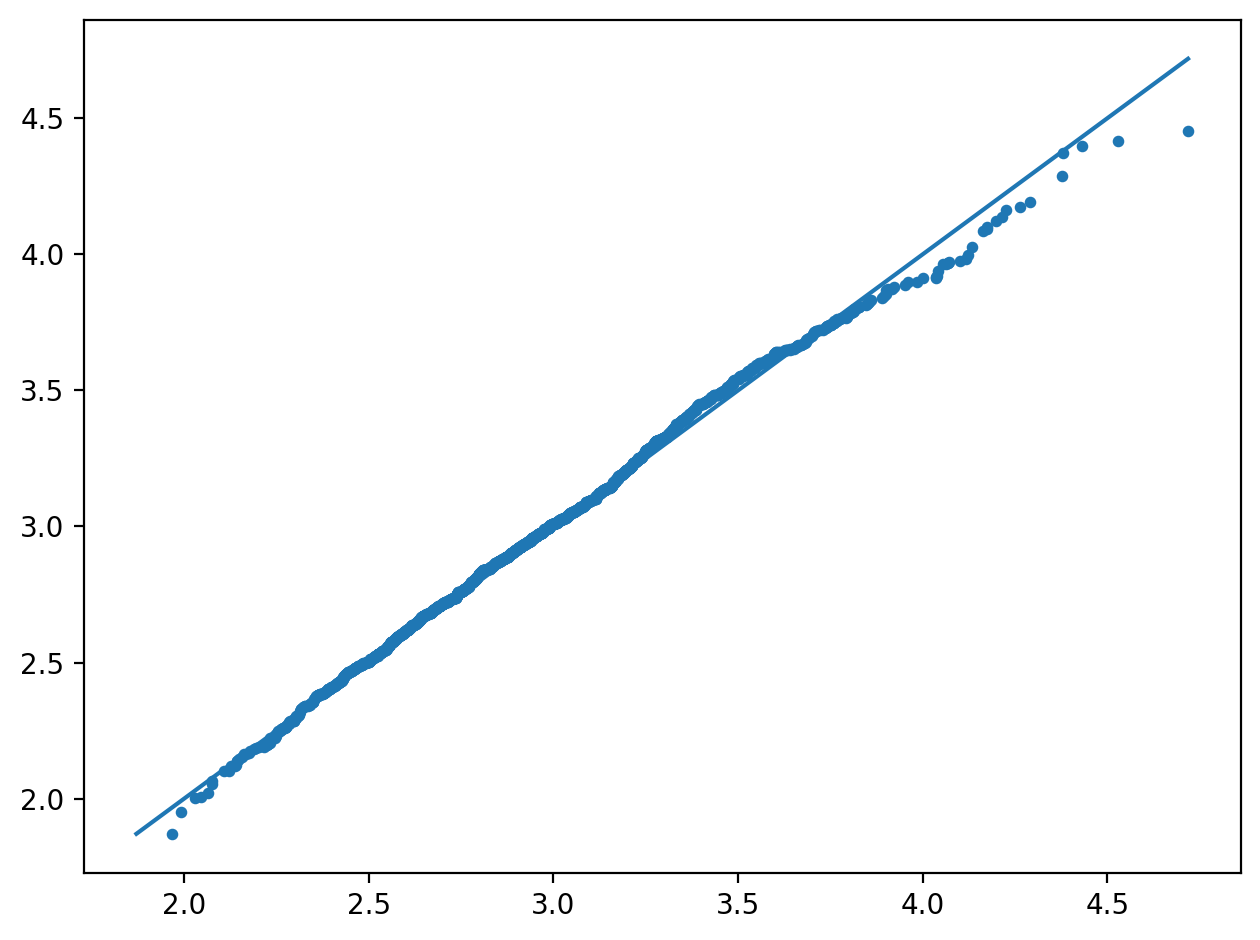}
\end{subfigure}\hfill
\begin{subfigure}[t]{0.48\linewidth}
  \includegraphics[width=\linewidth]{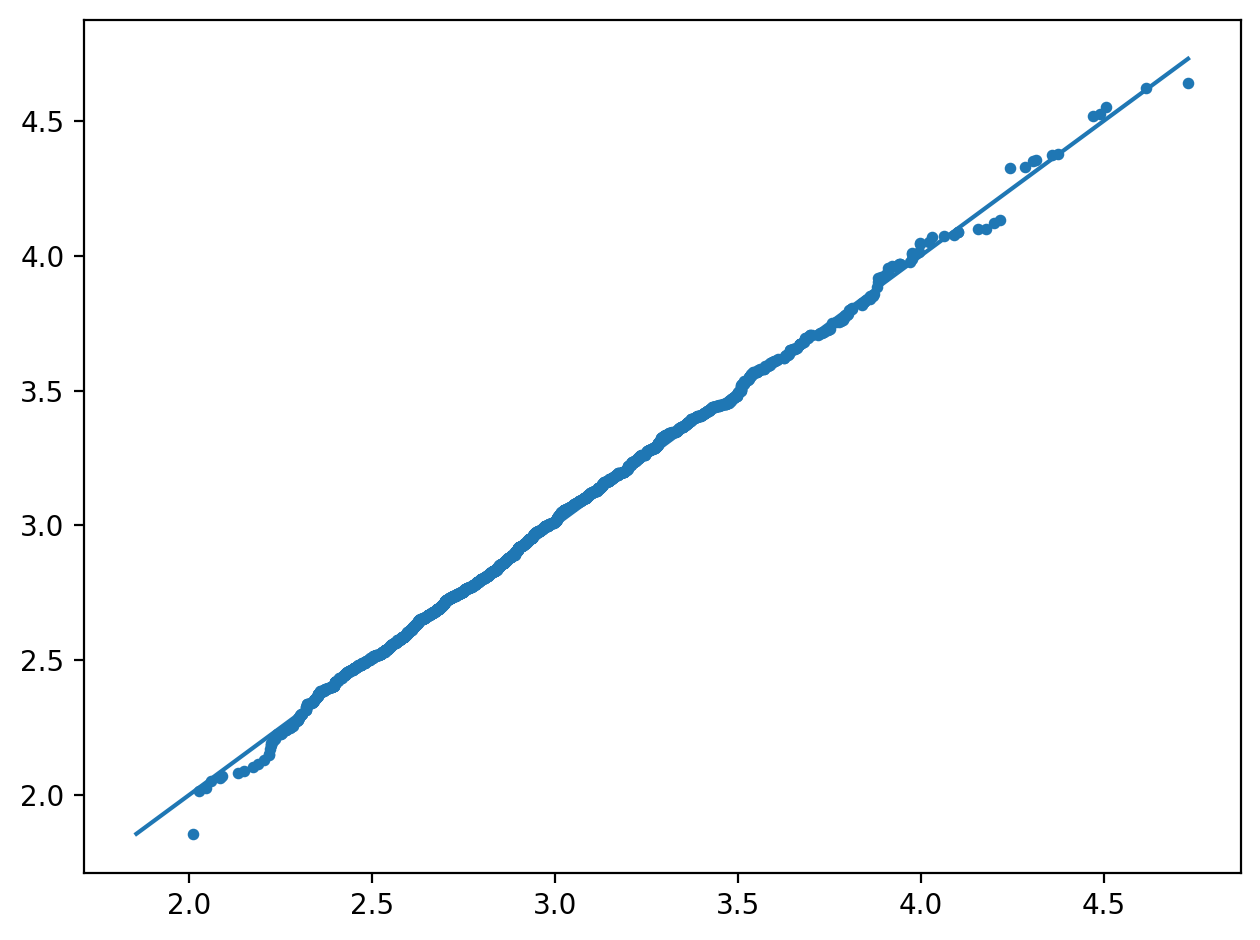}
\end{subfigure}

\vspace{2mm}

\begin{subfigure}[t]{0.48\linewidth}
  \includegraphics[width=\linewidth]{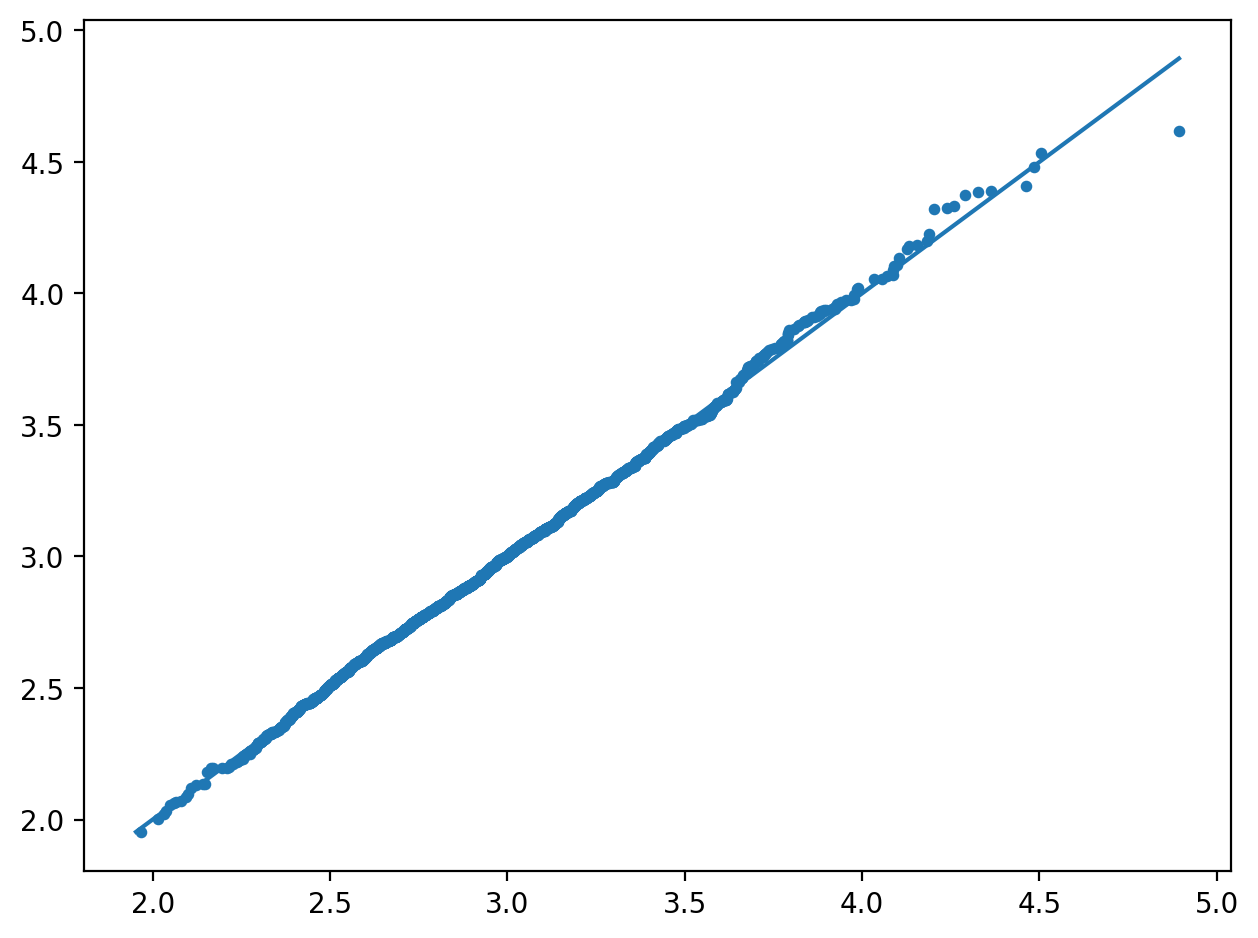}
\end{subfigure}\hfill
\begin{subfigure}[t]{0.48\linewidth}
  \includegraphics[width=\linewidth]{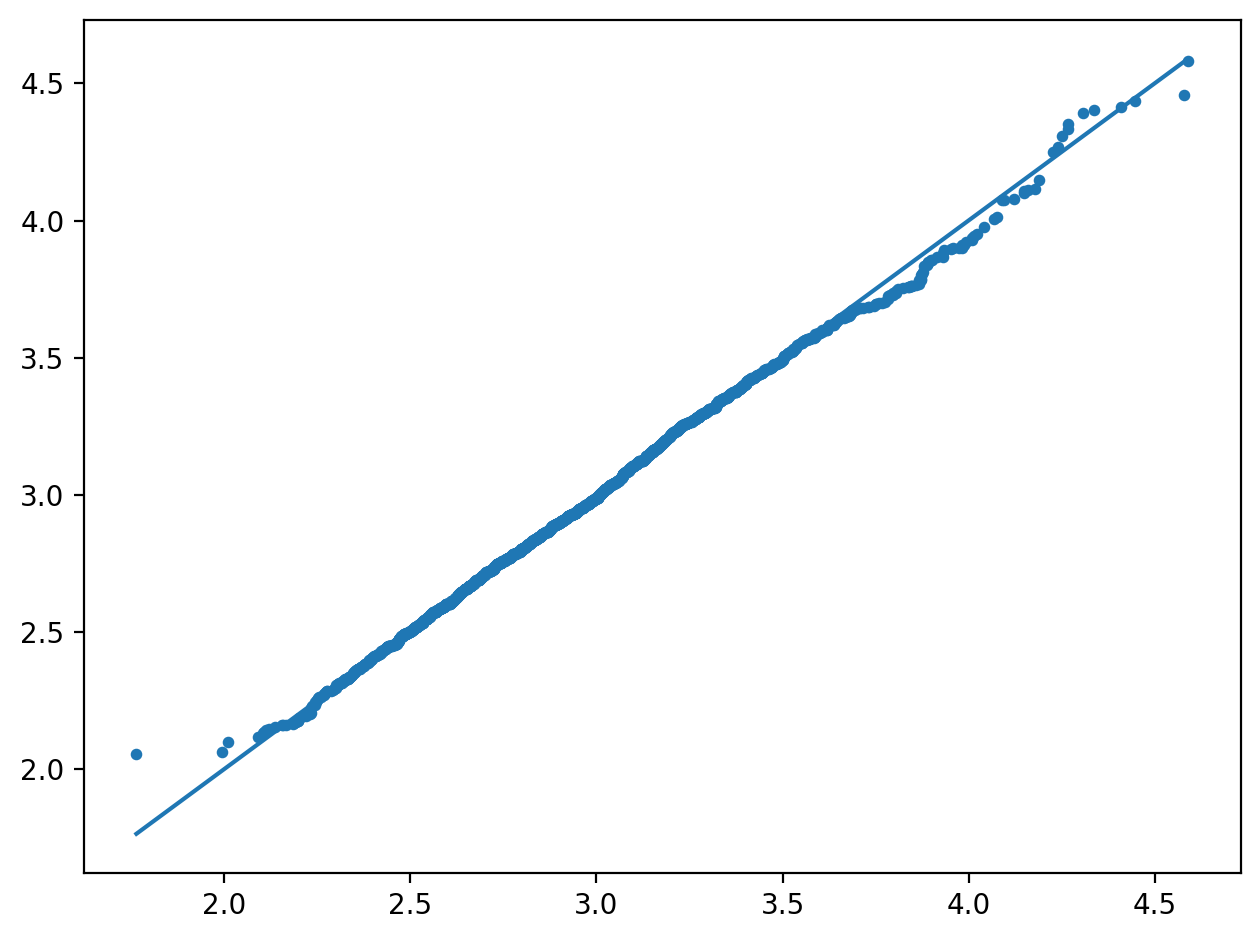}
\end{subfigure}

\caption*{\footnotesize $t = 8$}
\end{minipage}
\hfill
\begin{minipage}[t]{0.48\textwidth}
\centering

\begin{subfigure}[t]{0.48\linewidth}
  \includegraphics[width=\linewidth]{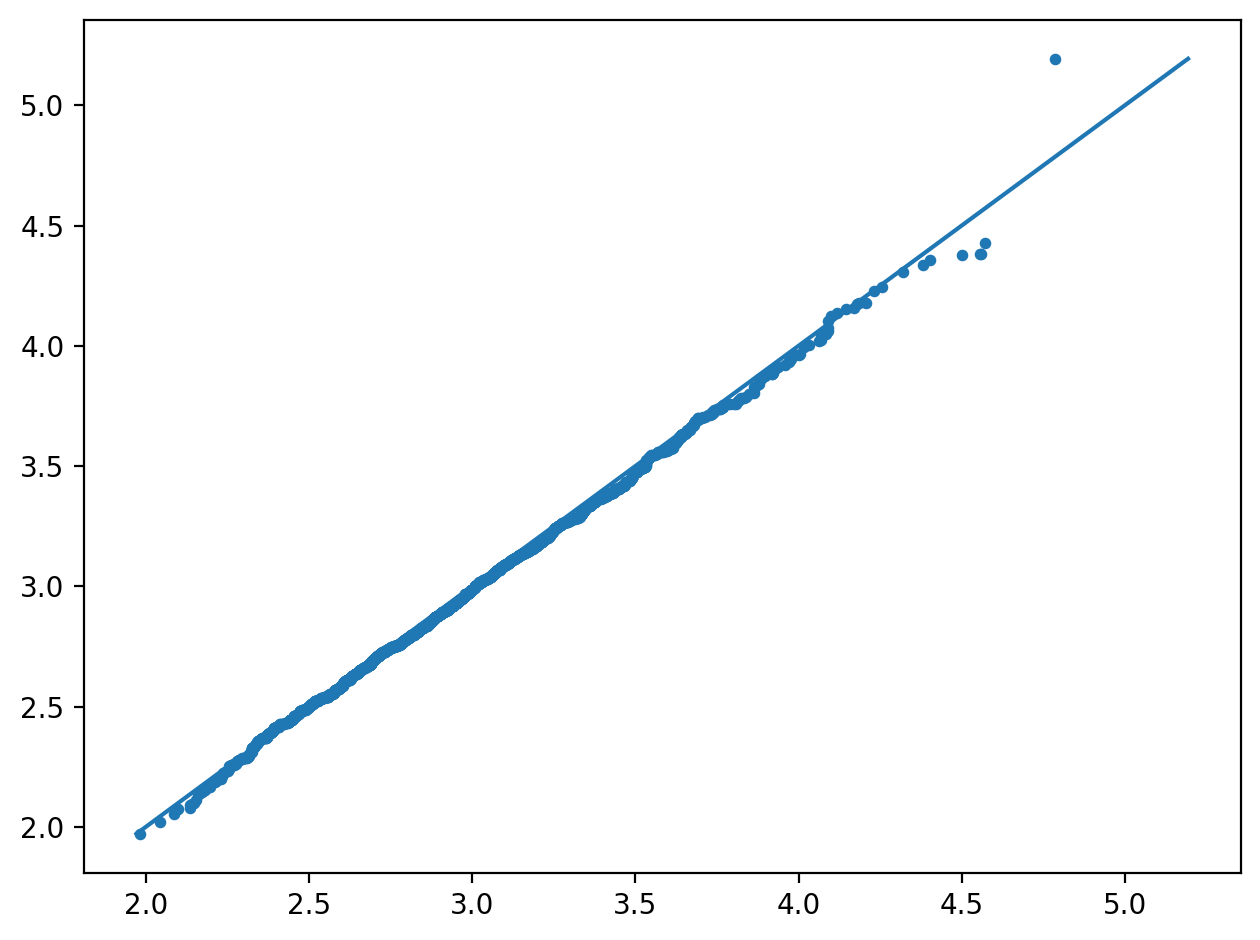}
\end{subfigure}\hfill
\begin{subfigure}[t]{0.48\linewidth}
  \includegraphics[width=\linewidth]{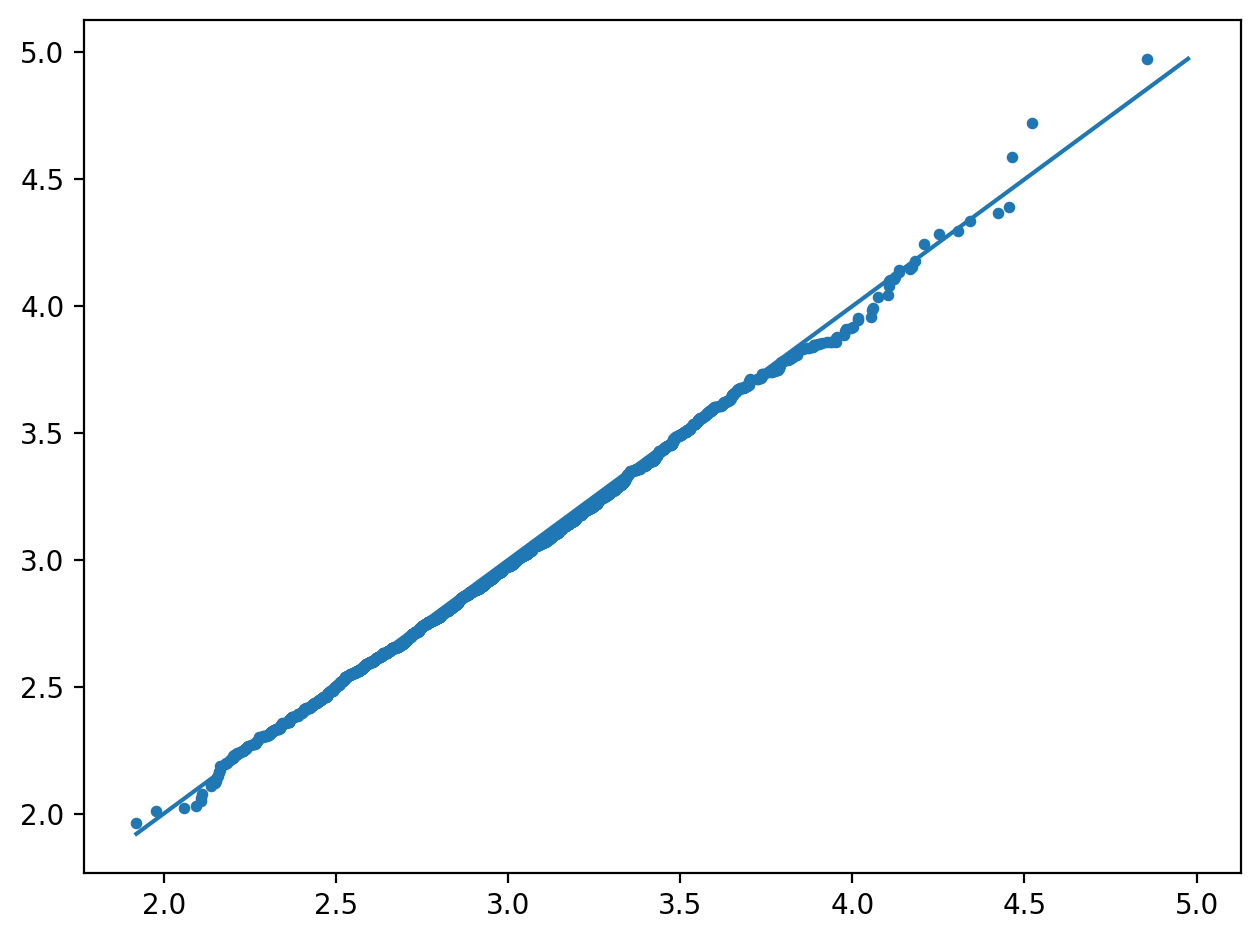}
\end{subfigure}

\vspace{2mm}

\begin{subfigure}[t]{0.48\linewidth}
  \includegraphics[width=\linewidth]{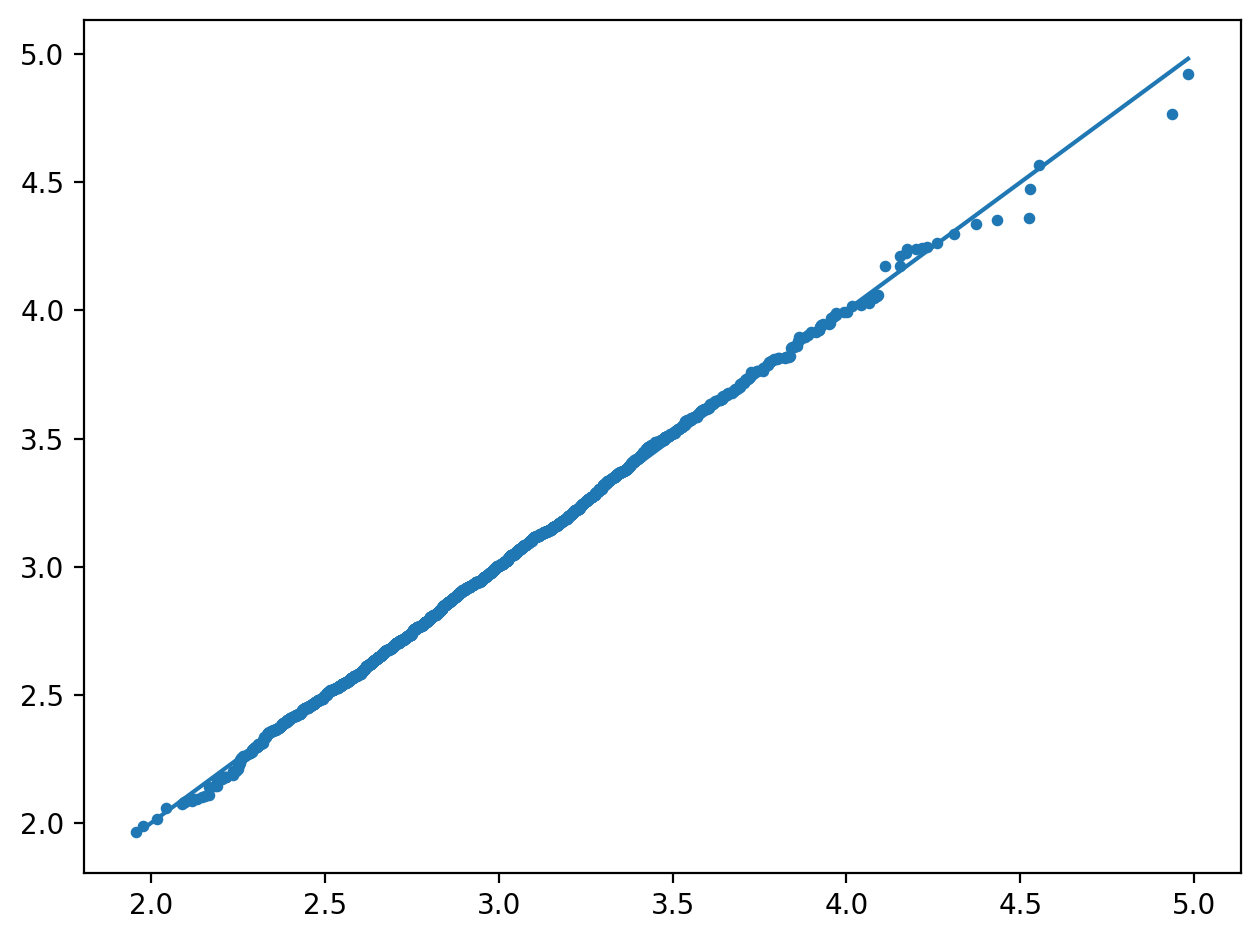}
\end{subfigure}\hfill
\begin{subfigure}[t]{0.48\linewidth}
  \includegraphics[width=\linewidth]{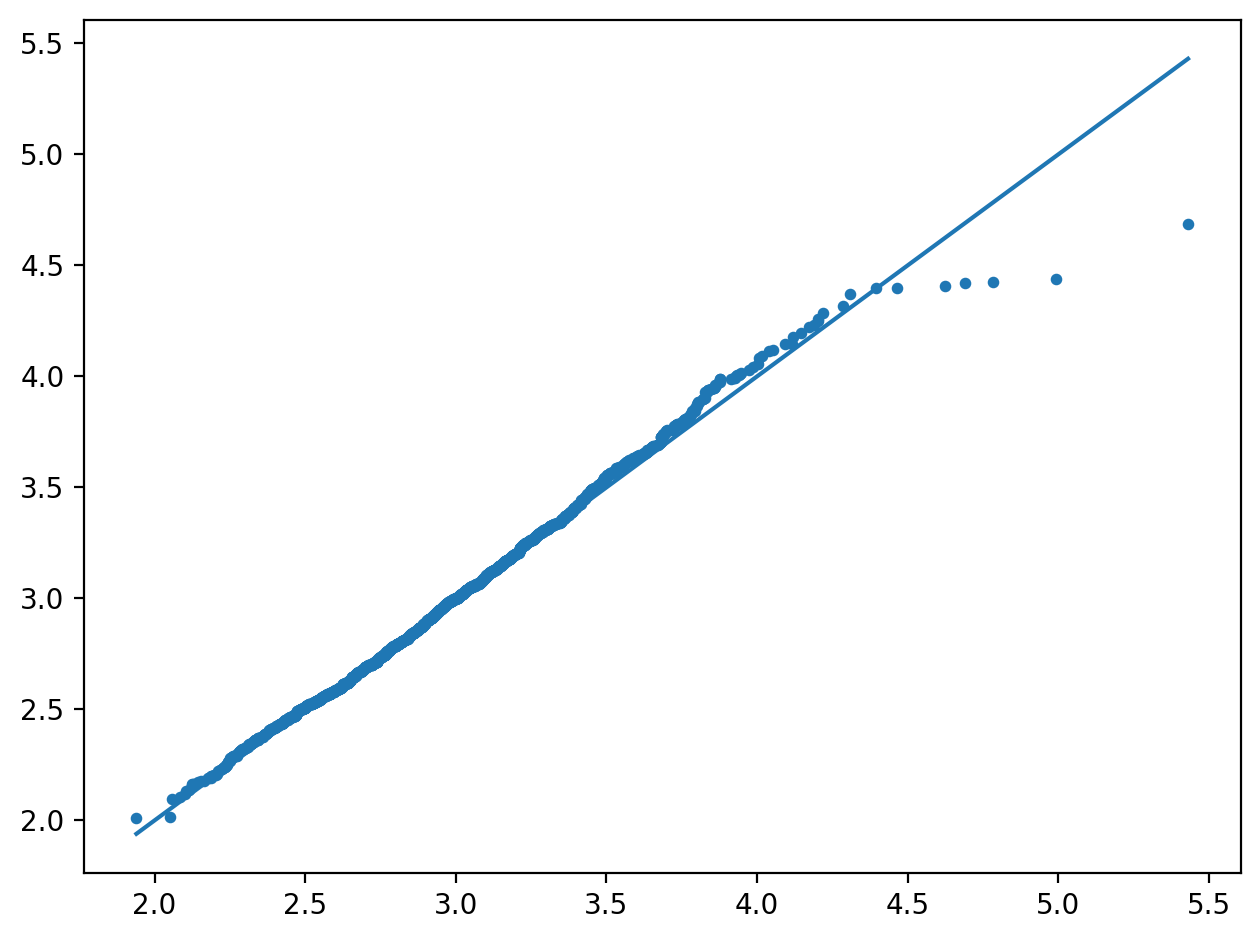}
\end{subfigure}

\vspace{2mm}

\begin{subfigure}[t]{0.48\linewidth}
  \includegraphics[width=\linewidth]{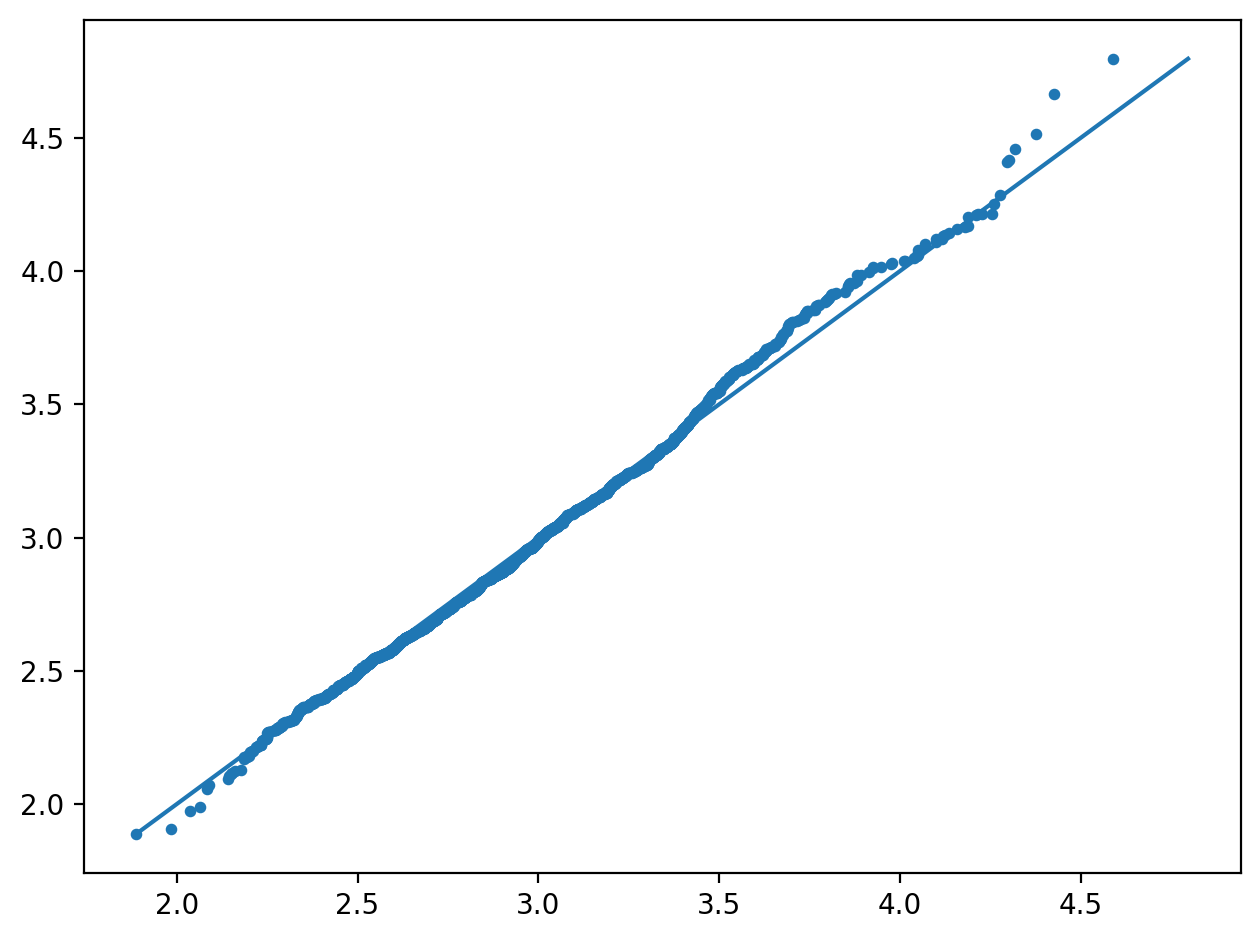}
\end{subfigure}\hfill
\begin{subfigure}[t]{0.48\linewidth}
  \includegraphics[width=\linewidth]{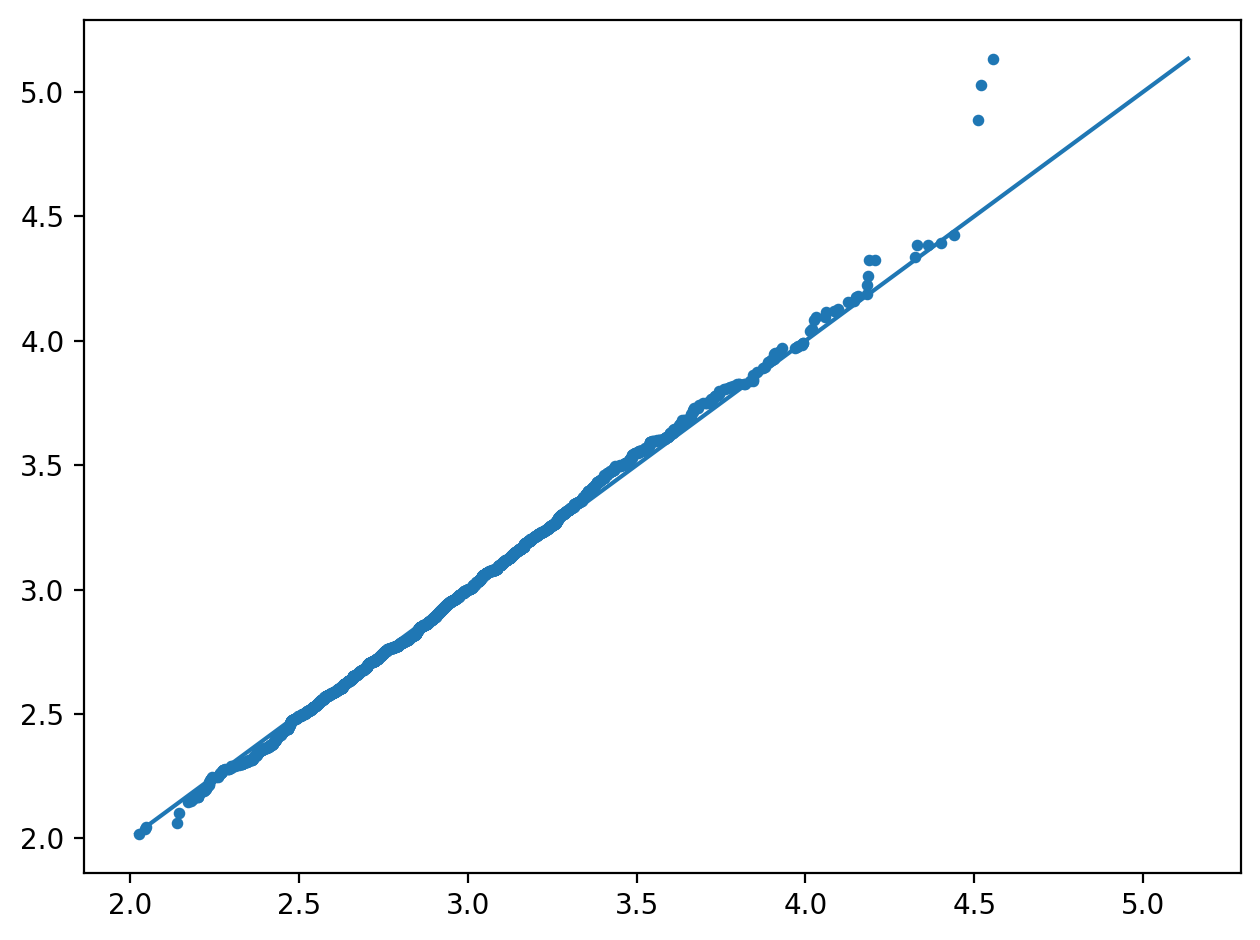}
\end{subfigure}

\caption*{\footnotesize $t = 12$}
\end{minipage}
\caption{
QQ plots for the ring network.
Left panels correspond to $t=8$ and right panels correspond to $t=12$.
Rows correspond to $n=\{100,200,400\}$ and columns correspond to
$\rho=\{0.6,0.8\}$ with $p=200$ and $R=2000$.
The horizontal axis reports the empirical distributions of $T_n$,
and the vertical axis reports the corresponding Gaussian distributions of $T_z$.
}
\label{fig:ring_QQ_nu4_nu8}
\end{figure}

\begin{figure}[p]
\centering

\begin{minipage}[t]{0.48\textwidth}
\centering

\begin{subfigure}[t]{0.48\linewidth}
  \includegraphics[width=\linewidth]{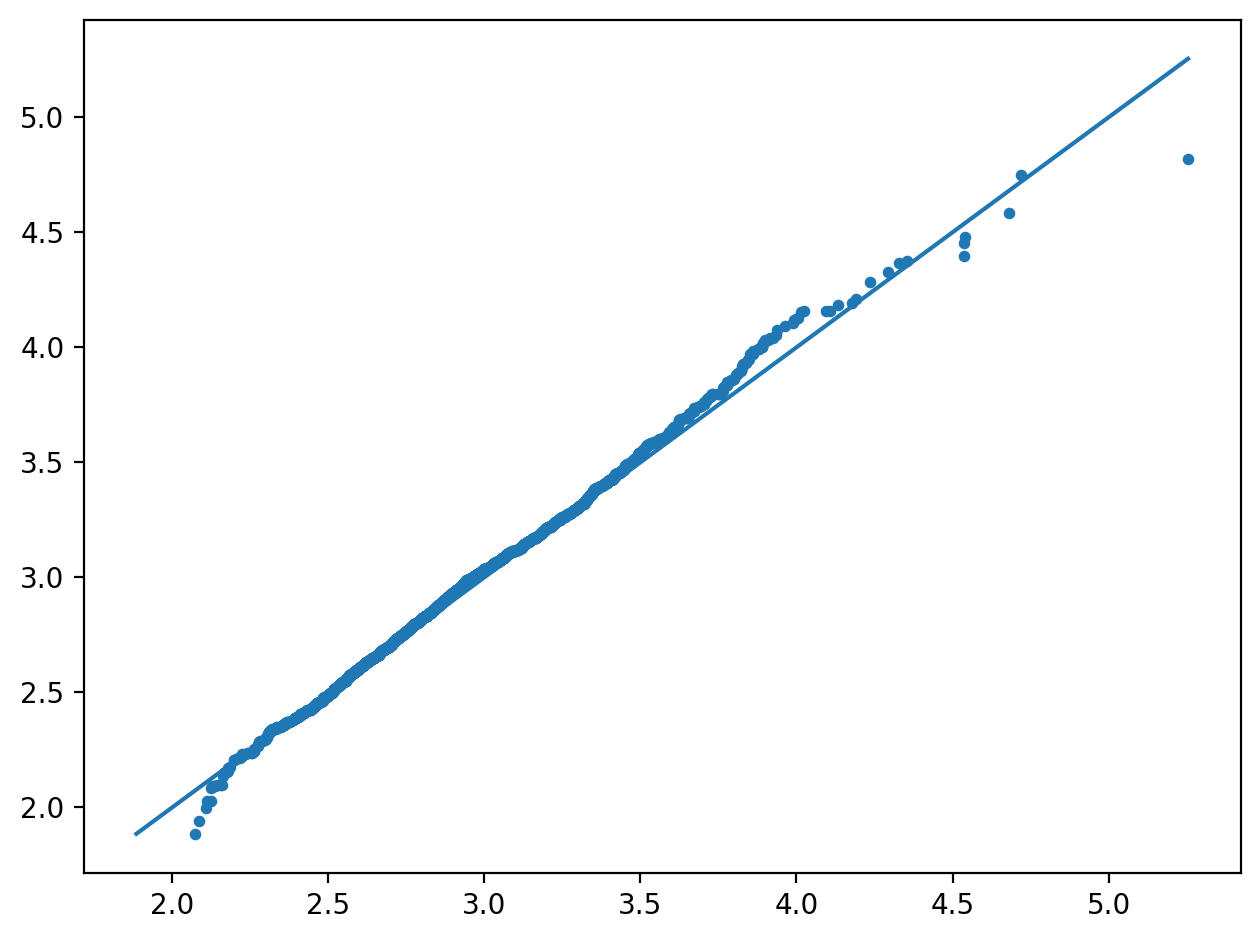}
\end{subfigure}\hfill
\begin{subfigure}[t]{0.48\linewidth}
  \includegraphics[width=\linewidth]{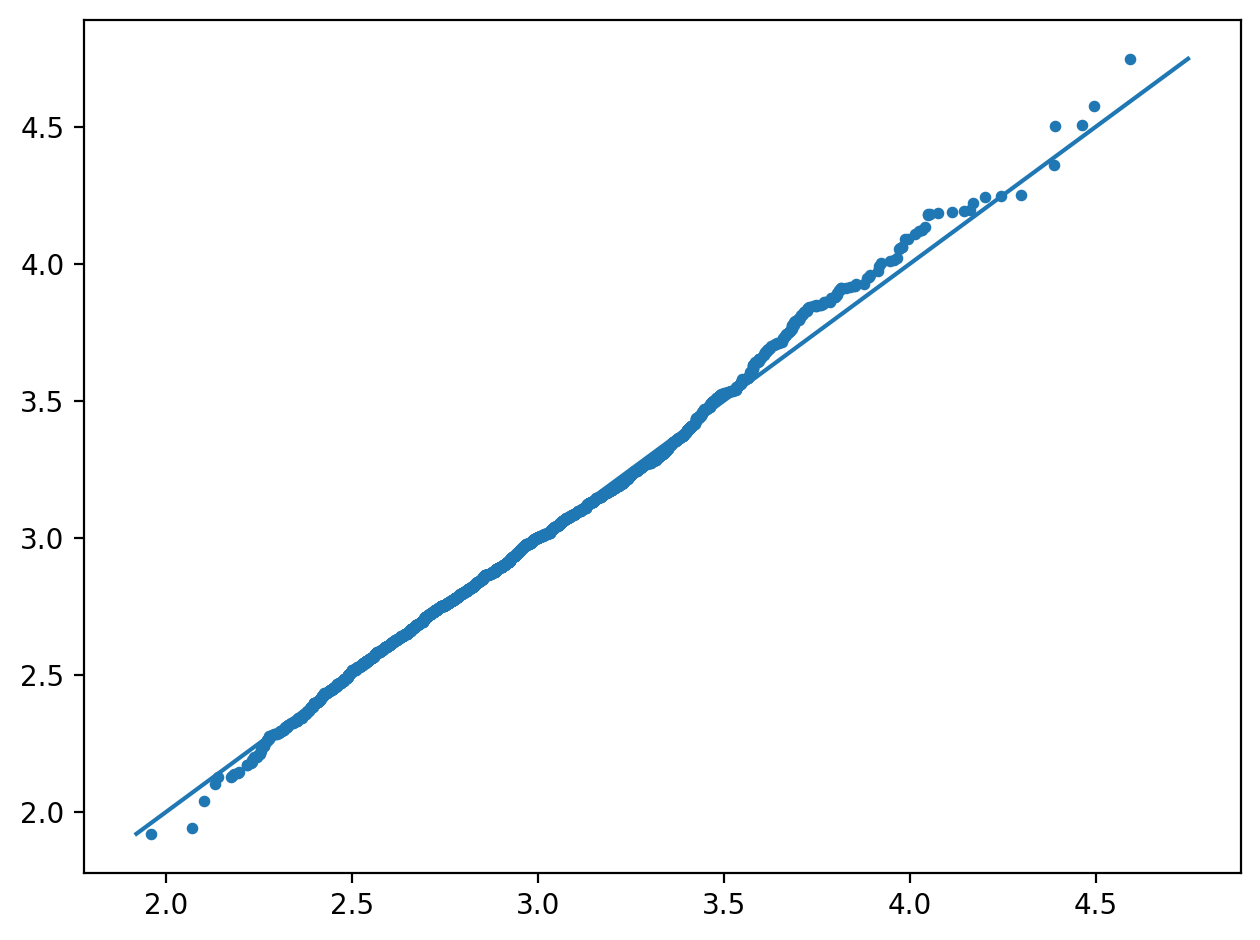}
\end{subfigure}

\vspace{2mm}

\begin{subfigure}[t]{0.48\linewidth}
  \includegraphics[width=\linewidth]{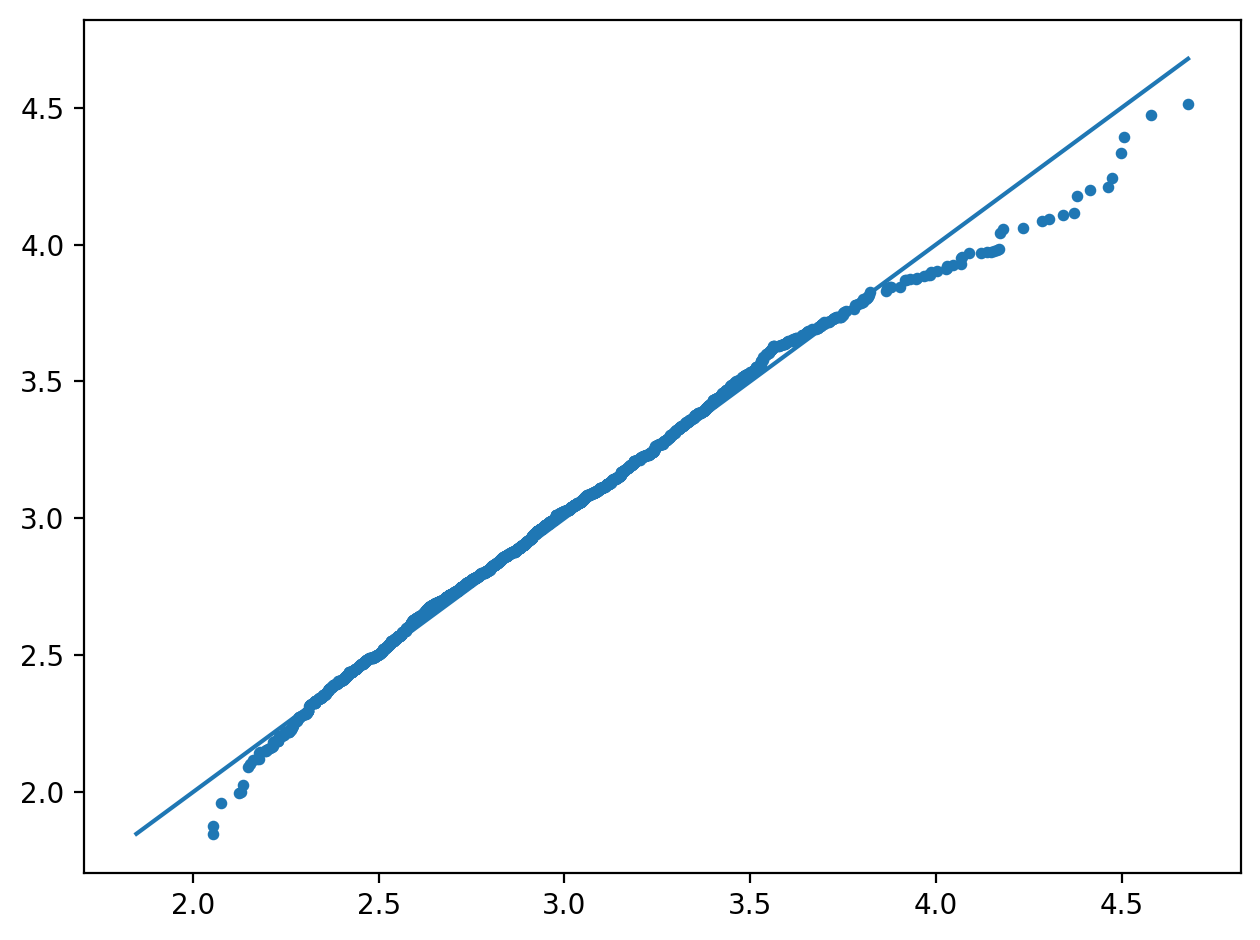}
\end{subfigure}\hfill
\begin{subfigure}[t]{0.48\linewidth}
  \includegraphics[width=\linewidth]{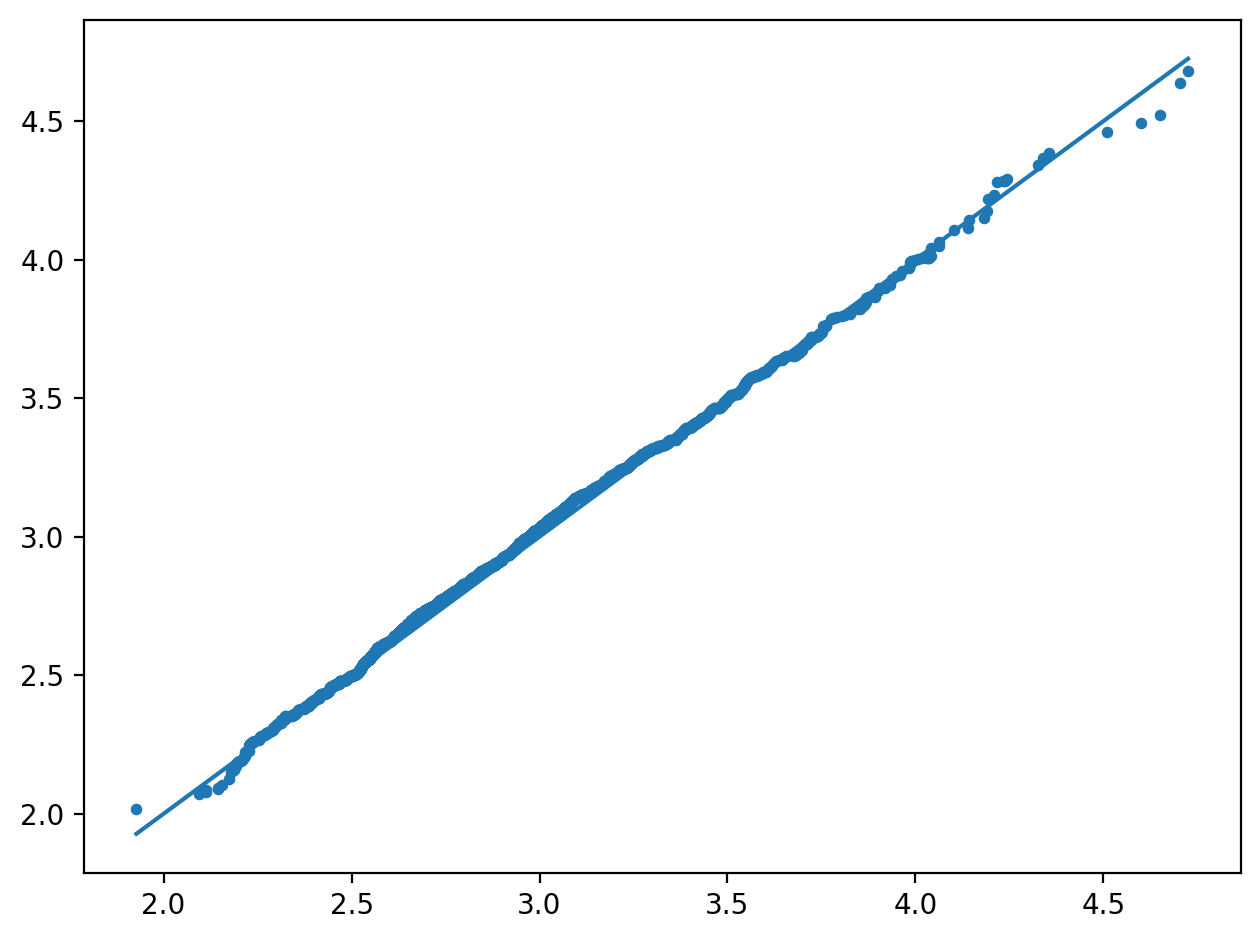}
\end{subfigure}

\vspace{2mm}

\begin{subfigure}[t]{0.48\linewidth}
  \includegraphics[width=\linewidth]{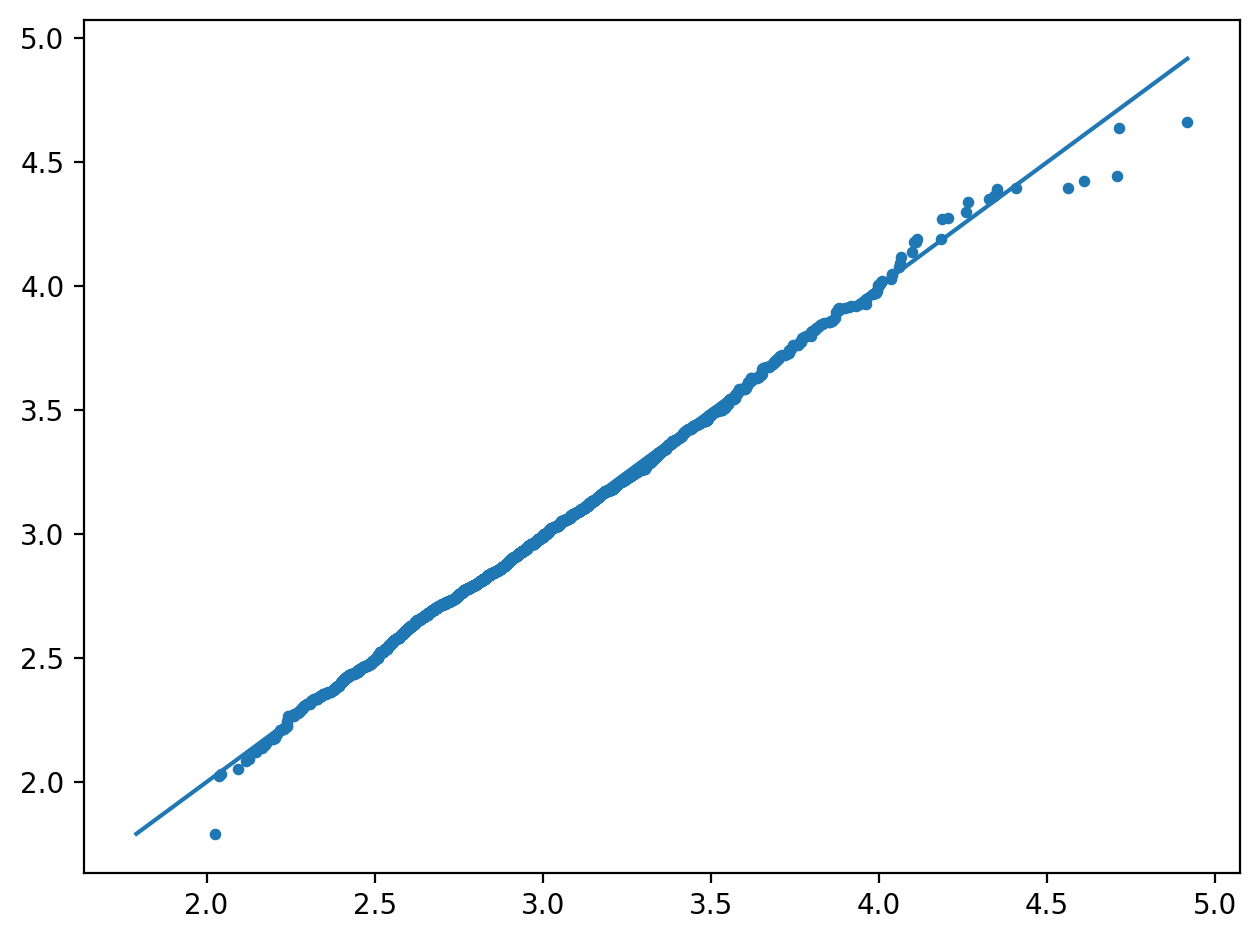}
\end{subfigure}\hfill
\begin{subfigure}[t]{0.48\linewidth}
  \includegraphics[width=\linewidth]{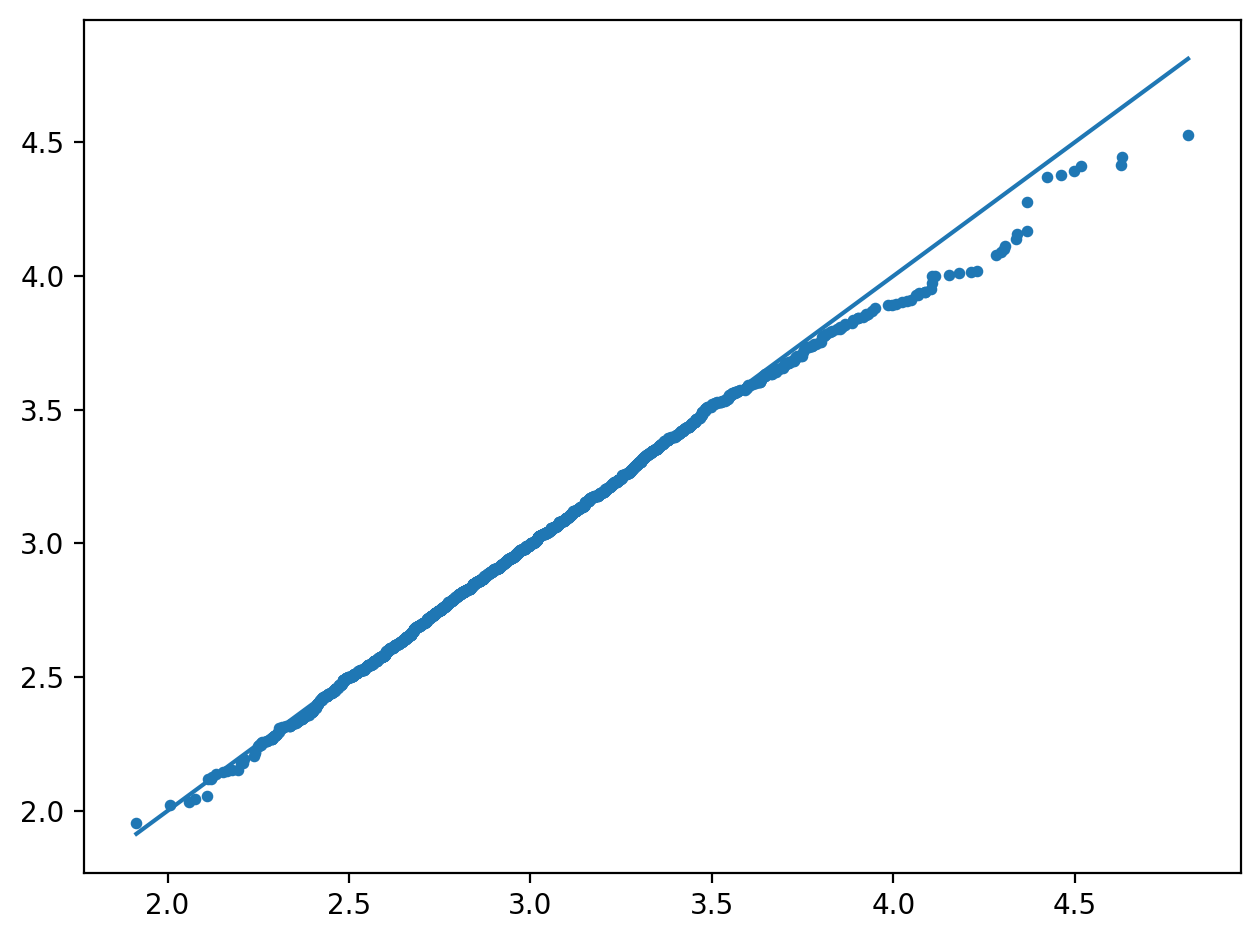}
\end{subfigure}

\caption*{\footnotesize $t = 8$}
\end{minipage}
\hfill
\begin{minipage}[t]{0.48\textwidth}
\centering

\begin{subfigure}[t]{0.48\linewidth}
  \includegraphics[width=\linewidth]{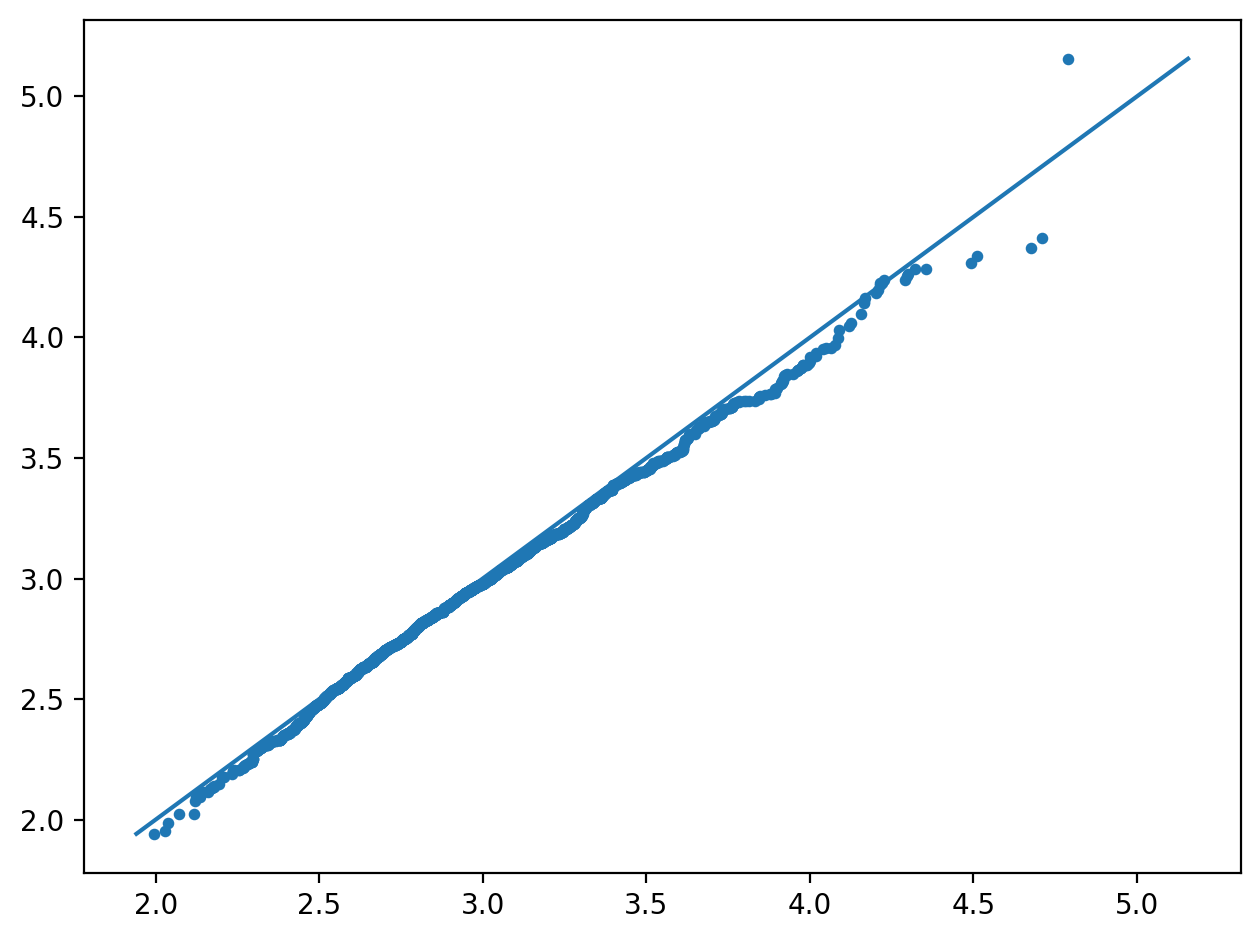}
\end{subfigure}\hfill
\begin{subfigure}[t]{0.48\linewidth}
  \includegraphics[width=\linewidth]{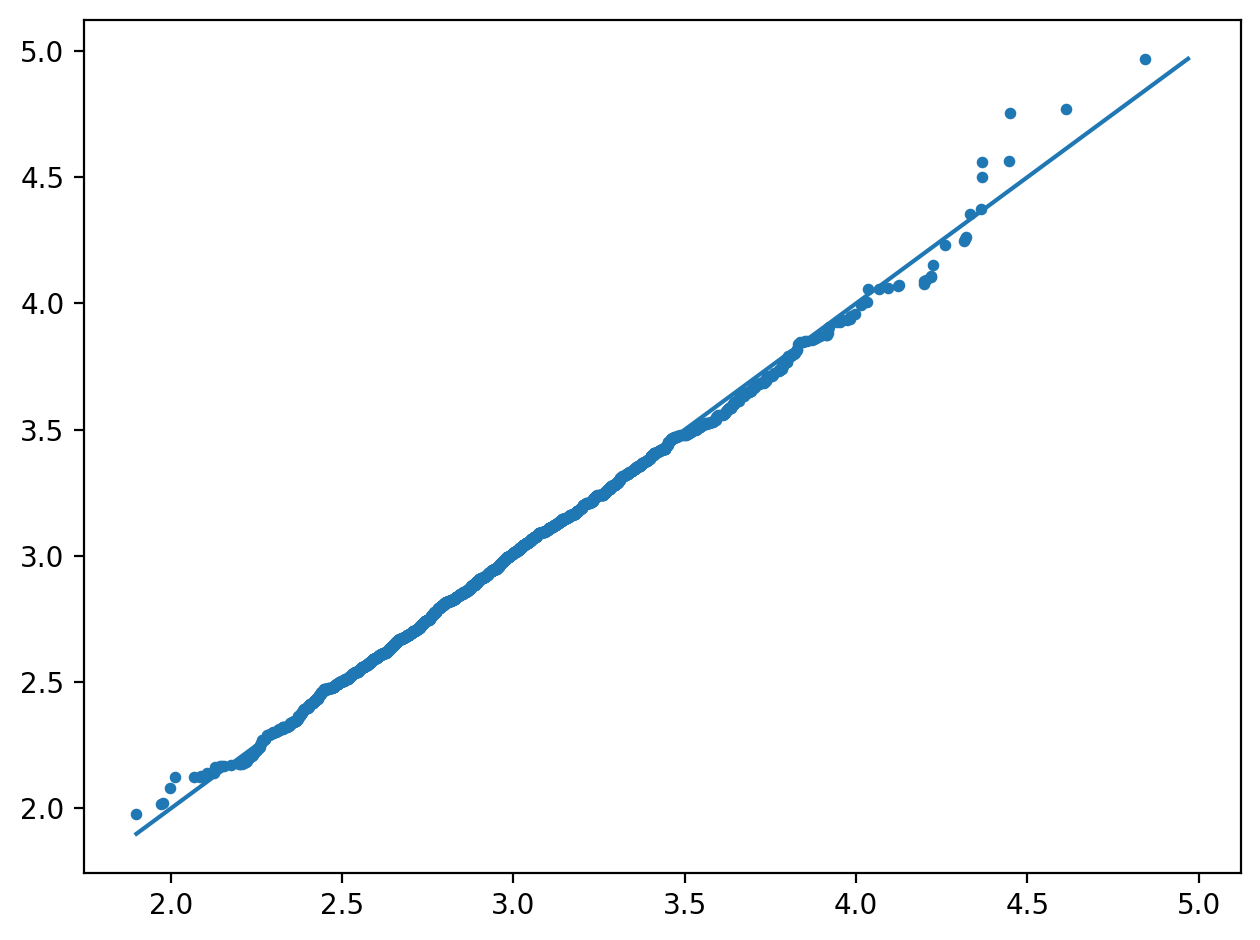}
\end{subfigure}

\vspace{2mm}

\begin{subfigure}[t]{0.48\linewidth}
  \includegraphics[width=\linewidth]{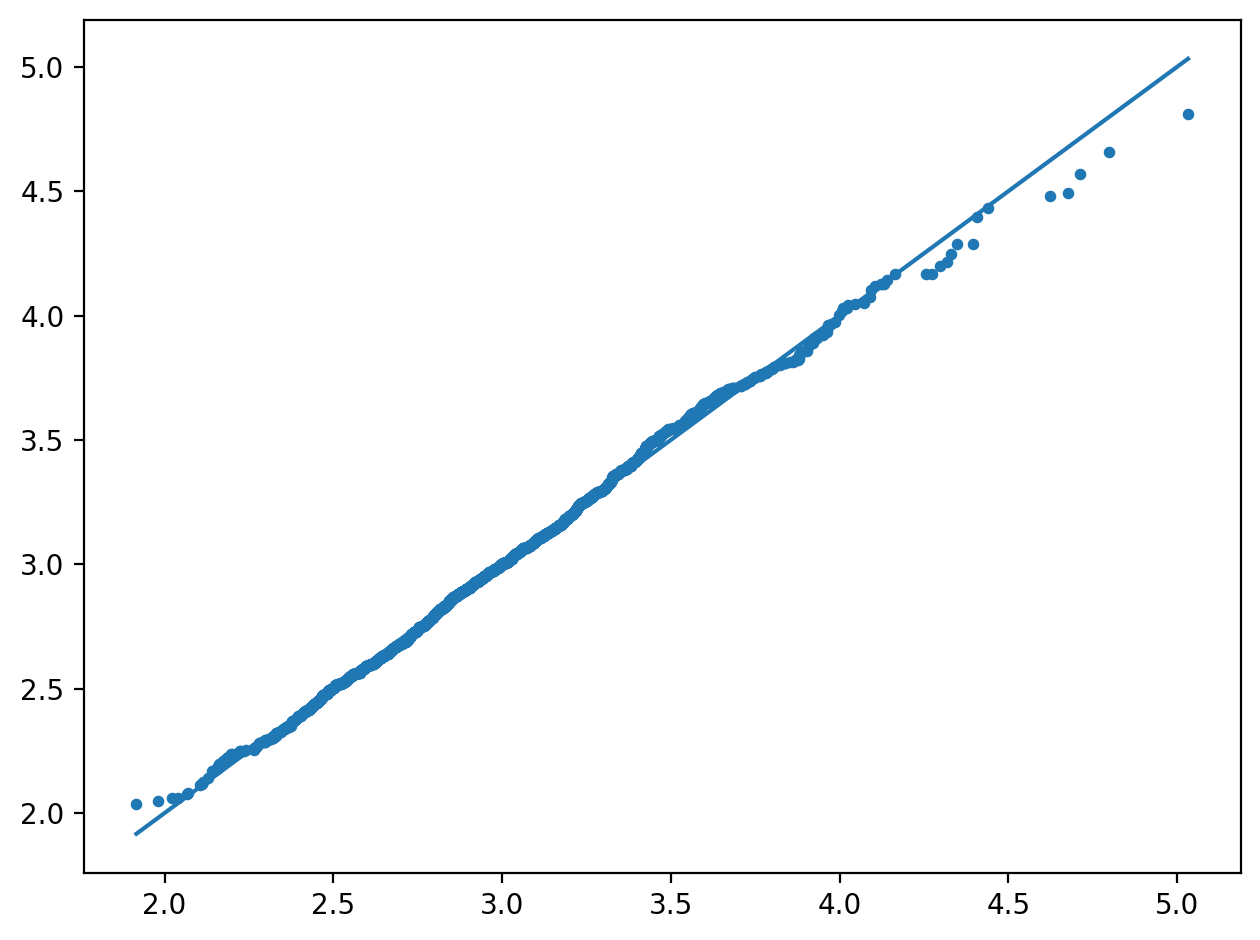}
\end{subfigure}\hfill
\begin{subfigure}[t]{0.48\linewidth}
  \includegraphics[width=\linewidth]{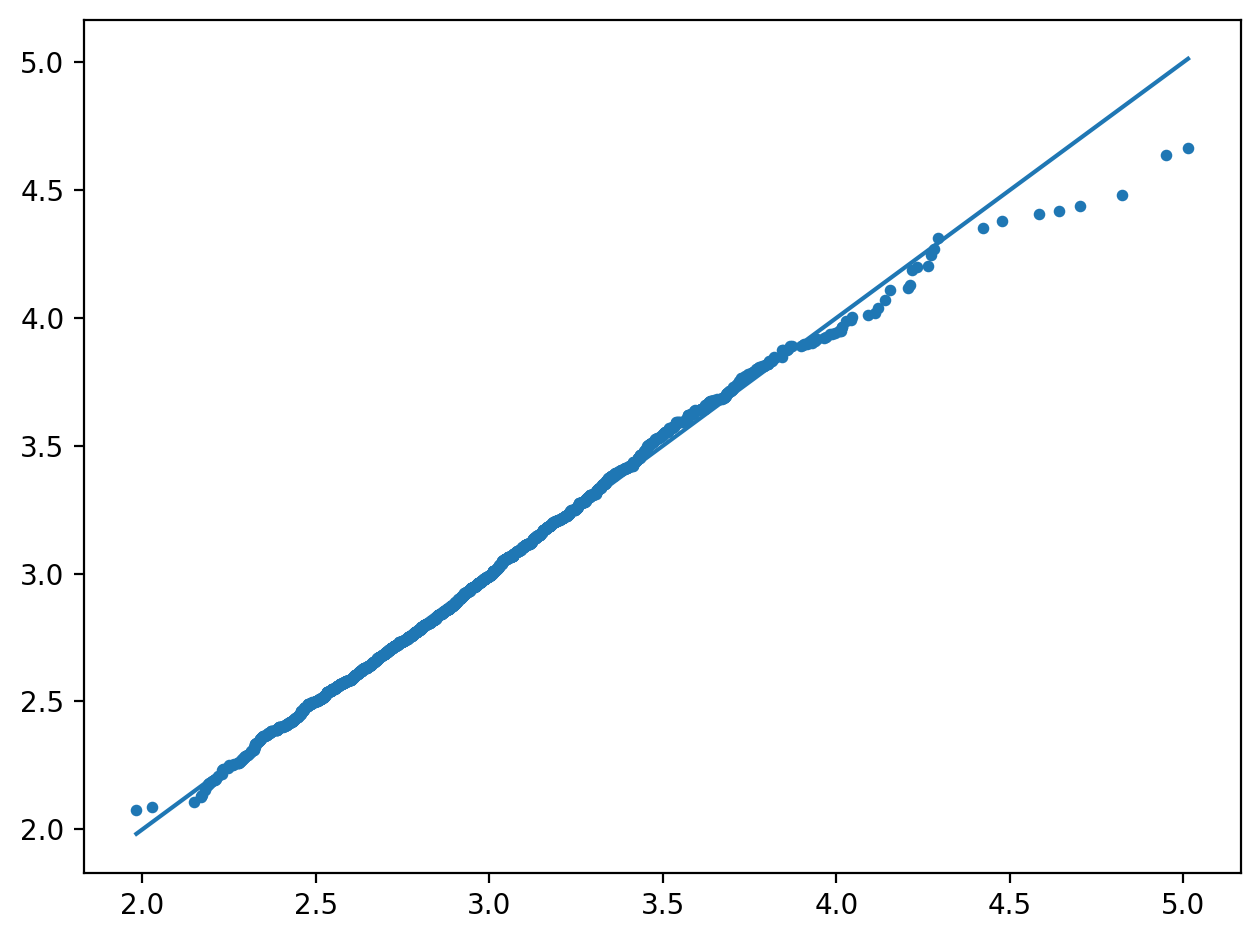}
\end{subfigure}

\vspace{2mm}

\begin{subfigure}[t]{0.48\linewidth}
  \includegraphics[width=\linewidth]{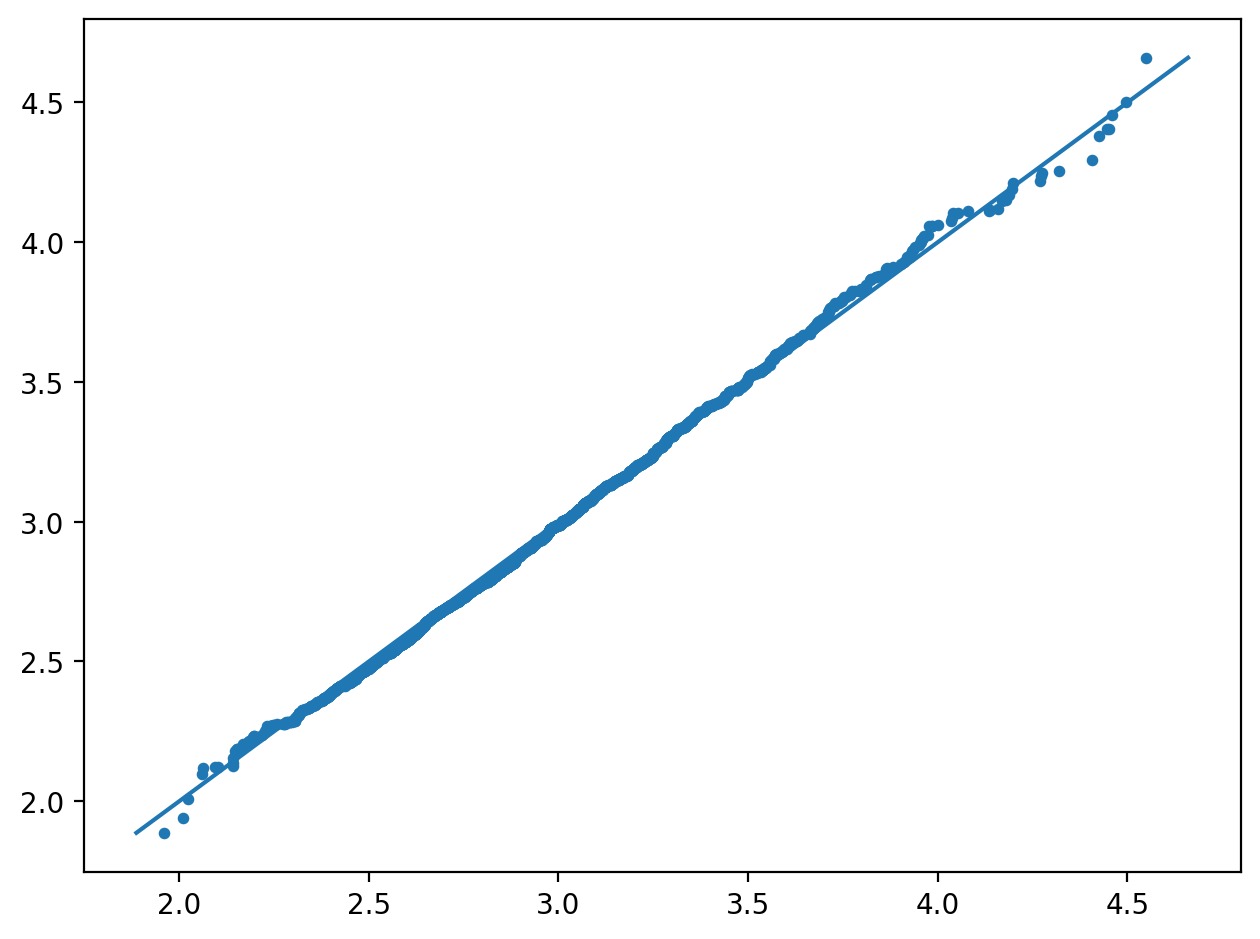}
\end{subfigure}\hfill
\begin{subfigure}[t]{0.48\linewidth}
  \includegraphics[width=\linewidth]{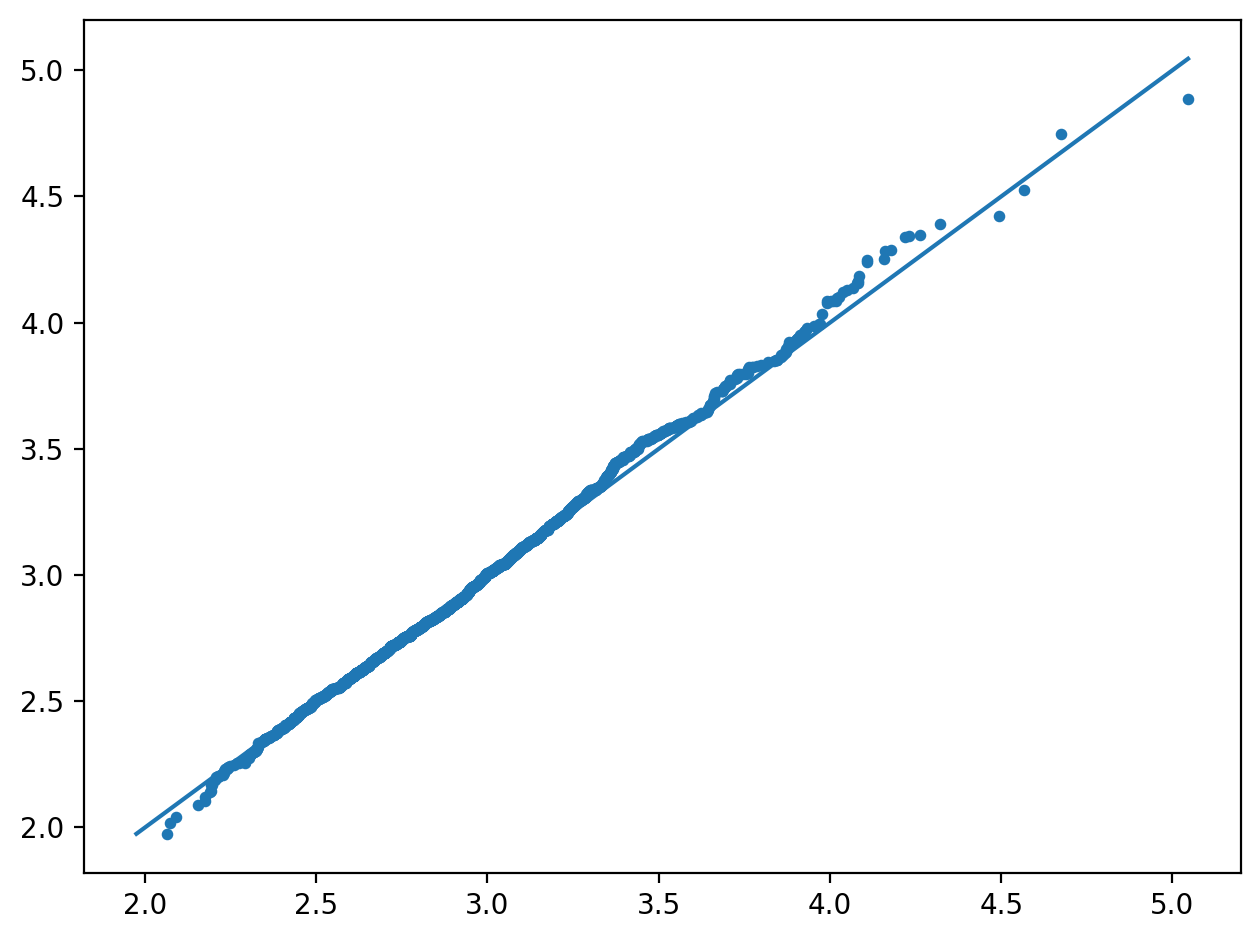}
\end{subfigure}

\caption*{\footnotesize $t = 12$}
\end{minipage}
\caption{
QQ plots for the random geometric graph network.
Left panels correspond to $t=8$ and right panels correspond to $t=12$.
Rows correspond to $n=\{100,200,400\}$ and columns correspond to
$\rho=\{0.6,0.8\}$ with $p=200$ and $R=2000$.
The horizontal axis reports the empirical distributions of $T_n$,
and the vertical axis reports the corresponding Gaussian distributions of $T_z$.
}
\label{fig:RGG_QQ_nu4_nu8}
\end{figure}

\begin{figure}[htbp]
  \centering

  \begin{subfigure}[t]{0.32\linewidth}
    \centering
    \includegraphics[width=\linewidth]{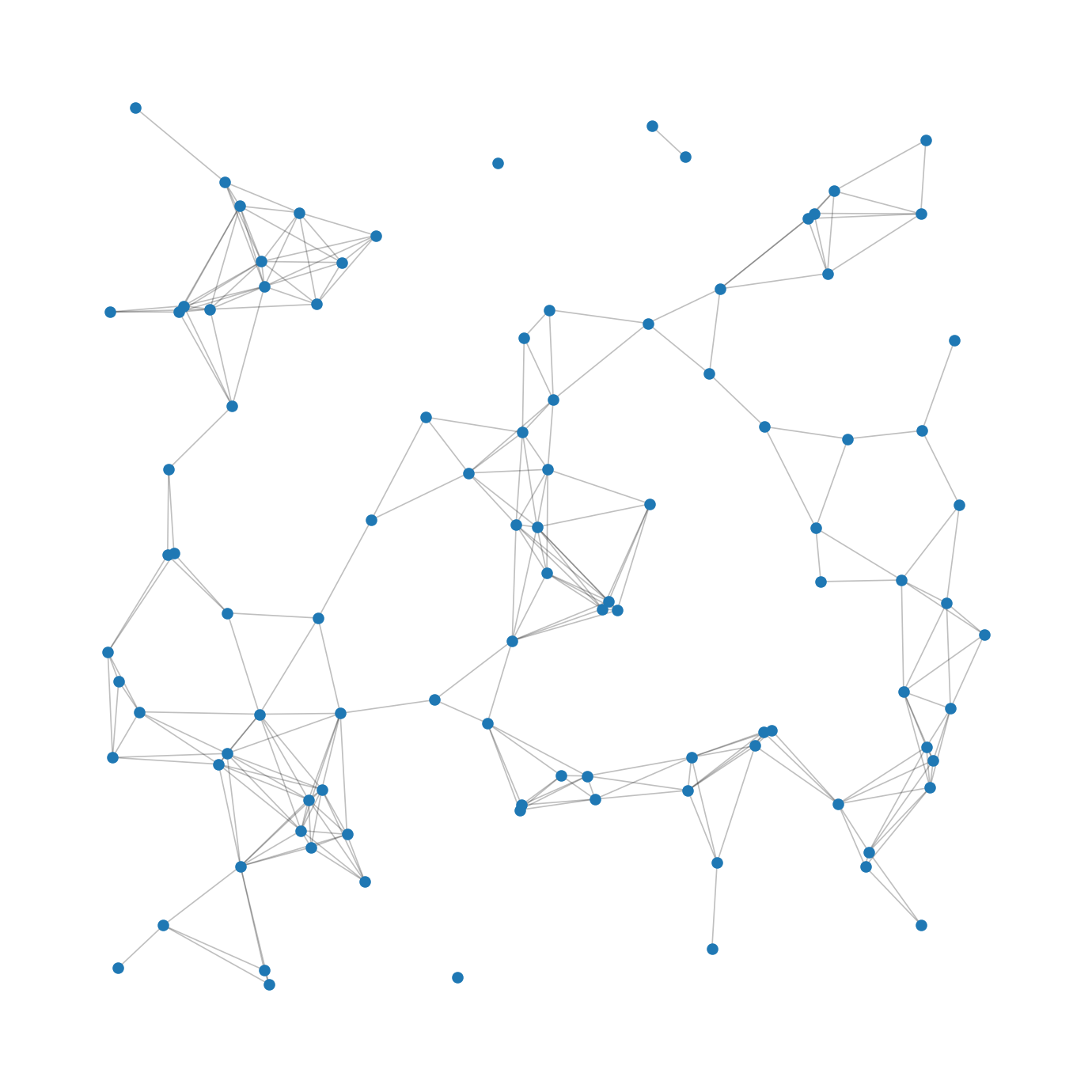}
    \caption{n=100}
  \end{subfigure}\hfill
  \begin{subfigure}[t]{0.32\linewidth}
    \centering
    \includegraphics[width=\linewidth]{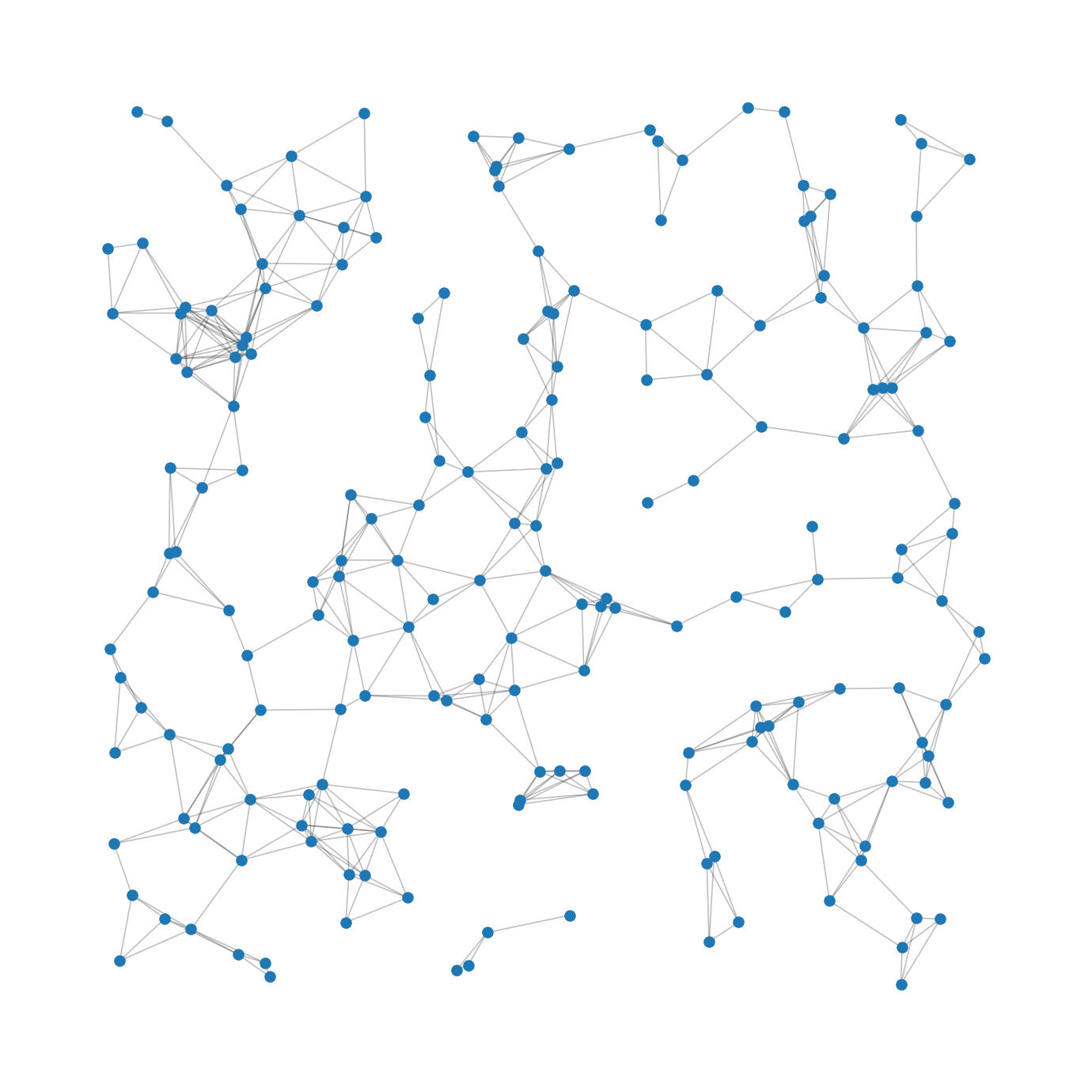}
    \caption{n=200}
  \end{subfigure}\hfill
  \begin{subfigure}[t]{0.32\linewidth}
    \centering
    \includegraphics[width=\linewidth]{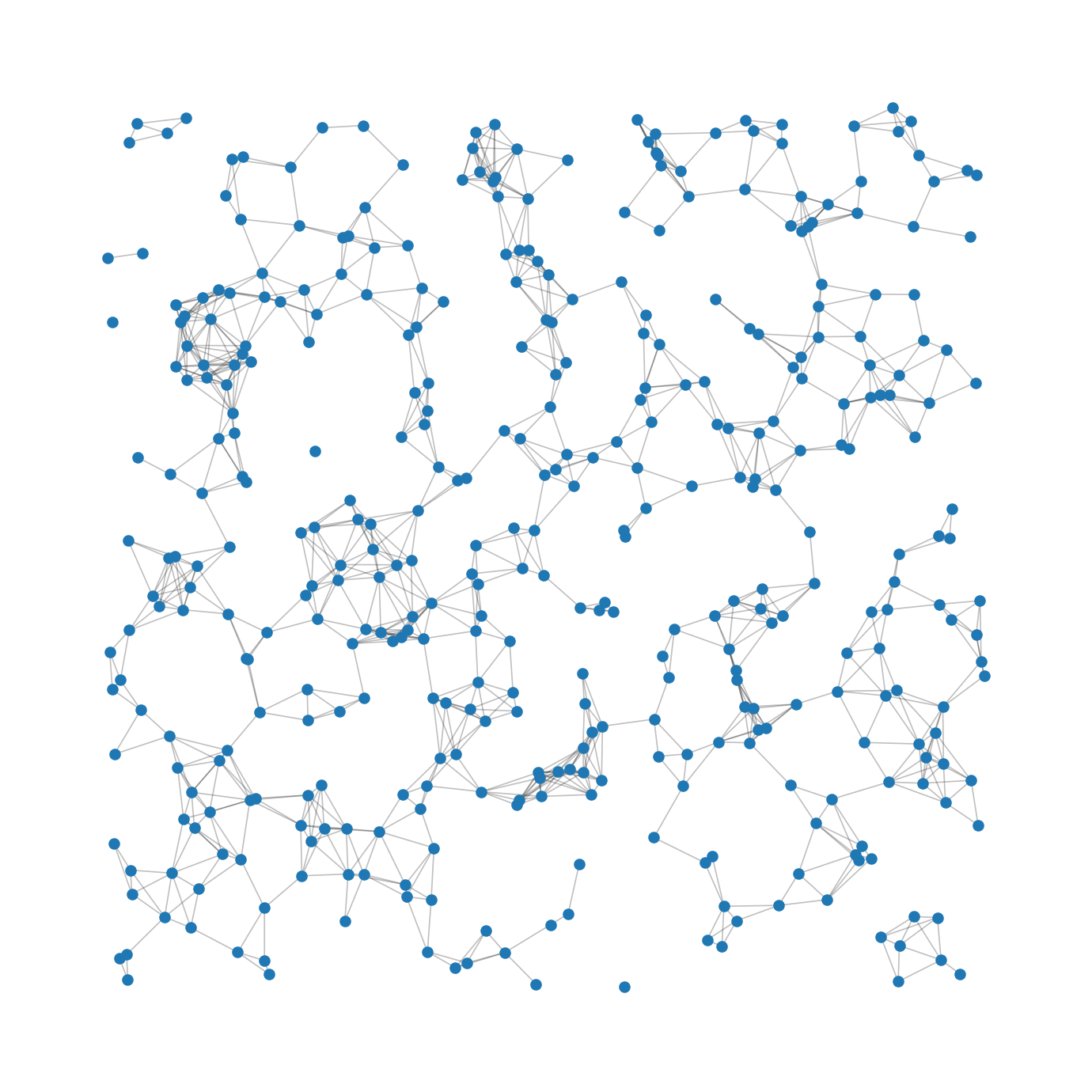}
    \caption{n=400}
  \end{subfigure}

  \caption{
Random geometric graphs with $\bar k=6$.
Panels (a)–(c) correspond to increasing number of points.
}
  \label{fig:rgg_1x3}
\end{figure}

The results are broadly consistent with our theoretical predictions.
First, the Gaussian approximation becomes less accurate when network dependence gets stronger, as reflected by larger values of $\gamma$. 
Second, network topology matters. 
The ring network generally produces a closer Gaussian approximation than the random geometric graph because the ring has a more homogeneous and sparser local structure. 
In contrast, the random geometric graph introduces degree heterogeneity and more irregular shell sizes, which slow down the rate of finite-sample approximation precision.

The improvement from increasing $n$ is visible, but not dramatic, especially for
the random geometric graph. 
This is consistent with the fact that approximation quality depends not only on sample size, but also on the interaction between dependence decay and network expansion. 
When the graph contains more local neighbors or more irregular neighborhoods, the effective dependence accumulated over network shells can remain non-negligible in moderate samples.

Tail behavior also plays an important role. 
Designs with $t=8$ display larger deviations from the Gaussian benchmark than designs with $t=12$, reflecting the effect of heavier-tailed shocks on high-dimensional maxima. 
Overall, the approximation performs better under weaker dependence, sparser and more regular
network structure, and lighter-tailed innovations in accordance with our theoretical predictions.

\section{Simulation Studies II: Statistical Inference}\label{sec:HAC}
In this section, we investigate the finite-sample performance of the proposed method of inference based on the network HAC covariance estimator. 
We evaluate the coverage accuracy of the HAC-based inference under different levels of network dependence and network density.
We focus on the more complicated random geometric graph network, and generate the data according to the DGP described in Section \ref{subsec:sim_dgp} with dimension $p=400$ and degrees of freedom $\nu=12$.  
For covariance estimation, we employ the feasible network HAC estimator
\(\widetilde{\Sigma}_n(b_n)\), which does not rely on knowledge of the expected value $\mu_0$.

To compute the HAC estimator, we follow the implementation strategy of
\cite{KMS2021}. 
Specifically, we employ the Parzen kernel
\[
w(x) =
\begin{cases}
1 - 6x^2 + 6x^3, & 0 \le x \le \tfrac{1}{2}, \\
2(1-x)^3, & \tfrac{1}{2} < x \le 1,\\
0, & \text{otherwise},
\end{cases}
\]
and choose the bandwidth according to
\[
b_n
=
c\frac{\log n}{\log(\mathrm{Avg.Deg.})}.
\]
Following the bandwidth specification suggested by \cite{KMS2021}, we allow the
constant \(c\) to vary across a range of values. 

We conduct sets of Monte Carlo simulations across a variety of dependence and network configurations, with \(\rho\in\{0.2,0.3,0.4,0.5\}\) and \(\bar k\in\{2,3,4,5\}\). 
For the bandwidth constant, we consider \(c\in\{1.2,1.3,1.4,1.5,1.6\}\). 
Among the values examined, \(c=1.5\) delivers the most stable coverage performance and is therefore used in the results reported in Table \ref{tab:rgg_cov_c15}.

\begin{table}[tbp]
\centering
\caption{Average network statistics and simulated coverage probabilities of the
95\% HAC-based confidence bands for bandwidth constant \(c=1.5\).}
\label{tab:rgg_cov_c15}
\begin{tabular}{cccccccc}
\toprule
\(\bar k\) & \(n\) & Diam. & Avg.Deg. &
\multicolumn{4}{c}{Simulated coverage} \\
\cmidrule(lr){5-8}
 &  &  &  & \(\rho=0.2\) & \(\rho=0.3\) & \(\rho=0.4\) & \(\rho=0.5\) \\
\midrule
2.0 & 200 & 7  & 1.75   & 0.975 & 0.973 & 0.978 & 0.975 \\
2.0 & 400 & 16 & 1.84   & 0.965 & 0.960 & 0.961 & 0.963 \\
2.0 & 800 & 10 & 1.9275 & 0.961 & 0.960 & 0.953 & 0.952 \\
\addlinespace
3.0 & 200 & 17 & 2.77   & 0.977 & 0.976 & 0.976 & 0.971 \\
3.0 & 400 & 18 & 2.90   & 0.966 & 0.966 & 0.962 & 0.956 \\
3.0 & 800 & 20 & 2.85   & 0.960 & 0.955 & 0.956 & 0.948 \\
\addlinespace
4.0 & 200 & 14 & 3.64   & 0.977 & 0.976 & 0.974 & 0.968 \\
4.0 & 400 & 42 & 3.845  & 0.966 & 0.964 & 0.954 & 0.944 \\
4.0 & 800 & 39 & 3.8075 & 0.957 & 0.954 & 0.950 & 0.942 \\
\addlinespace
5.0 & 200 & 13 & 4.48   & 0.979 & 0.980 & 0.973 & 0.964 \\
5.0 & 400 & 40 & 4.715  & 0.965 & 0.962 & 0.953 & 0.940 \\
5.0 & 800 & 65 & 4.84   & 0.964 & 0.956 & 0.946 & 0.932 \\
\bottomrule
\end{tabular}
\end{table}

This table reports the simulated coverage probabilities of
the proposed HAC-based confidence band for a range of dependence strengths
and network densities. Overall, the procedure delivers accurate finite-sample
inference, with coverage probabilities concentrated around the nominal
\(95\%\) level across all configurations considered.

Several patterns emerge from the table. First, the empirical coverage remains
remarkably stable over different network densities. Even when the expected
degree increases from \(\bar k=2\) to \(\bar k=5\), the coverage distortion is
generally small. 

Second, stronger dependence leads to a mild deterioration in coverage for most fixed network designs. For most combinations of
\(\bar k\)
 and \(n\), the coverage probability tends to
decrease as \(\rho\) increases from $0.2$ to $0.5$, reflecting the greater
difficulty of estimating the long-run covariance matrix under stronger network
dependence. Nevertheless, the resulting coverage remains close to the nominal
level in most cases.

Third, the finite-sample performance improves with network size. For
$n=800$, coverage probabilities are typically very close to \(95\%\), whereas
for $n=200$ the procedure exhibits slight over-coverage in some designs. This
pattern is consistent with the asymptotic theory and suggests that the proposed
HAC estimator provides a reliable approximation even in moderately sized
networks.

\section{Application: Heterogeneous Spillover Effects}\label{sec:application}
In this section, we revisit the network experiments studied by \cite{paluck2016changing}, \cite{AronowSamii2017}, and \cite{Leung2022}, where they examine the effects of an anti-conflict intervention on adolescent social norms related to antagonistic behaviors such as harassment, rumor spreading, social exclusion, and bullying. 

Focusing on average effects, \citet{Leung2022} found statistically insignificant spillover effects.
Since average effects aggregate potentially heterogeneous spillover effects across individuals, this finding may mask meaningful spillover effects present for particular subpopulations.
Motivated by this possibility, we investigate whether treatment spillovers exhibit systematic heterogeneity and whether significant conditional spillover effects emerge after conditioning on measures of network position and the social environment.



\subsection{Heterogeneous Spillover Effects: Estimation and Inference}
A researcher observes $(Y_i,D_i,W_i)_{i=1}^n$, where $Y_i$ is an outcome, $D_i$ is a binary treatment indicator, and $W_i$ is a covariate.
The outcome $Y_i$ indicates whether student $i$ reports wearing a wristband disseminated as
part of the program as a reward to students observed engaging in conflict-mitigating behavior, $D_i$ indicates whether the student $i$ is offered treatment, and $W_i$ captures measures of network position or the social environment.

In network settings, the outcome of unit $i$ is generally determined not only by its own treatment status but also by the treatment assignments of other units, that is, by the entire treatment vector $D=(D_i)_{i=1}^n$.
Following \citet{Leung2022}, we adopt the exposure mapping framework to provide a parsimonious characterization of both direct treatment and spillover exposures.
Let
\[
T_i=T(i,D,A)
\]
denote the exposure status of unit \(i\), taking values in the exposure space \(\mathcal{T}\subseteq\mathbb{R}\), where \(A\) denotes the observed network.

We can then index the potential outcomes by the exposure state, i.e., \(Y_i(t)\) for \(t \in \mathcal{T}\).
Our estimand of interest is the conditional average exposure effect
\begin{align*}
\tau(w;t,t')
=&
\mu(w,t)-\mu(w,t'),
\qquad\text{where}\\
\mu(w,t)
:=&
\mathbb E[Y_i(t)\mid W_i=w].
\end{align*}

\paragraph{Estimation.}
As in \citet{Leung2022}, we employ the inverse propensity score weighting (IPW) approach.
However, since our estimand $\tau(\,\cdot\,;t,t')$ is a nonparametric functional parameter rather than a scalar, we augment the conventional IPW estimator with kernel smoothing:
\begin{align*}
\hat\tau(w;t,t') =& \frac{ \sum_{i=1}^{n} K_{h}(W_i-w) Z_i(t,t') }{ \sum_{i=1}^{n} K_{h}(W_i-w) }, 
\qquad\text{where}
\\
Z_i(t,t') =& Y_i \left( \frac{\mathbf 1_i(t)}{\pi_i(t)} - \frac{\mathbf 1_i(t')}{\pi_i(t')} \right)
\qquad\text{and}\qquad
K_{h}(u)=h^{-1}K(u/h)
\end{align*}
with $K$ denoting the Gaussian kernel and $h$ denoting the bandwidth parameter.

To conduct statistical inference accounting for the first-order estimation effects of both the numerator and denominator in $\hat\tau(w,t,t')$, define the feasible score process
\[
\hat\psi_i(w;t,t')
=
\frac{
K_{h}(W_i-w)
}{
\hat f_W(w)
}
\left\{
Z_i(t,t')
-
\hat\tau(w;t,t')
\right\},
\]
where
$
\hat f_W(w)
=
\frac1n
\sum_{i=1}^{n}
K_{h}(W_i-w).
$
The centering term \(\hat\tau(w;t,t')\) accounts for the
first-order contribution of estimating the denominator of the
kernel ratio estimator.
We employ undersmoothing $h$ so that the smoothing bias is asymptotically
negligible relative to the stochastic error uniformly over the evaluation
points.
Let
\begin{align*}
\hat\psi_i(t,t')
&=
\Bigl(
\hat\psi_i(w_1;t,t'),
\dots,
\hat\psi_i(w_p;t,t')
\Bigr)^\top,
\\
\hat\tau(t,t')
&=
\Bigl(
\hat\tau(w_1;t,t'),
\dots,
\hat\tau(w_p;t,t')
\Bigr)^\top.
\end{align*}
The vector \(\hat\tau(t,t')\) summarizes the estimated conditional
spillover effects over the support of \(W_i\), while
\(\hat\psi_i(t,t')\) is used below for network HAC covariance
estimation and simultaneous inference.
\paragraph{Inference.}
Following the network HAC inference procedure developed in
Section~\ref{sec:inference}, we conduct simultaneous inference for the
vector of conditional spillover effects
\[
\hat\tau(t,t')
=
\Bigl(
\hat\tau(w_1;t,t'),
\dots,
\hat\tau(w_p;t,t')
\Bigr)^\top.
\]
Let
\[
\hat\psi_i(t,t')
=
\Bigl(
\hat\psi_i(w_1;t,t'),
\dots,
\hat\psi_i(w_p;t,t')
\Bigr)^\top
\]
denote the feasible score vector defined above, where each coordinate
accounts for the first-order effect of estimating the denominator of the
kernel ratio estimator. The covariance matrix of the first-order
stochastic component of
\[
\sqrt n
\Bigl(
\hat\tau(t,t')
-
\tau_{h}(t,t')
\Bigr)
\]
is estimated by the network HAC estimator
\[
\widetilde\Sigma_n(t,t')
=
\frac1n
\sum_{i=1}^{n}
\sum_{j=1}^{n}
\hat\psi_i(t,t')
\hat\psi_j(t,t')^\top
1\{\ell_A(i,j)\le b_n\},
\]
where
\[
\tau_{h}(t,t')
=
\Bigl(
\tau_{h}(w_1;t,t'),
\dots,
\tau_{h}(w_p;t,t')
\Bigr)^\top
\]
denotes the corresponding kernel-smoothed population target. We employ
the truncated kernel and the bandwidth \(b_n=2\), as recommended by
\citet{Leung2022}.

For each $\ell=1,\ldots,p$, let
$
\tilde\sigma_{\ell\ell}^2(t,t')
=
\widetilde\Sigma_{n,\ell\ell}(t,t')
$
denote the estimated variance of the \(\ell\)-th coordinate of
$
\sqrt n
\Bigl(
\hat\tau(t,t')
-
\tau(t,t')
\Bigr).
$
To obtain the simultaneous critical value, we simulate
\[
Z^*
\sim
N\!\left(
0,
\widetilde\Sigma_n(t,t')
\right)
\]
conditional on the data and compute
\[
c_{1-\alpha}
=
q_{1-\alpha}
\left(
\max_{1\le \ell\le p}
\left|
\frac{Z_\ell^*}
{\tilde\sigma_{\ell\ell}(t,t')}
\right|
\right),
\]
where \(q_{1-\alpha}(\cdot)\) denotes the conditional \((1-\alpha)\)-quantile.
The resulting \(1-\alpha\) confidence band for the conditional spillover function is
\[
\hat\tau(w_\ell;t,t')
\pm
\frac{
c_{1-\alpha}
\tilde\sigma_{\ell\ell}(t,t')
}{
\sqrt n
},
\qquad
\ell=1,\ldots,p.
\]

Under the linear-in-means model and complex contagion model considered in \cite{Leung2022}, network dependence decays exponentially with graph distance. Consequently, the score process underlying the kernel IPW estimator satisfies the dependence conditions required by the Gaussian approximation and network HAC inference results developed in Section~\ref{sec:inference}. We therefore apply the proposed simultaneous inference procedure to the conditional spillover estimator.

\subsection{Spillover Effects conditional on Socio-Demographic Homophily}

We explore heterogeneity in spillover effects along socio-demographic dimensions. For each student, we construct homophily measures based on the similarity between the student and his network neighbors in terms of race, gender, and grade. 

For race, students may belong to multiple racial categories and some observations contain missing or non-applicable entries. 
We define racial homophily as the proportion of observed neighbors whose race profile exactly matches that of student \(i\). Analogous measures are constructed for gender and grade homophily after excluding observations with missing values.

Figure~\ref{fig:homophily_dist} reports the distributions of the three homophily measures. Gender and grade homophily exhibit substantial mass points at one, indicating that many students are surrounded entirely by peers of the same gender or grade. Race homophily displays greater variation across individuals.
Because the three homophily measures are highly concentrated and partially overlapping, we summarize their common variation using the first principal component \(W_i\).

\begin{figure}[htbp]
\centering

\begin{subfigure}{0.32\textwidth}
    \centering
    \includegraphics[width=\linewidth]{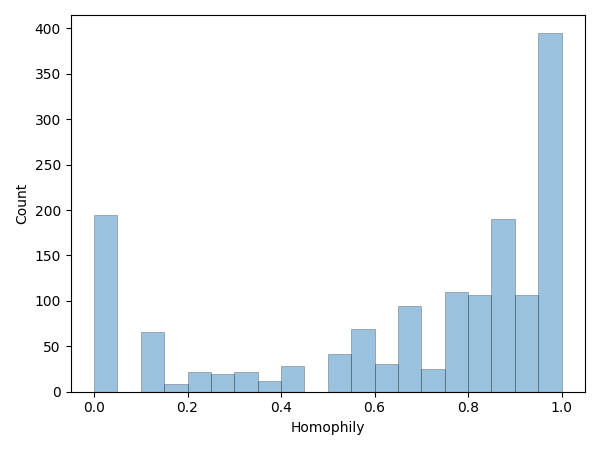}
    \caption{Race homophily}
\end{subfigure}
\hfill
\begin{subfigure}{0.32\textwidth}
    \centering
    \includegraphics[width=\linewidth]{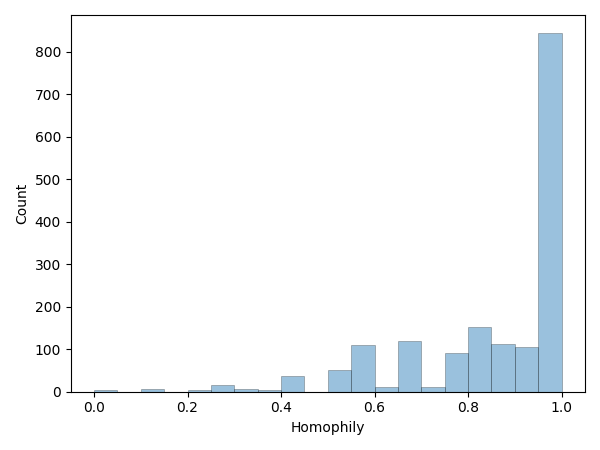}
    \caption{Gender homophily}
\end{subfigure}
\hfill
\begin{subfigure}{0.32\textwidth}
    \centering
    \includegraphics[width=\linewidth]{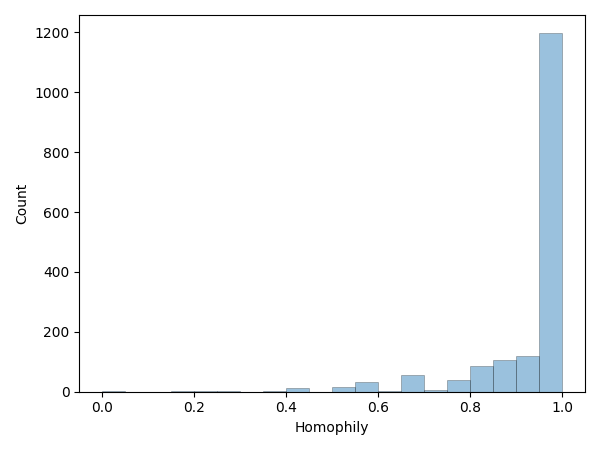}
    \caption{Grade homophily}
\end{subfigure}

\caption{Distribution of homophily measures.}
\label{fig:homophily_dist}
\end{figure}


To avoid unstable kernel estimates in regions with sparse support, we restrict the evaluation range of the homophily index to \(w \in [-1,1]\). All reported spillover curves and confidence bands are constructed over this range. We then conduct inference on the heterogeneous spillover effect conditional on the social homophily index \(W_i=w\). Figure~\ref{fig:cate_homo} reports the estimated spillover curve together with the 95\% confidence band.

\begin{figure}
    \centering
    \includegraphics[width=1\linewidth]{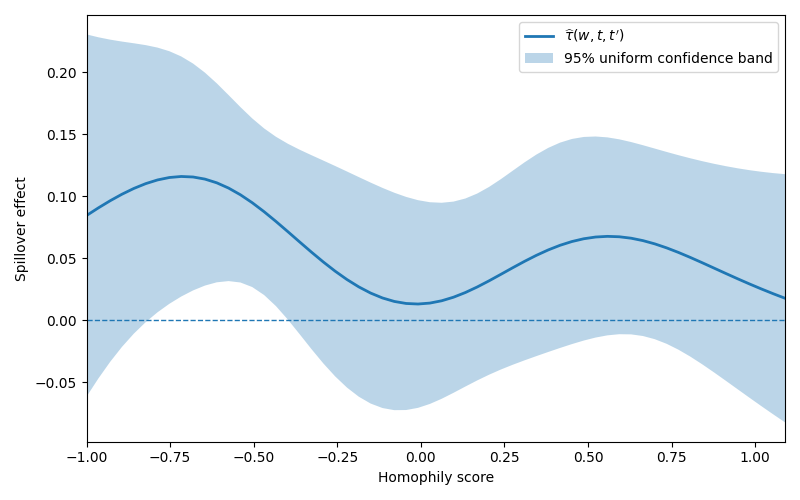}
    \caption{Spillover effects conditional on social homophily score.}
    \label{fig:cate_homo}
\end{figure}

The estimated spillover effect varies across the support of the homophily index, suggesting the presence of spillover effect heterogeneity. The curve is positive over most of the
reported range and reaches a local peak around \(W_i \approx -0.7\), but it does not
display a monotonic relationship with the homophily index. Although the point estimates
are positive for most evaluation points, the confidence band contains zero
over much of the support.

The confidence band excludes zero over a narrow region centered around \(W_i\approx -0.6\), providing evidence of positive spillover effects for students with these socio-demographic characteristics. Outside this region, the data do not provide sufficient evidence to distinguish the spillover effect from zero at the 95\% confidence level.

Overall, the results suggest that spillover effects are not constant across students and may depend on the socio-demographic composition of their friendship networks. However, the evidence for positive spillovers is concentrated in a localized region of the homophily index rather than being present uniformly across the population.

\section{Summary and Discussions}\label{sec:summary}
In this paper, we developed a Gaussian approximation theory for sums of high-dimensional network-dependent random vectors. Under suitable conditions on the decay of network dependence and the underlying network topology, we established Gaussian approximation results for the maximum norm and characterized admissible dimensionality under both finite-moment and sub-Weibull regimes. We also proposed a high-dimensional network HAC estimator and established its asymptotic convergence properties in high dimensions. Together, these theoretical results provide a foundation for asymptotically valid simultaneous inference for high-dimensional random vectors in the presence of network dependence.

Our simulation studies indicate that the proposed Gaussian approximation and inference procedure perform well in finite samples. We further illustrated the practical usefulness of the methodology through an empirical application examining how spillover effects vary with a continuous measure of network homophily, where the proposed procedure was used to construct confidence bands for the heterogeneous conditional spillover-effect function.

Gaussian approximation for high-dimensional vectors has proven useful in a
wide range of empirical applications. Beyond inference on the conditional
average treatment effect (CATE) function, as illustrated in our empirical
application, the proposed method is applicable to a broad class of functional
treatment-effect parameters, including continuous treatment effect functions, among others.

Beyond treatment-effect settings, high-dimensional inference methods have also
proven useful for inference on structural parameters that are partially
identified by conditional moment equalities or inequalities. Our results are
therefore expected to facilitate applications to inference in structural
network models as well.

\bibliographystyle{apalike}   
\bibliography{reference}
\clearpage
\appendix
\section*{Appendix}
\section{Proofs for the Main Theorems in Section \ref{sec:GA}}{\label{sec:proof}}
This section presents proofs for the two main theorems presented in section \ref{sec:GA}.

The proof strategy is as follows. 
We first localize the process by replacing each observation $X_{n,i}$ with its $m$-localized version, which depends only on shocks within graph distance $m$. 
We then partition the network into clusters and remove observations near cluster boundaries through an $m$-buffer construction. 
After localization, observations belonging to the interiors of different clusters depend on disjoint sets of shocks and are therefore independent across clusters. 
This allows us to apply the high-dimensional Gaussian approximation result for independent observations \citep{CCK2013} to the obtained independent cluster sums. 
Finally, the proof is completed by controlling the truncation error, the buffer contribution, and the covariance approximation error.

This section is organized as follows.
Sections \ref{sec:truncation}--\ref{sec:gaussian_to_gaussian} present auxiliary lemmas.
For ease of reading, we relegate proofs of these auxiliary lemmas to Appendix \ref{sec:appendix}.
Then, Theorems \ref{th:GA_q} and \ref{th:GA_sub} are proven in Section \ref{sec:proof_last}.

\subsection{Locality Truncation}\label{sec:truncation}
Recall the neighborhood $\mathcal N_n(i,m):=\{u:\,d_n(i,u)\le m\}$ and the $m$-localized $X_{n,i}^{(m)}$ defined in \eqref{eq:localize}.
Define the localized sum
\(
 T_{X,m}=\sum_{i\in V_n}X_{n,i}^{(m)}.
\)
To measure the truncation error under the $L_\infty$ norm, define
\begin{equation}{\label{eq:omega}}
\omega_{n,i,u,q}
:= \big\|\,|X_{n,i}-X_{n,i}^{*(u)}|_\infty\,\big\|_q
\qquad\text{and}\qquad
\Omega_{n,q}(m)=\sup_{i\in V_n}\sum_{u:d_n(i,u)\ge m} \omega_{n,i,u,q}.
\end{equation}
\begin{lemma}[Truncation Error under $L_{\infty}$-Dependence]\label{lem:trunc_omega}
For all $m\ge 0$,
\[
\sup_{ i\in V_n}
\big\|\,|X_{n,i}-X_{n,i}^{(m)}|_\infty\big\|_q
\;\le\;
\Omega_{n,q}(m)
\]
holds, and consequently,
\[
\Big\|\,|T_X-T_{X,m}|_\infty\Big\|_q
\;\le\;
 n\,\Omega_{n,q}(m),
\]
where $T_X:=\sum_{i\in V_n} X_{n,i}$ and $T_{X,m}:=\sum_{i\in V_n} X_{n,i}^{(m)}$.
Therefore, we have
\[f_1(y):=\mathbb P(|T_X-T_{X,m}|_\infty\ge y)\le y^{-q}n^q\Omega_{n,q}^q(m)\le y^{-q}n^{q}p^q\Psi_{n,q}^q(m).\]
\end{lemma}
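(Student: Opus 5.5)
The first claim is proved by a martingale (telescoping) decomposition over the shocks outside the $m$-neighborhood of $i$, combined with the coupling definition of $\omega_{n,i,u,q}$. Fix $i$ and enumerate the nodes with $d_n(i,u)>m$ as $u_1,u_2,\ldots,u_K$. (Whether the sum in $\Omega_{n,q}(m)$ starts at $m$ or $m+1$ only loosens the bound, so I will not track it.) Set
\[
\mathcal F_k=\sigma\{\varepsilon_{n,u}: d_n(i,u)\le m\}\vee\sigma\{\varepsilon_{n,u_1},\ldots,\varepsilon_{n,u_k}\},
\]
so that $\mathcal F_0=\mathcal F_{i,m}$ and $\mathcal F_K$ is the full shock $\sigma$-field. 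Then
\[
X_{n,i}-X_{n,i}^{(m)}=\sum_{k=1}^K \bigl\{\mathbb E(X_{n,i}\mid\mathcal F_k)-\mathbb E(X_{n,i}\mid\mathcal F_{k-1})\bigr\}.
\]
Because the shocks are independent, the $k$-th increment equals $\mathbb E\bigl(X_{n,i}-X_{n,i}^{*(u_k)}\mid\mathcal F_k\bigr)$. The reason is that replacing $\varepsilon_{n,u_k}$ by an independent copy and then integrating over the copy yields exactly $\mathbb E(X_{n,i}\mid\mathcal F_{k-1})$ once we condition on $\mathcal F_k$.

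Next I bound each increment coordinatewise in sup-norm. Since $|\mathbb E(Y\mid\mathcal F)|_\infty\le \mathbb E(|Y|_\infty\mid\mathcal F)$ componentwise, conditional Jensen gives
\[
\bigl\|\,|\mathbb E(X_{n,i}-X_{n,i}^{*(u_k)}\mid\mathcal F_k)|_\infty\bigr\|_q\le \omega_{n,i,u_k,q}.
\]
Minkowski's inequality over $k$ then yields $\||X_{n,i}-X_{n,i}^{(m)}|_\infty\|_q\le\sum_{u:d_n(i,u)>m}\omega_{n,i,u,q}\le\Omega_{n,q}(m)$, and taking the supremum over $i$ gives the first claim. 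Using Minkowski here rather than a Burkholder bound is crude but suffices for the stated inequality.

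The second claim follows from the triangle inequality:
\[
\bigl\||T_X-T_{X,m}|_\infty\bigr\|_q\le\sum_{i\in V_n}\bigl\||X_{n,i}-X_{n,i}^{(m)}|_\infty\bigr\|_q\le n\,\Omega_{n,q}(m).
\]
For the tail bound, Markov's inequality applied to the $q$-th moment gives $f_1(y)\le y^{-q}n^q\Omega^q_{n,q}(m)$. It remains to show $\Omega_{n,q}(m)\le p\,\Psi_{n,q}(m)$. Bound $|\cdot|_\infty\le\sum_j|\cdot_j|$, so that
\[
\omega_{n,i,u,q}\le\sum_{j=1}^p\delta_{n,i,u,q,j}\le\sum_j\delta_{n,d_n(i,u),q,j}.
\]
Grouping the nodes $u$ by shell $s=d_n(i,u)$ gives
\[
\sum_{u:d_n(i,u)\ge m}\omega_{n,i,u,q}\le\sum_j\sum_{s\ge m}N_n^\partial(i,s)\,\delta_{n,s,q,j}\le p\max_j\Delta_{n,m,q,j}=p\,\Psi_{n,q}(m).
\]

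The only delicate step is identifying each martingale increment with a conditional expectation of the coupled difference. This requires $\varepsilon'_{n,u_k}$ to be independent of all the shocks, and it requires handling the case where $K$ is infinite; the latter does not arise here since $V_n$ is finite. Everything else is a routine application of the triangle, Jensen, and Markov inequalities.
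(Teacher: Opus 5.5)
Your proposal is correct and follows essentially the same route as the paper: a Doob martingale decomposition over the shocks outside $\mathcal N_n(i,m)$, identification of each increment with $\mathbb E(X_{n,i}-X_{n,i}^{*(u_k)}\mid\mathcal F_k)$, conditional Jensen plus Minkowski, the triangle inequality over nodes, and then Markov's inequality. Your explicit derivation of $\Omega_{n,q}(m)\le p\,\Psi_{n,q}(m)$ uses $|v|_\infty\le\sum_j|v_j|$ and groups the nodes $u$ by shell; this fills in the step that the paper simply attributes to the definition of $\Psi_{n,q}(m)$.
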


\noindent
See Appendix \ref{sec:lem:trunc_omega} for a proof of this lemma.

\subsection{Buffer Control}
For $m\ge 0$, let the $m$-buffer within cluster $\mathcal C_g$ be defined by
\[
\mathcal B_g(m)
:=
\{i\in\mathcal C_g : d_n(i,\partial\mathcal C_g)\le m\},
\qquad\text{and let }\qquad
\mathcal B(m):=\bigcup_{g=1}^{G_n}\mathcal B_g(m).
\]
Let $\mathcal{C}_g^\circ\subset \mathcal{C}_g$ be the $m$-interior (after discarding the $m$-buffer $\mathcal{B}_g(m)$) defined by
\[
\mathcal{C}_g^\circ=\{i\in \mathcal{C}_g, i \notin \mathcal{B}_g(m)\}.
\] 
For $l_g:=|\mathcal{C}_g|$, we have $l_g \asymp l_n$ for all $g$ by Assumption \ref{ass:clusters}, and $n=\sum_{g=1}^{G_n} l_g$.
We then write
\begin{equation}{\label{eq:MG}}
M_g:=|\mathcal{C}_g^\circ|,\qquad n^\circ:=\sum_{g=1}^{G_n} M_g,\qquad\text{and\qquad}
\partial_n:=1-\frac{n^\circ}{n}=\frac{|\mathcal{B}(m)|}{n}.
\end{equation}
Define the full and localized sums of original and truncated random variables
\begin{equation}{\label{eq:tym}}
Y_g:=\sum_{i\in \mathcal{C}_g^\circ} X_{n,i},
\quad
Y_{g,m}:=\sum_{i\in \mathcal{C}_g^\circ} X_{n,i}^{(m)},
\quad
T_Y:=\sum_{g=1}^{G_n} Y_g,
\quad\text{and}\quad
T_{Y,m}:=\sum_{g=1}^{G_n} Y_{g,m}.
\end{equation}

The next lemma shows the bound for $f_2(y):=\mathbb P (|T_{X,m}-T_{Y,m}|_\infty>y) $ under two regimes.
  
\begin{lemma}{\label{lem:buftail}}${}$\\

\noindent
(i) Under Assumption \ref{ass:decay rate}, we have 

\[
f_2^*(y):=\mathbb P(|T_{X,m}-T_{Y,m}|_\infty\ge y)\le C_q\,\frac{pU_B(m)^{q/2}\Psi_{n,q}^q(0)}{y^{q}};
\]
\noindent
(ii) Under Assumption \ref{ass:subweibull}, we have
\[
 f_2^\diamond(y):=\mathbb P(|T_{X,m}-T_{Y,m}|_\infty\ge y)
\le
2p\exp\!\left(
-C_\beta\left(\frac{y}{\sqrt{U_B(m)}\,\Phi_{n,\psi_{\nu}}}
\right)^\beta\right),
\]
where $\beta=2/(1+2\nu)$,
\[U_B(m):=|\mathcal U_B(m)|,
\qquad\text{and}\qquad
\mathcal U_B(m)
:=\bigcup_{i\in\mathcal B(m)} \mathcal N_n(i,m)
\subset \bigcup_{g=1}^{G_n}\{u:\ d_n(u,\partial \mathcal{C}_g)\le 2m\}.
\]
\end{lemma}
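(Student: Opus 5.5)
The difference $T_{X,m}-T_{Y,m}=\sum_{i\in\mathcal B(m)}X^{(m)}_{n,i}$ is a sum over the buffer set of localized observations. Each of these depends only on shocks in $\mathcal N_n(i,m)\subset\mathcal U_B(m)$, so the whole sum $S:=\sum_{i\in\mathcal B(m)}X^{(m)}_{n,i}$ is a measurable function of the at most $U_B(m)$ independent shocks $\{\varepsilon_{n,u}:u\in\mathcal U_B(m)\}$. The plan is to bound each coordinate $S_j$ via a martingale-difference (Burkholder) decomposition over these shocks, and then take a union bound over $j=1,\dots,p$. The union bound produces the factor $p$ in (i) and $2p$ in (ii).

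\emph{Step 1: the martingale decomposition.} Enumerate $\mathcal U_B(m)=\{u_1,\dots,u_K\}$ with $K=U_B(m)$, and let $\mathcal H_k=\sigma(\varepsilon_{n,u_1},\dots,\varepsilon_{n,u_k})$. Since $\mathbb E X^{(m)}_{n,i}=0$, write $S_j=\sum_{k=1}^K D_{k,j}$ with $D_{k,j}=\mathbb E(S_j\mid\mathcal H_k)-\mathbb E(S_j\mid\mathcal H_{k-1})$. By the standard coupling argument (as in \citet{wu2005nonlinear}, \citet{ZW2017}), $\|D_{k,j}\|_q\le\|S_j-S_j^{*(u_k)}\|_q$. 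Conditional expectation is a contraction and commutes with replacing $\varepsilon_{n,u}$, so
\[
\|S_j-S_j^{*(u_k)}\|_q\le\sum_{i\in\mathcal B(m)}\|X_{n,ij}-X_{n,ij}^{*(u_k)}\|_q\le\sum_{s\ge0}N_n^\partial(u_k,s)\,\delta_{n,s,q,j}\le\Theta_{n,q,j}\le\Psi_{n,q}(0).
\]
Here the count of $i$ at distance $s$ from $u_k$ is bounded by $N_n^\partial(u_k,s)$, using the symmetry of $d_n$ and the fact that $\delta$ is a supremum over $i$. This uniform-in-$u$ bound is the main technical point: the sup over $i$ in the definition of $\Delta$ must be read with the roles of $i$ and $u$ swapped, which is legitimate because shells are symmetric in the undirected graph.

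\emph{Step 2: the finite-moment case (i).} Burkholder's inequality gives
\[
\|S_j\|_q\le C_q\Big\|\sum_k D_{k,j}^2\Big\|_{q/2}^{1/2}\le C_q\Big(\sum_k\|D_{k,j}\|_q^2\Big)^{1/2}\le C_q\sqrt{U_B(m)}\,\Psi_{n,q}(0).
\]
Markov's inequality then bounds $\mathbb P(|S_j|\ge y)\le C_q U_B(m)^{q/2}\Psi^q_{n,q}(0)y^{-q}$, and summing over $j$ yields $f_2^*$.

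\emph{Step 3: the sub-Weibull case (ii).} Use the same decomposition with Burkholder's constant tracked: $\|S_j\|_q\le C\sqrt q\,\sqrt{U_B(m)}\,\Theta_{n,q,j}\le C q^{1/2+\nu}\sqrt{U_B(m)}\,\Phi_{n,\psi_\nu}$ for all $q\ge2$. A moment bound of the form $\|S_j\|_q\lesssim q^{1/\beta}K$ with $1/\beta=(1+2\nu)/2$ is equivalent, by optimizing Markov's inequality over $q$ (choosing $q\asymp (y/(eK))^{\beta}$), to the tail bound $\mathbb P(|S_j|\ge y)\le 2\exp(-C_\beta(y/K)^\beta)$ with $K=\sqrt{U_B(m)}\Phi_{n,\psi_\nu}$. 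A union bound over $j$ gives $f_2^\diamond$. The containment $\mathcal U_B(m)\subset\bigcup_g\{u:d_n(u,\partial\mathcal C_g)\le2m\}$ follows from the triangle inequality.
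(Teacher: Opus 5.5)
Your proposal is correct and follows essentially the same route as the paper. The paper packages your Step 1 martingale/Burkholder bound as the auxiliary Lemma \ref{lem:shock-moment}, applied with $I=\mathcal B(m)$ and $\mathcal U=\mathcal U_B(m)$; it then uses Markov's inequality in the finite-moment case, the moment-to-tail conversion in the sub-Weibull case, and a union bound over $j$. Your explicit justifications of the uniform-in-$u$ bound $\sum_{i}\delta_{n,i,u,q,j}\le\Theta_{n,q,j}\le\Psi_{n,q}(0)$ (via shell symmetry) and of the coupling contraction for the localized variables $X^{(m)}_{n,i}$ fill in steps the paper leaves implicit.
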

\noindent
See Appendix \ref{sec:lem:buftail} for a proof of this lemma.

\subsection{Gaussian Approximation in Independent Clusters}
Recall the independent shocks $\varepsilon_n:=\{\varepsilon_{n,u}\}_{u\in V_n}$, and that
$X_{n,ij}$ is measurable with respect to $\varepsilon_n$ for each $i\in  V_n$ and for each coordinate $1\le j\le p$.
We obtain independent clusters $\mathcal{C}_1^\circ,\dots,\mathcal{C}_{G_n}^\circ$ (after truncation) such that the collections $\{X_{n,i}^{(m)}:i\in \mathcal{C}_g^\circ\}$ are independent across $g$. 
This follows because any two nodes belonging to distinct cluster interiors are separated by more than $2m$, so their $m$-neighborhoods are disjoint.

Let $Z_{g} \sim N(0,M_g B_g)$ and $Z_{g,m} \sim N(0,M_g\tilde B_g)$ be independent draws across $g$, where $M_g$ is interior cluster size defined in \eqref{eq:MG}, and the covariance matrices, $B_g$ and $\tilde B_g$, are given by 
\[
B_g=(b_{g,ij})^p_{i,j=1}=\mathrm{Cov}(Y_g/\sqrt{M_g})
\qquad\text{and}\qquad
\tilde B_g =(\tilde b_{g,ij})^p_{i,j=1}=\mathrm{Cov}(Y_{g,m}/\sqrt{M_g}),
\]
respectively.
Write $T_{Z}=\sum_{g=1}^{G_n} Z_g$ and $T_{Z,m}=\sum_{g=1}^{G_n} Z_{g,m}$.
\begin{lemma}
\label{lem:cluster-ga}
Let $D=\mathrm{diag}(d_{11},\dots,d_{pp})$ be a diagonal matrix.
Assume that there exist constants $c>0$ and $0<c_1<c_2$ such that
\[
c<\min_{1\le j\le p} d_{jj},
\qquad
c_1 \le \tilde b_{g,(jj)}/d_{jj} \le c_2,
\quad 1\le j\le p, \quad 1\le g\le G_n.
\]
Assume also that $\Psi_{n,q}(0)<\infty$ for some $q\ge4$.
Then, for all $\lambda\in(0,1)$,
\[h(\lambda,u_m(\lambda)):=
\sup_{t\in\mathbb R}
\Big|
\mathbb P\!\Big(
\big|D^{-1/2}T_{Y,m}/\sqrt n\big|_\infty \le t
\Big)
-
\mathbb P\!\Big(
\big|D^{-1/2}T_{Z,m}/\sqrt n\big|_\infty \le t
\Big)
\Big|
\]
\[
\le
C\Bigg[
G_n^{-1/8}
\big(\Psi^{3/4}_{n,3}(0)\vee \Psi^{1/2}_{n,4}(0)\big)
\big(\log(pG_n/\lambda)\big)^{7/8}
+
G_n^{-1/2}
\big(\log(pG_n/\lambda)\big)^{3/2}
u_m(\lambda)
+
\lambda
\Bigg],
\]
where $C$ depends only on $c,c_1,c_2$, $q$, and $u_m(\lambda):= u_{Y,m}(\lambda)\vee u_{Z,m}(\lambda)$. 

Here $ u_{Y,m}(\lambda)$ is defined as the infimum over all numbers $u > 0$ such that 
\[  \Pr(\frac{\sqrt{G_n}}{\sqrt{n}}d_{jj}^{-1/2}Y_{gj,m}\le u, 1 \le g \le G_n, 1\le j\le p)\ge 1-\lambda \]
and $ u_{Z,m}(\lambda)$ is defined as the infimum over all numbers $u > 0$ such that 
   \[ \Pr(\frac{\sqrt{G_n}}{\sqrt{n}}d_{jj}^{-1/2}Z_{gj,m}\le u, 1 \le g \le G_n, 1\le j\le p)\ge 1-\lambda.
\]
We can bound $u_{Y,m}(\lambda)$ as follows.
\subparagraph{(i) Finite-$q$ moment regime:}
If $\Psi_{n,q}(0)<\infty$ for some $q\ge4$, then
\[
u_{Y,m}(\lambda)
\lesssim
\Psi_{n,q}(0)
\left(
\frac{G_np}{\lambda}
\right)^{1/q}.
\]
\subparagraph{(ii) Sub-Weibull regime:}
If $\Phi_{n,\psi_\nu}<\infty$, then
\[
u_{Y,m}(\lambda)\lesssim \Phi_{n,\psi_{\nu}}\left(\log(pG_n/\lambda)\right)^{1/\beta}
\]
In addition, we can bound $u_{Z,m}(\lambda)$ as follows.
\[u_{Z,m}(\lambda)\le \sqrt{\frac{G_n\times M_g}{n}}\times \sqrt{\log(pG_n/\lambda)}\lesssim \sqrt{\log(pG_n/\lambda)}.  \]
\end{lemma}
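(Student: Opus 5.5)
The plan is to reduce Lemma~\ref{lem:cluster-ga} to the high-dimensional CLT for sums of independent random vectors in \citet{CCK2013,CCK2017}, applied to the $G_n$ cluster sums. Since $m$-neighbourhoods of interior nodes of distinct clusters are disjoint, the vectors
\[
\xi_g:=\sqrt{G_n/n}\,D^{-1/2}Y_{g,m},\qquad g=1,\dots,G_n,
\]
are independent and mean zero. Moreover $D^{-1/2}T_{Y,m}/\sqrt n=G_n^{-1/2}\sum_g\xi_g$. The Gaussian counterpart $G_n^{-1/2}\sum_g\sqrt{G_n/n}\,D^{-1/2}Z_{g,m}$ has exactly matching covariance, because $\mathrm{Cov}(Y_{g,m})=M_g\tilde B_g$. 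The max-norm event is a hyperrectangle event for the $2p$-vector $(\xi_g,-\xi_g)$. I would therefore invoke the Chernozhukov--Chetverikov--Kato bound in its ``truncation'' form, in which the Kolmogorov distance over rectangles is bounded by
\[
C\Big[\big(\bar L^2\log^7(pG_n)/G_n\big)^{1/6}\ \text{or } (\cdot)^{1/8}\text{ analogue}\Big]
+G_n^{-1/2}\log^{3/2}(pG_n/\lambda)\,u(\lambda)+\lambda .
\]
Here $u(\lambda)$ is the $1-\lambda$ quantile-type envelope of $\max_{g,j}|\xi_{gj}|$ and of its Gaussian analogue. This is precisely the structure of the claimed inequality with $u_m=u_{Y,m}\vee u_{Z,m}$.

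Next I verify the moment/variance inputs. The lower bound $\tilde b_{g,jj}/d_{jj}\ge c_1$ together with $M_g\asymp l_n\asymp n/G_n$ gives $\mathbb E\xi_{gj}^2\ge c$, which is the nondegeneracy condition. For the third and fourth moment quantities $\bar L_3,\bar L_4$ I need $\max_j G_n^{-1}\sum_g\mathbb E|\xi_{gj}|^{k}$ for $k=3,4$. These follow from a Burkholder/martingale-difference moment inequality for sums of functionals of independent shocks, as in \citet{ZW2017}: projecting onto the shock filtration gives
\[
\Big\|\sum_{i\in\mathcal C_g^\circ}X^{(m)}_{n,ij}\Big\|_k\lesssim \sqrt{M_g}\,\Theta_{n,k,j}\le\sqrt{M_g}\,\Psi_{n,k}(0).
\]
Conditional expectation only contracts the dependence measure, so the localized variables inherit it. Hence $\mathbb E|\xi_{gj}|^3\lesssim\Psi_{n,3}^3(0)$ and $\mathbb E|\xi_{gj}|^4\lesssim\Psi_{n,4}^4(0)$. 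Substituting into the CCK rate, with the $G_n^{-1/8}$ form that uses $\max(\bar L_3^{1/4},\bar L_4^{1/2})$ style factors, yields the first term $G_n^{-1/8}(\Psi_{n,3}^{3/4}\vee\Psi_{n,4}^{1/2})\log^{7/8}$. The constants depend on $c,c_1,c_2,q$ only.

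The quantile bounds come from union bounds over $pG_n$ coordinates.
\begin{itemize}
\item[(i)] Markov's inequality gives $P(\max_{g,j}|\xi_{gj}|>u)\le pG_n\max\mathbb E|\xi_{gj}|^q/u^q$, and the same Burkholder bound gives $\|\xi_{gj}\|_q\lesssim\Psi_{n,q}(0)$. Setting the right side equal to $\lambda$ gives $u\lesssim\Psi_{n,q}(0)(pG_n/\lambda)^{1/q}$.
\item[(ii)] Under Assumption~\ref{ass:subweibull}, $\|\xi_{gj}\|_q\lesssim q^{\nu+1/2}\Phi_{n,\psi_\nu}$, since Burkholder contributes $\sqrt q$. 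This is a sub-Weibull tail with exponent $\beta=2/(1+2\nu)$, so the union bound yields $\Phi_{n,\psi_\nu}\log^{1/\beta}(pG_n/\lambda)$.
\item[Gaussian part.] Each $\sqrt{G_n/n}\,d_{jj}^{-1/2}Z_{gj,m}$ has variance $G_nM_g\tilde b_{g,jj}/(nd_{jj})\lesssim G_nM_g/n\lesssim1$. The Gaussian tail with a union bound then gives the stated $\sqrt{\log(pG_n/\lambda)}$ bound.
\end{itemize}

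I expect the main obstacle to be matching the exact form of the CCK bound. Their Proposition~2.1 (2017) is stated with $B_n$-type envelopes, so it must be converted to the ``$u(\lambda)$'' version, as in \citet{CCK2013} Corollary~2.1 or \citet{ZW2017} Lemma. I would follow Zhang--Wu's formulation directly, which already has this shape. I would also check that the Burkholder step is valid for the localized, truncated cluster sums with heterogeneous $H_{n,i}$.
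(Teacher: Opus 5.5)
Your proposal is correct and follows the paper's proof essentially step for step. You normalize the interior cluster sums as $\sqrt{G_n/n}\,Y_{g,m}$ to get $G_n$ independent summands, control the third and fourth moments via the Burkholder-type bound of Lemma~\ref{lem:shock-moment}, invoke the \citet{CCK2013} Gaussian approximation with the $u(\lambda)$ envelope, and bound $u_{Y,m}$ and $u_{Z,m}$ by union bounds over the $pG_n$ coordinates using Markov, sub-Weibull and Gaussian tails, respectively. Your symmetrization $(\xi_g,-\xi_g)$ for the two-sided max norm and your check of the upper and lower average-variance conditions are details the paper leaves implicit, and you handle them correctly.
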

\noindent
See Appendix \ref{sec:lem:cluster-ga} for a proof of this lemma.

\subsection{Gaussian to Gaussian Comparison}\label{sec:gaussian_to_gaussian}
\begin{lemma}{\label{lem:gtg}}
     Let
$D=(d_{ij})_{i,j=1}^p$ be a diagonal matrix such that there exist some constants
$0<C_1<C_2$ such that
\[
C_1 \le \sigma_{jj}/d_{jj} \le C_2
\qquad\text{for all }1\le j\le p,
\]
where $\Sigma_n=(\sigma_{ij})_{i,j=1}^p$ is the true covariance matrix of the network sum $\frac{1}{\sqrt{n}}T_X$, and $Z \sim N(0,\Sigma_n)$.
Then, we have\small
\[
\sup_{t\in\mathbb{R}}
\Bigl|
\mathbb{P}\bigl(\,|D^{-1/2}T_{Z,m}/\sqrt{n}|_\infty \le t\,\bigr)
-
\mathbb{P}\bigl(\,|D^{-1/2}Z|_\infty \le t\,\bigr)
\Bigr|
\;\lesssim\;\\
\pi\!\left(
\max_{1\le j\le p} d_{jj}^{-1}\,
(\Psi_{n,2}(m)\Psi_{n,2}(0)\sqrt{l_{n}}
+\partial_n)
\right),
\]\normalsize
where
\[
\pi(x)=x^{1/3}\Bigl(1\vee \log(p/x)\Bigr)^{2/3}
\quad\text{for }x>0.
\]
\end{lemma}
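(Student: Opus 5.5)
This is a Gaussian-to-Gaussian comparison. The plan is to apply the Gaussian comparison inequality for maxima of \citet{chernozhukov2015comparison}. Both $T_{Z,m}/\sqrt n$ and $Z$ are centered Gaussian vectors, and $T_{Z,m}/\sqrt n\sim N(0,\widetilde\Sigma_m)$ with $\widetilde\Sigma_m:=n^{-1}\sum_{g}M_g\tilde B_g=n^{-1}\mathrm{Cov}(T_{Y,m})$, using independence across $g$ of the $Z_{g,m}$. After rescaling by $D^{-1/2}$, the comparison inequality (applied to the $2p$-vector $(W,-W)$ so that it covers $|\cdot|_\infty$) bounds the Kolmogorov distance by $C\,\pi(\Delta)$, where
\[
\Delta=\max_{j,k}\bigl|(D^{-1/2}(\widetilde\Sigma_m-\Sigma_n)D^{-1/2})_{jk}\bigr|\le \max_j d_{jj}^{-1}\,|\widetilde\Sigma_m-\Sigma_n|_\infty .
\]
The constant in the comparison inequality depends on lower and upper bounds for the diagonal variances of $D^{-1/2}Z$. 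These are supplied by the assumption $C_1\le\sigma_{jj}/d_{jj}\le C_2$; the required bounds on the other vector's diagonal follow once $\Delta$ is small, and otherwise the claimed bound is trivial. So the whole task reduces to showing $|\widetilde\Sigma_m-\Sigma_n|_\infty\lesssim \Psi_{n,2}(m)\Psi_{n,2}(0)\sqrt{l_n}+\partial_n$.

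I will split $\widetilde\Sigma_m-\Sigma_n$ into two pieces:
\[
\bigl[n^{-1}\mathrm{Cov}(T_{Y,m})-n^{-1}\mathrm{Cov}(T_Y)\bigr]+\bigl[n^{-1}\mathrm{Cov}(T_Y)-n^{-1}\mathrm{Cov}(T_X)\bigr],
\]
where cross-cluster covariances in $\mathrm{Cov}(T_Y)$ are also handled: they are small because interiors are more than $2m$ apart, or, more simply, one uses $\mathrm{Cov}(T_{Y,m})=\sum_g\mathrm{Cov}(Y_{g,m})$ directly. For the localization piece, entrywise $\mathrm{Cov}(Y_{g,m,j},Y_{g,m,k})-\mathrm{Cov}(Y_{gj},Y_{gk})$ is bounded by Cauchy--Schwarz as $\|Y_{gj}-Y_{gj,m}\|_2\|Y_{gk}\|_2+\|Y_{gj,m}\|_2\|Y_{gk}-Y_{gk,m}\|_2$. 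The standard functional-dependence martingale (projection) argument, with the sum over shocks organised by shells as in \eqref{eq:psi}, gives $\|Y_{gj}\|_2\lesssim\sqrt{M_g}\,\Psi_{n,2}(0)$. The same argument applied to the truncation error (Lemma \ref{lem:trunc_omega} at the coordinate level, with $L_2$ in place of $L_q$) gives $\|Y_{gj}-Y_{gj,m}\|_2\lesssim M_g\Psi_{n,2}(m)$ in the crude form, or $\sqrt{M_g}\,\Psi_{n,2}(m)$ in the sharp form. Using the crude form on one factor yields per-cluster error $M_g^{3/2}\Psi_{n,2}(m)\Psi_{n,2}(0)$. Summing over $g$ and dividing by $n$ gives $\lesssim\sqrt{l_n}\,\Psi_{n,2}(m)\Psi_{n,2}(0)$, since $\sum_g M_g\le n$ and $M_g\lesssim l_n$. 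This matches the stated rate, so I will aim for exactly this crude-times-sharp combination.

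For the buffer piece, $T_X-T_Y=\sum_{i\in\mathcal B(m)}X_{n,i}$. I will write
\[
\mathrm{Cov}(T_X)-\mathrm{Cov}(T_Y)=\mathrm{Cov}(T_X-T_Y,T_X)+\mathrm{Cov}(T_Y,T_X-T_Y).
\]
Entrywise, $\mathrm{Cov}(\sum_{i\in\mathcal B(m)}X_{n,ij},\sum_{i'}X_{n,i'k})\le\sum_{i\in\mathcal B(m)}\sum_{i'}|\mathrm{Cov}(X_{n,ij},X_{n,i'k})|$. Summability of covariances, $\sup_i\sum_{i'}|\mathrm{Cov}(X_{n,ij},X_{n,i'k})|\lesssim\Theta_{n,2,j}\Theta_{n,2,k}=O(1)$, follows by expanding each variable in projections onto shocks and applying Remark \ref{rem:local}. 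This gives a bound $\lesssim|\mathcal B(m)|/n=\partial_n$ for the buffer piece.

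The main obstacle is the localization piece: obtaining a clean $L_2$ bound for the cluster-sum truncation error and making the projection/shell-counting argument rigorous with heterogeneous $H_{n,i}$. I will first try the decomposition $X-X^{(m)}=\sum_{u:d(i,u)>m}(\mathcal P_u\text{-increments})$ with an ordering of shocks, as in the proof of Lemma \ref{lem:trunc_omega}. If the sharp $\sqrt{M_g}$ bound fails, I will fall back on the crude triangle-inequality bound, which still delivers the stated rate. The comparison step itself is routine once the entrywise covariance bound is in hand.
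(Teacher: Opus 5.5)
Your proposal is correct and matches the paper's proof: the Gaussian comparison inequality reduces the claim to bounding $\max_j d_{jj}^{-1}|\tilde B^\circ-\Sigma_n|_\infty$. That quantity is then split into three pieces: the per-cluster localization error (crude $M_g\Psi_{n,2}(m)$ truncation bound times $\sqrt{M_g}\,\Psi_{n,2}(0)$), the cross-cluster covariance of interiors separated by more than $2m$ (of order $\Psi_{n,2}(m)\Psi_{n,2}(0)$, hence absorbed), and the buffer term of order $\partial_n$. Note that the cross-cluster step is genuinely needed, so your ``or, more simply'' alternative does not avoid it: $\mathrm{Cov}(T_{Y,m})=\sum_g\mathrm{Cov}(Y_{g,m})$ does not remove the cross terms of the unlocalized $Y_g$, and comparing $T_{Y,m}$ with $T_Y$ in aggregate would give $\sqrt{n}$ in place of $\sqrt{l_n}$.
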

\noindent
See Appendix \ref{sec:lem:gtg} for a proof of this lemma.

\subsection{Proofs of Theorems \ref{th:GA_q} and \ref{th:GA_sub}}{\label{sec:proof_last}}
\begin{proof}[Proofs of Theorems \ref{th:GA_q} and \ref{th:GA_sub}]
First, apply Lemma \ref{lem:trunc_omega} to obtain 
$$
f_1(y)=\mathbb P(|T_X-T_{X,m}|_\infty\ge y)\le y^{-q}n^q\Omega_{n,q}^q(m)\le y^{-q}n^{q}p\Psi^q_{n,q}(m).
$$
Next, apply lemma \ref{lem:buftail} to obtain
\[
f_2^*(y)=\mathbb P(|T_{X,m}-T_{Y,m}|_\infty\ge y)\le C_q\,\frac{pU_B(m)^{q/2}\Psi_{n,q}^q(0)}{y^{q}},
\]
under Assumption \ref{ass:decay rate} (a finite $q$-th moment), and
 \[f_2^\diamond(y)=\mathbb P(|T_{X,m}-T_{Y,m}|_\infty\ge y)
\le
2p\exp\!\left(
-C_\beta\left(\frac{y}{\sqrt{U_B(m)}\,\Phi_{n,\psi_{\nu}}}
\right)^\beta\right).\]
under Assumption \ref{ass:subweibull} (sub-Weibull tails).
We have the geometric bound for $U_B(m)$:
\[
U_B(m)
\;\le\;
\sum_{g=1}^{G_n} \left(|\partial \mathcal{C}_g|\times\,N_n(2m)\right)=\eta_n(G_n)\times N_n(2m)\lesssim \eta_n(G_n)m^d
\]
under Assumption \ref{ass:volume}.

Recall the network covariance matrix $\Sigma_n$. 
Let $\Sigma_0 = \mathrm{diag}(\Sigma_n)$ be the diagonal matrix of $\Sigma_n$, 
and $D_0 = \mathrm{diag}(\sigma_{11}^{1/2}, \ldots, \sigma_{pp}^{1/2}) 
= \Sigma_0^{1/2}$. 
Consider the following normalized version of Gaussian Approximation:
\begin{equation}
\rho_n 
:= \sup_{u \ge 0} 
\left| 
\mathbb{P}\!\left( 
\,\bigl| D_0^{-1} T_X/\sqrt{n} \bigr|_\infty \le u 
\right)
-
\mathbb{P}\!\left(
\bigl| D_0^{-1} Z \bigr|_\infty \le u
\right)
\right|
\end{equation}

Under Assumptions \ref{ass:clusters} and \ref{ass:variance lb} and Condition (ii) in Theorems \ref{th:GA_q} and \ref{th:GA_sub}, the conditions of Lemmas \ref{lem:cluster-ga} and \ref{lem:gtg} are satisfied for our truncated cluster sum $T_{Y,m}$ -- see Appendix \ref{sec:valid} for details of this argument.
Therefore, for every $\lambda \in (0,1)$,
\[\sup_{t\in\mathbb R}
\Big|
\mathbb P\!\Big(
\big|D_0^{-1}T_{Y,m}/\sqrt n\big|_\infty \le t
\Big)
-
\mathbb P\!\Big(
\big|D_0^{-1}Z\big|_\infty \le t
\Big)
\Big|
\le
h(\lambda,u_m(\lambda))+\pi(\Psi_{n,2}(m)\Psi_{n,2}(0)\sqrt{l_{n}}
+\partial_n).
\]
Observe the Gaussian vector $D_0^{-1}Z$ has marginal variance $1$. For every $\eta>0$, we have 
\[
\sup_{t\in\mathbb R}
\mathbb P\!\left(
\bigl|
|D_0^{-1}Z|_\infty - t
\bigr|
\le \eta
\right)
\lesssim
\eta\sqrt{\log p}.
\]
By the triangle inequality, for every $\eta>0$, we have
\begin{align*}
&\sup_{t\in\mathbb R}
\Big|
\mathbb P\!\left(
\bigl|D_0^{-1}T_X/\sqrt n\bigr|_\infty > t
\right)
-
\mathbb P\!\left(
\bigl|D_0^{-1}T_{Y,m}/\sqrt n\bigr|_\infty > t
\right)
\Big|
\\
\le&
\mathbb P\!\left(
\bigl|D_0^{-1}(T_X-T_{Y,m})/\sqrt n\bigr|_\infty > \eta
\right)
+
\sup_{t\in\mathbb R}
\mathbb P\!\left(
\bigl|
|D_0^{-1}T_{Y,m}/\sqrt n|_\infty - t
\bigr|
\le \eta
\right).
\end{align*}
Therefore, for every $\eta>0$ and $\lambda\in(0,1)$,
\begin{equation}{\label{eq:rho}}
\rho_n
\lesssim
h(\lambda,u_m(\lambda))
+
\pi(\Psi_{n,2}(m)\Psi_{n,2}(0)\sqrt{l_{n}}
+\partial_n)+
f_1(\sqrt{n}\eta )+f_2(\sqrt{n}\eta )
+
\eta\sqrt{\log p}.
\end{equation}
To satisfy \(\rho_n \to 0\), it is sufficient that, for some $\eta>0$ and $\lambda\in(0,1)$, we  can choose $m$ such that every term on the
right-hand side of \eqref{eq:rho} converges to zero.
We now branch into the cases under consideration by Theorems \ref{th:GA_q} and \ref{th:GA_sub}.


\bigskip\noindent
\textit{Proof of Theorem \ref{th:GA_q}:}
Assumptions \ref{ass:decay rate} and \ref{ass:volume} yield
\[
\Psi_{n,q}(0)\asymp1
\qquad\text{and}\qquad
\Psi_{n,q}(m)\lesssim(1+m)^d\rho^m.
\]
See Remark \ref{rem:local}.
Choose  \(m\asymp\log(np)\) with a sufficiently
large constant. 
Then,  we can choose $\eta \asymp p^{1/2q} (\frac{U_B(m)}{n})^{1/4}(\log p)^{-1/4}. $ 
With \(m\asymp\log(np)\),  Condition (iii) in the statement of the theorem implies
\[U_B(m)\lesssim
 \eta_n(G_n)
m^{d}
\ll
np^{-2/q}(\log p)^{-1}.
\]
Therefore, we have \[
p^{1/q} \sqrt{\frac{U_B(m)}{n}} \ll \eta \ll (\log p)^{-1/2},
\]
and hence,
\[
f_2^*(\sqrt n\eta)\to0.
\]
Also, with the above choice of $\eta$, we have
$
\eta\sqrt{\log p}\to0
$
and
$
n^{1/2}p\eta^{-1}\Psi_{n,q}(m)\to0,
$
so that
\[
f_1(\sqrt n\eta)\to0.
\]

Next, since \(m \asymp \log(np)\) with sufficiently large constant, we have
\[
\Psi_{n,2}(m)\Psi_{n,2}(0)\sqrt{l_{n}}\ll \Psi_{n,2}(m)\Psi_{n,2}(0)n^{1/2} \ll (\log p)^{-2}.
\]
Moreover,  Condition (iii) in the statement of the theorem gives
\[
\partial_n\lesssim \frac{\eta_n(G_n)m^{d}}{n}
\ll
p^{-2/q}(\log p)^{-1}\ll
(\log p)^{-2}.
\]
Therefore,
\[
\pi\!\left(
\Psi_{n,2}(m)\Psi_{n,2}(0)\sqrt{l_n}
+\partial_n
\right)\to0.
\]

Finally, Condition (i) in the statement of the theorem gives 
\[
G_n
\gg
\max\{
{\log(np)}^{7},
p^{2/(q-2)}
{\log(np)}^{3q/(q-2)}\}.
\]
As $\Psi^{3/4}_{n,3}(0)\asymp \Psi^{1/2}_{n,4}(0)=O(1)$, $u_m(\lambda):= u_{Y,m}(\lambda)\vee u_{Z,m}(\lambda)\lesssim
\Psi_{n,q}(0)
\left(
\frac{G_np}{\lambda}
\right)^{1/q}.$
We, therefore, have 
\begin{align*}
h(\lambda,u_m(\lambda))&=G_n^{-1/8}
\big(\Psi^{3/4}_{n,3}(0)\vee \Psi^{1/2}_{n,4}(0)\big)
\big(\log(pG_n/\lambda)\big)^{7/8}
+
G_n^{-1/2}
\big(\log(pG_n/\lambda)\big)^{3/2}
u_m(\lambda)
+
\lambda
\\
&\to0.
\end{align*}
This completes a proof of Theorem  \ref{th:GA_q}.

\bigskip\noindent
\textit{Proof of Theorem \ref{th:GA_sub}:}
Assumption \ref{ass:decay rate}, \ref{ass:volume}, and \ref{ass:subweibull} yield $\Phi_{n,\psi_\nu}\asymp1$. 
We choose  $m\asymp\log (np)$. 
Then, we can choose $\eta\asymp (\frac{U_B(m)}{n})^{1/4}(\log p)^{-\frac{1}{4}+\frac{1}{2\beta}}$, where $\beta=2/(1+2\nu)$.
Condition (iii) in the statement of the theorem implies
\[
U_B(m)\lesssim\eta_n(G_n)m^{d}
\ll
n(\log p)^{-1-2/\beta}.
\]
Therefore, we have \[
\sqrt{\frac{U_B(m)}{n}}(\log p)^{1/\beta}
\ll
\eta
\ll
(\log p)^{-1/2},
\]
and hence, 
\[
f_2^\diamond(\sqrt n\eta)\to0.
\]
Also, with the above choice of $\eta$, we have
$
\eta\sqrt{\log p}\to0
$
and
$
n^{1/2}p\eta^{-1}\Psi_{n,q}(m)\to0,
$
so that
\[
f_1(\sqrt n\eta)\to0.
\]
Moreover, similarly to the case of the finite $q$-th moment regime, by the choice
\(m\asymp\log(np)\) with sufficiently large constant,
\[
\Psi_{n,2}(m)\Psi_{n,2}(0)\sqrt{l_{n}}\ll \Psi_{n,2}(m)\Psi_{n,2}(0)n^{1/2} \ll (\log p)^{-2},
\]
and Condition (iii) in the statement of the theorem gives
\[
\partial_n\lesssim \frac{\eta_n(G_n)m^{d}}{n}
\ll
(\log p)^{-2-2\nu}\ll
(\log p)^{-2}.
\]
Therefore,
\[
\pi\!\left(
\Psi_{n,2}(m)\Psi_{n,2}(0)\sqrt{l_n}
+\partial_n
\right)\to0.
\]

Finally, Condition (i) in the statement of the theorem gives
\[ G_n\gg (\log(np))^{\max\{7,\,4+2\nu\}}.
\]
As $\Psi^{3/4}_{n,3}(0)\asymp \Psi^{1/2}_{n,4}(0)=O(1)$, $u_m(\lambda):= u_{Y,m}(\lambda)\vee u_{Z,m}(\lambda)\lesssim
\Phi_{n,\psi_\nu}
\left(
\log(pG_n/\lambda)
\right)^{(1+2\nu)/2}.$
We have \[
h(\lambda,u_m(\lambda))=G_n^{-1/8}
\big(\Psi^{3/4}_{n,3}(0)\vee \Psi^{1/2}_{n,4}(0)\big)
\big(\log(pG_n/\lambda)\big)^{7/8}
+
G_n^{-1/2}
\big(\log(pG_n/\lambda)\big)^{3/2}
u_m(\lambda)
+
\lambda\to0.
\]
 This completes a proof of Theorem \ref{th:GA_sub}.
\end{proof}

\section{Proof of Lemmas Introduced in Appendix \ref{sec:proof}}\label{sec:appendix}
\subsection{Auxiliary Lemmas}
\begin{lemma}[Moment Bound for Shock-Generated Sums]
\label{lem:shock-moment}
Let
$
S_j=\sum_{i\in I}X_{n,ij}^{(m)},
$
where \(S_j\) is measurable with respect to the shock field
\(\{\varepsilon_{n,u}:u\in\mathcal U\}\).
Assume
$
E(S_j)=0.
$
Then, for every $q\ge2$,
\[
\|S_j\|_q
\le
Cq^{1/2}
\left[
\sum_{u\in\mathcal U}
\left(
\sum_{i\in I}\delta_{n,i,u,q,j}
\right)^2
\right]^{1/2}.
\]
In particular, if
$
\sum_{i\in I}\delta_{n,i,u,q,j}
\le
\Delta_{n,0,q,j}
$
for all $u\in\mathcal U$,
then
\[
\|S_j\|_q
\le
Cq^{1/2}
|\mathcal U|^{1/2}
\Delta_{n,0,q,j}.
\]

Under Assumption~\ref{ass:subweibull}, therefore, we have
\[
\|S_j\|_q
\le
Cq^{1/\beta}
|\mathcal U|^{1/2}
\Phi_{n,\psi_\nu},
\qquad
\beta=\frac{2}{1+2\nu},
\]
and hence
\[
\|S_j\|_{\psi_{1/\beta}}
\le
C|\mathcal U|^{1/2}\Phi_{n,\psi_\nu}.
\]
Consequently,
\[
P(|S_j|>t)
\le
2\exp\left[
-c
\left(
\frac{t}
{|\mathcal U|^{1/2}\Phi_{n,\psi_\nu}}
\right)^\beta
\right].
\]
\end{lemma}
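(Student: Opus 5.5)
The plan is to prove the moment bound by a martingale decomposition along the shocks, combined with Burkholder's inequality and a coupling argument; the sub-Weibull consequences then follow by substituting the moment growth from Assumption~\ref{ass:subweibull}.

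\textbf{Martingale decomposition.} Enumerate the shocks in $\mathcal U$ as $u_1,\dots,u_K$, with $K=|\mathcal U|$. Let $\mathcal F_k=\sigma\{\varepsilon_{n,u_1},\dots,\varepsilon_{n,u_k}\}$, so that $\mathcal F_0$ is trivial. Since $E(S_j)=0$ and $S_j$ is $\mathcal F_K$-measurable,
\[
S_j=\sum_{k=1}^K D_k, \qquad D_k:=\mathbb E(S_j\mid\mathcal F_k)-\mathbb E(S_j\mid\mathcal F_{k-1}),
\]
which is a martingale difference sequence. Burkholder's inequality (in Rio's form, $\|\sum D_k\|_q^2\le (q-1)\sum\|D_k\|_q^2$ for $q\ge2$) gives
\[
\|S_j\|_q\le Cq^{1/2}\Big(\sum_k\|D_k\|_q^2\Big)^{1/2}.
\]

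\textbf{Coupling bound for each increment.} Let $S_j^{*(u_k)}$ denote $S_j$ with $\varepsilon_{n,u_k}$ replaced by an independent copy. By independence of the shocks, $\mathbb E(S_j\mid\mathcal F_{k-1})=\mathbb E(S_j^{*(u_k)}\mid\mathcal F_k)$. Hence $D_k=\mathbb E(S_j-S_j^{*(u_k)}\mid\mathcal F_k)$, and Jensen's inequality gives $\|D_k\|_q\le\|S_j-S_j^{*(u_k)}\|_q$. By the triangle inequality, this is at most $\sum_{i\in I}\|X^{(m)}_{n,ij}-X^{(m)*(u_k)}_{n,ij}\|_q$.

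It remains to compare the localized coupling with the original one. The key observation is that
\[
X^{(m)}_{n,ij}-X^{(m)*(u)}_{n,ij}=\mathbb E\big(X_{n,ij}-X^{*(u)}_{n,ij}\mid\mathcal F_{i,m}\vee\sigma(\varepsilon'_{n,u})\big)
\]
when $u$ lies in the $m$-ball of $i$, and the difference is identically zero otherwise. Jensen's inequality then yields $\|X^{(m)}_{n,ij}-X^{(m)*(u)}_{n,ij}\|_q\le\delta_{n,i,u,q,j}$. This step needs care: the conditional expectation must be taken over the shocks outside the ball while $\varepsilon_{n,u}$ and its copy are held fixed, which is legitimate because all shocks are independent. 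I expect this to be the only subtle point of the argument.

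Combining the three steps gives the first display. If $\sum_{i\in I}\delta_{n,i,u,q,j}\le\Delta_{n,0,q,j}$ for every $u\in\mathcal U$, each of the $K$ summands is at most $\Delta_{n,0,q,j}^2$, which gives the bound $Cq^{1/2}|\mathcal U|^{1/2}\Delta_{n,0,q,j}$.

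\textbf{Sub-Weibull consequences.} By definition of $\Theta_{n,\psi_\nu,j}$, we have $\Delta_{n,0,q,j}=\Theta_{n,q,j}\le q^\nu\Phi_{n,\psi_\nu}$. Therefore
\[
\|S_j\|_q\le Cq^{1/2+\nu}|\mathcal U|^{1/2}\Phi_{n,\psi_\nu}=Cq^{1/\beta}|\mathcal U|^{1/2}\Phi_{n,\psi_\nu}, \qquad 1/\beta=(1+2\nu)/2.
\]
Moment growth of order $q^{1/\beta}$ uniformly in $q\ge2$ is equivalent, up to constants, to a finite $\psi_\beta$-Orlicz norm; in the paper's indexing this is the stated $\|S_j\|_{\psi_{1/\beta}}$ bound. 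To obtain the tail bound, apply Markov's inequality, $P(|S_j|>t)\le (Cq^{1/\beta}A/t)^q$ with $A=|\mathcal U|^{1/2}\Phi_{n,\psi_\nu}$. Then choose $q=(t/(eCA))^\beta$ when this exceeds $2$, which gives $\exp(-c(t/A)^\beta)$. For small $t$ the bound is trivial after enlarging the constant to $2$.
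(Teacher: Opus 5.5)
Your proposal is correct and follows essentially the same route as the paper. Both arguments order the shocks in $\mathcal U$, write the Doob increments as $D_k=\mathbb E(S_j-S_j^{*(u_k)}\mid\mathcal F_k)$, bound them by conditional Jensen and Minkowski, apply Burkholder's inequality, and then insert $\Delta_{n,0,q,j}\le q^\nu\Phi_{n,\psi_\nu}$ followed by the standard moment-to-Orlicz and tail conversions. Your check that $\|X^{(m)}_{n,ij}-X^{(m)*(u)}_{n,ij}\|_q\le\delta_{n,i,u,q,j}$, obtained by conditioning on $\mathcal F_{i,m}\vee\sigma(\varepsilon'_{n,u})$, and your Markov-based derivation of the tail bound supply steps the paper only asserts. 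One small fix: for small $t$ the prefactor $2$ alone does not make the bound trivial, so you also need to shrink $c$.
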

\begin{proof}
Fix an arbitrary ordering of the shock index set
\[
\mathcal U=\{u_1,\ldots,u_{|\mathcal U|}\},
\]
and define
\[
\mathcal F_r
=
\sigma(\varepsilon_{n,u_1},\ldots,\varepsilon_{n,u_r})
\qquad\text{ for each }
r=0,\ldots,|\mathcal U|.
\]
Let $S_j^{*(u_r)}$ denote the coupled version of $S_j$ obtained by replacing
$\varepsilon_{n,u_r}$ with an independent copy while leaving all other innovations
unchanged. Define
\[
D_r
=
E\!\left[
S_j-S_j^{*(u_r)}
\mid
\mathcal F_r
\right].
\]
Then, $\{D_r,\mathcal F_r\}$ is a martingale difference sequence and
\[
S_j=\sum_{r=1}^{|\mathcal U|}D_r.
\]
By the contraction property of conditional expectation, we have
\[
\|D_r\|_q
\le
\|S_j-S_j^{*(u_r)}\|_q.
\]
Moreover,
\[
S_j-S_j^{*(u_r)}
=
\sum_{i\in I}
\left(
X_{n,ij}^{(m)}-X_{n,ij}^{(m),*(u_r)}
\right),
\]
so Minkowski's inequality gives
\[
\|D_r\|_q
\le
\sum_{i\in I}
\|X_{n,ij}^{(m)}-X_{n,ij}^{(m),*(u_r)}\|_q
\le
\sum_{i\in I}\delta_{n,i,u_r,q,j}.
\]
Therefore, by Burkholder's inequality,
\[
\|S_j\|_q
=
\left\|
\sum_{r=1}^{|\mathcal U|}D_r
\right\|_q
\le
Cq^{1/2}
\left(
\sum_{r=1}^{|\mathcal U|}
\|D_r\|_q^2
\right)^{1/2},
\]
and hence
\[
\|S_j\|_q
\le
Cq^{1/2}
\left[
\sum_{u\in\mathcal U}
\left(
\sum_{i\in I}\delta_{n,i,u,q,j}
\right)^2
\right]^{1/2}.
\]
The simplified bound follows immediately if
\[
\sum_{i\in I}\delta_{n,i,u,q,j}
\le
\Delta_{n,0,q,j}
\quad
\text{uniformly over }u\in\mathcal U.
\]
Under Assumption~\ref{ass:subweibull},
\[
\Delta_{n,0,q,j}
\le
Cq^\nu\Phi_{n,\psi_\nu}.
\]
Thus,
\[
\|S_j\|_q
\le
Cq^{1/2+\nu}
|\mathcal U|^{1/2}
\Phi_{n,\psi_\nu}
=
Cq^{1/\beta}
|\mathcal U|^{1/2}
\Phi_{n,\psi_\nu},
\qquad
\beta=\frac{2}{1+2\nu}.
\]
The Orlicz norm bound follows from the definition of the $\psi_{1/\beta}$ norm,
and the final probability inequality follows from the standard tail bound for
$\psi_{1/\beta}$ random variables.
\end{proof}

\begin{lemma}[Covariance Bounds under Network Dependence]
\label{lem:covariance}
For each $u\in V_n$, let
$X_{n,i}^{*(u)}$ denote the coupled version of $X_{n,i}$ obtained by replacing
$\varepsilon_u$ with an independent copy while leaving all other innovations
unchanged.

\begin{enumerate}[label=(\roman*),leftmargin=2.2em]

\item For every $i,i'\in V_n$ and $j,k\le p$,
\[
\left|
\mathrm{Cov}(X_{n,ij},X_{n,i'k})
\right|
\le
C
\sum_{u\in V_n}
\delta_{n,i,u,2,j}\,
\delta_{n,i',u,2,k}.
\]

\item Consequently,
\[
\sup_{i\in V_n}
\sum_{i'\in V_n}
\left|
\mathrm{Cov}(X_{n,ij},X_{n,i'k})
\right|
\le
C\Psi_{n,2}^2(0),
\qquad
j,k=1,\ldots,p.
\]
\end{enumerate}
\end{lemma}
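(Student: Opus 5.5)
For part (i), I will use a telescoping/martingale decomposition of the covariance over the independent shocks. This is the covariance analogue of the argument in Lemma~\ref{lem:shock-moment}. Order the nodes as $V_n=\{u_1,\dots,u_n\}$ and set $\mathcal F_r=\sigma(\varepsilon_{n,u_1},\ldots,\varepsilon_{n,u_r})$. Define the projection operators $\mathcal P_r\xi=\mathbb E[\xi\mid\mathcal F_r]-\mathbb E[\xi\mid\mathcal F_{r-1}]$. Since $\mathbb E X_{n,i}=0$ and the shocks are independent, both coordinates decompose as
\[
X_{n,ij}=\sum_{r=1}^n\mathcal P_rX_{n,ij}
\qquad\text{and}\qquad
X_{n,i'k}=\sum_{r=1}^n\mathcal P_rX_{n,i'k}.
\]
The projection terms are orthogonal across $r$, so
\[
\mathrm{Cov}(X_{n,ij},X_{n,i'k})=\sum_{r}\mathbb E[\mathcal P_rX_{n,ij}\,\mathcal P_rX_{n,i'k}].
\]
Cauchy--Schwarz then gives
\[
|\mathrm{Cov}(X_{n,ij},X_{n,i'k})|\le\sum_r\|\mathcal P_rX_{n,ij}\|_2\,\|\mathcal P_rX_{n,i'k}\|_2 .
\]

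The key step is the coupling bound
\[
\|\mathcal P_rX_{n,ij}\|_2\le\delta_{n,i,u_r,2,j}.
\]
To obtain it, note that $\mathbb E[X_{n,ij}\mid\mathcal F_{r-1}]=\mathbb E[X^{*(u_r)}_{n,ij}\mid\mathcal F_r]$, because replacing $\varepsilon_{n,u_r}$ by an independent copy and then conditioning on $\mathcal F_r$ integrates that shock out. Hence
\[
\mathcal P_rX_{n,ij}=\mathbb E[X_{n,ij}-X^{*(u_r)}_{n,ij}\mid\mathcal F_r].
\]
Conditional Jensen (contraction of conditional expectation in $L^2$) then yields the bound. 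Substituting it for both factors and re-indexing the sum over $r$ as a sum over $u\in V_n$ proves (i) with $C=1$. The main point needing care is the identity for $\mathbb E[X_{n,ij}\mid\mathcal F_{r-1}]$. It requires mutual independence of the shocks and a Fubini argument, which I expect to be routine.

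For part (ii), fix $i$, $j$, $k$ and sum (i) over $i'$, exchanging the order of summation:
\[
\sum_{i'}|\mathrm{Cov}(X_{n,ij},X_{n,i'k})|\le C\sum_{u}\delta_{n,i,u,2,j}\sum_{i'}\delta_{n,i',u,2,k}.
\]
For the inner sum, bound $\delta_{n,i',u,2,k}\le\delta_{n,d_n(i',u),2,k}$ and group the $i'$ by their distance $s$ from $u$. Since distance is symmetric, the number of $i'$ at distance $s$ from $u$ is $N_n^\partial(u,s)$. Therefore the inner sum is at most
\[
\sup_u\sum_s N_n^\partial(u,s)\,\delta_{n,s,2,k}=\Delta_{n,0,2,k}\le\Psi_{n,2}(0).
\]
Applying the same grouping to the outer sum, with shells centered at $i$, gives $\sum_u\delta_{n,i,u,2,j}\le\Delta_{n,0,2,j}\le\Psi_{n,2}(0)$. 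Taking the supremum over $i$ yields the bound $C\Psi_{n,2}^2(0)$.

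The only subtlety in (ii) is that $\Delta$ is defined with shells centered at the observation index, whereas the inner sum uses shells centered at the shock $u$. Undirected graph distance makes these counts identical, so the supremum over all centers covers both cases.
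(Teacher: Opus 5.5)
Your proof is correct and follows the paper's route. For (i), the paper just cites the Efron--Stein (replace-one-shock) covariance inequality, and your martingale-projection argument with $\mathcal P_r X_{n,ij}=\mathbb E[X_{n,ij}-X^{*(u_r)}_{n,ij}\mid\mathcal F_r]$ is the standard proof of that inequality (the same device as in Lemma~\ref{lem:shock-moment}), giving $C=1$; (ii) matches the paper's summation argument, and your shell-grouping via symmetry of $d_n$ justifies the bound $\sup_u\sum_{i'}\delta_{n,i',u,2,k}\le\Psi_{n,2}(0)$, which the paper states without proof.
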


\begin{proof}
Part (i) follows directly from the Efron--Stein (replace-one-shock)
covariance inequality.

For part (ii), summing the bound in part (i) over $i'$ gives
\[
\sum_{i'\in V_n}
|\mathrm{Cov}(X_{n,ij},X_{n,i'k})|
\le
C
\sum_{u\in V_n}
\delta_{n,i,u,2,j}
\sum_{i'\in V_n}
\delta_{n,i',u,2,k}.
\]
Using
\[
\sup_{u}
\sum_{i'}
\delta_{n,i',u,2,k}
\le
\Psi_{n,2}(0)
\qquad\text{and}\qquad
\sup_i
\sum_u
\delta_{n,i,u,2,j}
\le
\Psi_{n,2}(0)
\]
yields
\[
\sup_i
\sum_{i'}
|\mathrm{Cov}(X_{n,ij},X_{n,i'k})|
\le
C\Psi_{n,2}^2(0),
\]
as claimed.
\end{proof}

\subsection{Proof of Lemma \ref{lem:trunc_omega}}\label{sec:lem:trunc_omega}
Fix $i\in V_n$.
Recall
\[
\mathcal F_{i,m}
=
\sigma\{\varepsilon_u:d_n(i,u)\le m\}
\qquad\text{and}\qquad
X_{n,i}^{(m)}
=
E(X_{n,i}\mid\mathcal F_{i,m}).
\]
Let
\[
U_{i,>m}
=
\{u:d_n(i,u)>m\}
=
\{u_1,\ldots,u_{n_m}\},
\]
where the ordering is arbitrary.
Define
\[
\mathcal G_0=\mathcal F_{i,m},\qquad
\mathcal G_\ell
=
\sigma(\mathcal F_{i,m},
\varepsilon_{u_1},\ldots,\varepsilon_{u_\ell}),
\quad
1\le\ell\le n_m .
\]
By Doob's martingale decomposition, we have
\[
X_{n,i}-X_{n,i}^{(m)}
=
\sum_{\ell=1}^{n_m}
\Big(
E(X_{n,i}\mid\mathcal G_\ell)
-
E(X_{n,i}\mid\mathcal G_{\ell-1})
\Big).
\]

Let $X_{n,i}^{*(u_\ell)}$ denote the coupled variable obtained by replacing
$\varepsilon_{u_\ell}$ with an independent copy while leaving all other
innovations unchanged. Standard coupling arguments imply
\[
E(X_{n,i}\mid\mathcal G_\ell)
-
E(X_{n,i}\mid\mathcal G_{\ell-1})
=
E\!\left(
X_{n,i}-X_{n,i}^{*(u_\ell)}
\mid
\mathcal G_\ell
\right).
\]
Hence,
\[
|X_{n,i}-X_{n,i}^{(m)}|_\infty
\le
\sum_{\ell=1}^{n_m}
E\!\left(
|X_{n,i}-X_{n,i}^{*(u_\ell)}|_\infty
\mid
\mathcal G_\ell
\right)
\quad\text{a.s.}
\]
Applying Minkowski's inequality and the $L^q$ contraction property of conditional
expectation yields
\[
\bigl\|
|X_{n,i}-X_{n,i}^{(m)}|_\infty
\bigr\|_q
\le
\sum_{u:d_n(i,u)>m}
\bigl\|
|X_{n,i}-X_{n,i}^{*(u)}|_\infty
\bigr\|_q
\le
\Omega_{n,q}(m),
\]
where the last inequality follows from the definition of
$\Omega_{n,q}(m)$ given in \eqref{eq:omega}.

Finally,
\[
\bigl\|
|T_X-T_{X,m}|_\infty
\bigr\|_q
\le
\sum_{i\in V_n}
\bigl\|
|X_{n,i}-X_{n,i}^{(m)}|_\infty
\bigr\|_q
\le
n\,\Omega_{n,q}(m),
\]
and the convergence in probability follows immediately from Markov's inequality.
By the definition of $\Psi_{n,q}(m)$ given in \eqref{eq:psi},
\[
\Omega_{n,q}(m)
\le
p\Psi_{n,q}(m).
\]
This completes a proof.
\qed

\subsection{Proof of Lemma \ref{lem:buftail}}\label{sec:lem:buftail}
Fix $j\le p$ and write
\[
S_j^{buf}(m)
=
\sum_{i\in\mathcal B(m)}X_{n,ij}^{(m)}.
\]
Since $S_j^{buf}(m)$ is generated by the shock set
$\mathcal U_B(m)$. By Lemma~\ref{lem:shock-moment}, we have \[
\|S_j^{buf}(m)\|_q
\le
Cq^{1/2}
\left[
\sum_{u\in\mathcal U_B(m)}
\left(
\sum_{i\in\mathcal B(m)}\delta_{n,i,u,q,j}
\right)^2
\right]^{1/2}.
\]
Hence, by Markov's inequality,
\[
\mathbb P\left(
\left|S_j^{\mathrm{buf}}(m)\right|>t
\right)
\le
\frac{C_q}{t^q}
\left[
U_B(m)
\right]^{q/2}\Psi_{n,q}^q(0),
\]
 Therefore, a union bound over
\(j=1,\ldots,p\) yields
\[
\mathbb P\left(
\left|S^{\mathrm{buf}}(m)\right|_\infty>t
\right)
\le
p\frac{C_q}{t^q}
\left[
U_B(m)
\right]^{q/2}\Psi_{n,q}^q(0).
\]

It remains to consider the sub-Weibull case. By Lemma~\ref{lem:shock-moment},
with $\mathcal U=\mathcal U_B(m)$ and $I=\mathcal B(m)$,
\[
\|S_j^{buf}(m)\|_q
\le
Cq^{1/\beta}
U_B(m)^{1/2}
\Phi_{n,\psi_\nu},
\qquad
\beta=\frac{2}{1+2\nu}.
\]
Equivalently,
\[
\|S_j^{buf}(m)\|_{\psi_{1/\beta}}
\le
C
U_B(m)^{1/2}
\Phi_{n,\psi_\nu}.
\]
Hence, for every $t>0$,
\[
P\left(|S_j^{buf}(m)|>t\right)
\le
2\exp\left[
-c
\left(
\frac{t}
{U_B(m)^{1/2}\Phi_{n,\psi_\nu}}
\right)^\beta
\right].
\]
Applying a union bound over $j=1,\ldots,p$ gives
\[
P\left(|S^{buf}(m)|_\infty>t\right)
\le
2p\exp\left[
-c
\left(
\frac{t}
{U_B(m)^{1/2}\Phi_{n,\psi_\nu}}
\right)^\beta
\right].
\]
This completes a proof.
\qed

\subsection{Proof of Lemma \ref{lem:cluster-ga}}\label{sec:lem:cluster-ga}
Fix $j \le p$. By the definition in \eqref{eq:tym},
We have the block sums 
\(
Y_{g,m}
:=
\sum_{i\in \mathcal C_g^\circ}X_{n,i}^{(m)}.
\)
To be consistent with the normalization in
\cite{CCK2013}, define
\[
\tilde X_g
:=
\sqrt{\frac{G_n}{n}}\,Y_{g,m}.
\]
Then
\[
\frac{T_{Y,m}}{\sqrt n}
=
\frac{1}{\sqrt{G_n}}
\sum_{g=1}^{G_n}\tilde X_g,
\]
which is the normalized sum of $G_n$ independent random vectors. Hence, it
suffices to verify the moment conditions required by \cite{CCK2013}.

 For $k>0$, define
\[
M_k
:=
\max_{1\le j\le p}
\left(
\frac{1}{G_n}
\sum_{g=1}^{G_n}
E|\tilde X_{g,j}|^k
\right)^{1/k}.
\]
By Lemma~\ref{lem:shock-moment}, applied with
$I=\mathcal C_g^\circ$ and $\mathcal U=\mathcal C_g$,
\[
\|\tilde X_{g,j}\|_k
=
\sqrt{\frac{G_n}{n}}\,
\|Y_{g,j,m}\|_k
\le
C_k
\sqrt{\frac{G_n}{n}}\,
l_g^{1/2}
\Delta_{n,0,k,j}.
\]
Therefore,
\[
E|\tilde X_{g,j}|^k
\le
C_k^k
\left(\frac{G_n}{n}\right)^{k/2}
l_g^{k/2}
\Delta_{n,0,k,j}^k.
\]
Taking the maximum over $j$ and averaging over $g$ gives
\[
M_k
\le
C_k
\left[
\left(\frac{G_n}{n}\right)^{k/2}
\frac{1}{G_n}
\sum_{g=1}^{G_n}l_g^{k/2}
\right]^{1/k}
\Psi_{n,k}(0).
\]
In particular, under Assumption~\ref{ass:clusters},
\[
M_3\lesssim \Psi_{n,3}(0),
\qquad
M_4\lesssim \Psi_{n,4}(0),
\]
because $l_g\asymp l_n$ uniformly in $g$ and $G_nl_n\asymp n$.
The desired Gaussian approximation for
$G_n^{-1/2}\sum_{g=1}^{G_n}\tilde X_g$
then follows from the high-dimensional Gaussian approximation theorem of
\cite{CCK2013} for independent, non-identically distributed summands.

It remains to bound the maximal block sizes entering the CCK approximation
error.
We branch into the finite-$q$-th-moment regime and the sub-Weibull regime.

\paragraph{Finite-$q$-th-Moment Regime:}

From the proof above, \[\|Y_{g,j,m}\|_q
\le
C_q
l_g^{1/2}
\Delta_{n,0,q,j}.
\]
By Markov's inequality, for all $j$, we have 
\[
\mathbb P\left(
\left|Y_{g,j,m}\right|>t
\right)
\le
\frac{C_q}{t^q}
l_g^{q/2}\Psi_{n,q}^q(0),
\]
Since
\[
\tilde X_{g,j}
=
\sqrt{\frac{G_n}{n}}Y_{g,j,m},
\]
and $l_g\asymp l_n$, it follows that
\[
P\left(
|d_{jj}^{-1/2}\tilde X_{g,j}|>t
\right)
\le
C_q
\frac{\Psi_{n,q}^q(0)}{t^q}.
\]
Applying a union bound over
$g=1,\ldots,G_n$ and
$j=1,\ldots,p$ gives
\[
P\left(
\max_{g\le G_n}\max_{j\le p}
|d_{jj}^{-1/2}\tilde X_{g,j}|>t
\right)
\le
C_q
\frac{G_np\Psi_{n,q}^q(0)}{t^q}.
\]
Consequently,
\[
u_{Y,m}(\lambda)
\le
C
\Psi_{n,q}(0)
\left(
\frac{G_np}{\lambda}
\right)^{1/q}.
\]

\paragraph{Sub-Weibull Regime:}

By Lemma~\ref{lem:shock-moment},
\[
\|\tilde X_{g,j}\|_q
\le
Cq^{1/\beta}\Phi_{n,\psi_\nu}.
\]
Hence,
\[
P(|d_{jj}^{-1/2}\tilde X_{g,j}|>t)
\le
2\exp\left[
-c
\left(
\frac{t}{\Phi_{n,\psi_\nu}}
\right)^\beta
\right].
\]
Applying a union bound over
$g\le G_n$ and
$j\le p$,
\[
P\left(
\max_{g\le G_n}\max_{j\le p}
|d_{jj}^{-1/2}\tilde X_{g,j}|>t
\right)
\le
2G_np
\exp\left[
-c
\left(
\frac{t}{\Phi_{n,\psi_\nu}}
\right)^\beta
\right].
\]
Therefore,
\[
u_{Y,m}(\lambda)
\le
C
\Phi_{n,\psi_\nu}
\{\log(G_np/\lambda)\}^{1/\beta}.
\]

For the Gaussian term, recall that
\[
\tilde B_g = (\tilde b_{g,ij})_{i,j=1}^p
= \mathrm{Cov}\!\left(\frac{Y_{g,m}}{\sqrt{M_g}}\right)=
\mathrm{Cov}\!\left(\frac{Z_{g,m}}{\sqrt{M_g}}\right) .
\]
Then the CCK normalized Gaussian summand is
\[
\tilde Z_g := \sqrt{\frac{G_n}{n}}\,Z_{g,m},
\qquad
\mathrm{Var}\!\left(\tilde Z_{g,j}\right)
=O (\tilde b_{g,jj}).
\]
Since
\[
c_1 \le \tilde b_{g,jj}/d_{jj} \le c_2
\qquad \text{uniformly in } g,j,
\]
 we obtain
\[
\mathrm{Var}\!\left(d_{jj}^{-1/2}\tilde Z_{g,j}\right) \le C.
\]
Hence, for all $t>0$,
\[
P\left(
|d_{jj}^{-1/2}\tilde Z_{g,j}| > t
\right)
\le
2\exp(-ct^2).
\]
Applying a union bound over $g \le G_n$ and $j \le p$ gives
\[
P\left(
\max_{g \le G_n}\max_{j \le p}
|d_{jj}^{-1/2}\tilde Z_{g,j}| > t
\right)
\le
2G_np \exp(-ct^2).
\]
Therefore,
\[
u_{Z,m}(\lambda)
\le
C\sqrt{\log(G_np/\lambda)}.
\]
This completes a proof.
\qed

\subsection{Proof for Lemma \ref{lem:gtg}}\label{sec:lem:gtg}
Recall
\[
B_g:=Var\!\left(Y_g/\sqrt{M_g}\right)
\qquad\text{and}\qquad
\tilde B_g:=Var\!\left(Y_{g,m}/\sqrt{M_g}\right),
\]
and define the size-weighted averages
\[
B^\circ:=\sum_{g=1}^{G_n}\frac{M_g}{n}B_g
\qquad\text{and}\qquad
\tilde B^\circ:=\sum_{g=1}^{G_n}\frac{M_g}{n}\tilde B_g.
\]
Recall $\Sigma_n=Var(T_X/\sqrt n)$. Let $D=\mathrm{diag}(d_{11},\dots,d_{pp})$, and define
\[
\Sigma^{Z}_n:=Var\!\left(D^{-1/2}\frac{T_Z}{\sqrt n}\right)
\qquad\text{and}\qquad
\Sigma^{Z,m}_n:=Var\!\left(D^{-1/2}\frac{T_{Z,m}}{\sqrt n}\right).\]
We get
\[
\Sigma^{Z}_n= D^{-1/2}B^\circ D^{-1/2}
\qquad\text{and}\qquad
\Sigma^{Z,m}_n= D^{-1/2}\tilde B^\circ D^{-1/2}.
\]
Hence, letting $V:=D^{-1/2}\Sigma_n  D^{-1/2}$, we have
\[
|\Sigma^{Z,m}_n-D^{-1/2}\Sigma_n D^{-1/2}|_\infty
\le
\max_{1\le j\le p}d_{jj}^{-1}\Big(|\tilde B^\circ-B^\circ|_\infty+|B^\circ-\Sigma_n|_\infty\Big).
\]

\subsubsection{ From $\tilde B^\circ$ to $B^\circ$.} 
Fix $g$ and $(j,k)$. Write \[ \tilde b_{g,jk}-b_{g,jk} = \mathrm{Cov}\!\left( \frac{Y_{g,m,j}}{\sqrt{M_g}}, \frac{Y_{g,m,k}}{\sqrt{M_g}} \right) - \mathrm{Cov}\!\left( \frac{Y_{g,j}}{\sqrt{M_g}}, \frac{Y_{g,k}}{\sqrt{M_g}} \right). \] By adding and subtracting cross terms and applying Cauchy--Schwarz, we have \[ |\tilde b_{g,jk}-b_{g,jk}| \le \Big\| \frac{Y_{g,m,j}-Y_{g,j}}{\sqrt{M_g}} \Big\|_2 \Big\| \frac{Y_{g,m,k}}{\sqrt{M_g}} \Big\|_2 + \Big\| \frac{Y_{g,j}}{\sqrt{M_g}} \Big\|_2 \Big\| \frac{Y_{g,m,k}-Y_{g,k}}{\sqrt{M_g}} \Big\|_2. \] Observe that \[ Y_{g,m,j}-Y_{g,j} = \sum_{i\in\mathcal C_g^\circ} \left( X_{n,ij}^{(m)}-X_{n,ij} \right). \] By the same shock-replacement argument used in the proof of
Lemma~\ref{lem:trunc_omega}, for all $j\le p$,
\[
\left\|
Y_{g,m,j}-Y_{g,j}
\right\|_2
\le
CM_g\Psi_{n,2}(m).
\]
By Lemma~\ref{lem:covariance}(ii), for the unlocalized block sum,
\[
\mathrm{Var}(Y_{g,j})
=
\sum_{i\in\mathcal C_g^\circ}
\sum_{i'\in\mathcal C_g^\circ}
\mathrm{Cov}(X_{n,ij},X_{n,i'j}) 
\le
\sum_{i\in\mathcal C_g^\circ}
\sum_{i'\in V_n}
\big|\mathrm{Cov}(X_{n,ij},X_{n,i'j})\big| 
\le
C M_g\Psi_{n,2}^2(0).
\]
Hence,
\[
\Big\|
\frac{Y_{g,j}}{\sqrt{M_g}}
\Big\|_2
\le
C\Psi_{n,2}(0).
\]
The same bound holds for the localized block sum. Indeed, since
\(X_{n,i}^{(m)}=E(X_{n,i}\mid\mathcal F_{i,m})\), the \(L^2\) coupling effect of
\(X_{n,i}^{(m)}\) is no larger than that of \(X_{n,i}\). Hence the covariance
summability bound in Lemma~\ref{lem:covariance}(ii) also applies with
\(X_{n,i}\) replaced by \(X_{n,i}^{(m)}\). Therefore, for all $j\le p$,
\[
\Big\|
\frac{Y_{g,j}}{\sqrt{M_g}}
\Big\|_2
\vee
\Big\|
\frac{Y_{g,m,j}}{\sqrt{M_g}}
\Big\|_2
\le
C\Psi_{n,2}(0).
\]
Consequently,
\[
|\tilde b_{g,jk}-b_{g,jk}|
\le
C\sqrt{M_g}\Psi_{n,2}(m)\Psi_{n,2}(0).
\]
Since
\[
\tilde B^\circ-B^\circ
=
\sum_{g=1}^{G_n}
\frac{M_g}{n}(\tilde B_g-B_g),
\qquad
\sum_{g=1}^{G_n}\frac{M_g}{n}\le 1,
\qquad\text{and}\qquad
M_g\le l_n,
\]
we obtain
\[
|\tilde B^\circ-B^\circ|_\infty
\le
C\sqrt{l_n}\Psi_{n,2}(m)\Psi_{n,2}(0).
\]

\subsubsection{From $B^\circ$ to $\Sigma_n$}
Recall \[
\Sigma_n
:=
\frac1n \mathrm{Var}\!\Big(\sum_{i\in V_n}X_{n,i}\Big)
\qquad\text{and}\qquad
B^\circ
:=
\frac1{n}\sum_{g=1}^{G_n} \mathrm{Var}(Y_g).
\]
Define
\[
\Sigma_n^\circ
:=
\frac1{n} \mathrm{Var}\!\Big(\sum_{i\in V_n^\circ}X_{n,i}\Big).
\]

\paragraph{Step 1. Covariance Decomposition.}

Since
\[
\sum_{i\in V_n^\circ}X_{n,i}
=
\sum_{g=1}^{G_n}Y_g,
\]
we have
\[
B^\circ-\Sigma_n
=
(B^\circ-\Sigma_n^\circ)
+
(\Sigma_n^\circ-\Sigma_n).
\]
The two terms correspond to the cross-cluster covariance and the boundary contribution, respectively.

\paragraph{Step 2. Cross-Cluster Covariance.}

By the bilinearity of covariance,
\[
B^\circ-\Sigma_n^\circ
=
-\frac1n
\sum_{g\neq h}
\mathrm{Cov}(Y_g,Y_h).
\]
Since the interior clusters satisfy
\[
d_n(\mathcal C_g^\circ,\mathcal C_h^\circ)\ge2m+1,
\]
we obtain
\[
\begin{aligned}
|B^\circ-\Sigma_n^\circ|_\infty
&\le
\frac1n
\sum_{g\neq h}
|\mathrm{Cov}(Y_g,Y_h)|_\infty\\
&\le
\frac1n
\sum_{i\in V_n^\circ}
\sum_{i':\,d_n(i,i')\ge2m+1}
\max_{j,k}
|\mathrm{Cov}(X_{n,ij},X_{n,i'k})|.
\end{aligned}
\]
By Lemma~\ref{lem:covariance}(i),
\[
\begin{aligned}
&\sum_{i':\,d_n(i,i')\ge2m+1}
|\mathrm{Cov}(X_{n,ij},X_{n,i'k})|  \\
&\le
C
\sum_{u\in V_n}
\delta_{n,i,u,2,j}
\sum_{i':\,d_n(i,i')\ge2m+1}
\delta_{n,i',u,2,k}.
\end{aligned}
\]
Since
\[
d_n(i,i')\ge2m+1
\quad\Longrightarrow\quad
d_n(i',u)>m
\ \text{or}\
d_n(i,u)>m,
\]
we have
\[
\sum_{i':\,d_n(i,i')\ge2m+1}
\delta_{n,i',u,2,k}
\le
\Delta_{n,m,2,k}, \qquad \sum_{u\in V_n} 
\delta_{n,i,u,2,j}\le \Delta_{n,0,2,k}
\]
or
\[
\sum_{i':\,d_n(i,i')\ge2m+1}
\delta_{n,i',u,2,k}
\le
\Delta_{n,0,2,k}, \qquad \sum_{u\in V_n} 
\delta_{n,i,u,2,j}\le \Delta_{n,m,2,k}
\]
Hence,
\[
\sum_{i':\,d_n(i,i')\ge2m+1}
|\mathrm{Cov}(X_{n,ij},X_{n,i'k})|
\le
C\,
\Psi_{n,2}(m)\Psi_{n,2}(0).
\]
Substituting this bound into the previous one display yields
\[
|B^\circ-\Sigma_n^\circ|_\infty
\lesssim
\Psi_{n,2}(m)\Psi_{n,2}(0).
\]

\paragraph{Step 3. Boundary Contribution.}

Write
\[
\sum_{i\in V_n}X_{n,i}
=
S^\circ+S^\partial,
\]
where
\[
S^\circ
=
\sum_{i\in V_n^\circ}X_{n,i}
\qquad\text{and}\qquad
S^\partial
=
\sum_{i\in V_n\setminus V_n^\circ}X_{n,i}.
\]
Then,
\[
\Sigma_n-\Sigma_n^\circ
=
\frac1n\mathrm{Var}(S^\partial)
+
\frac2n\mathrm{Cov}(S^\circ,S^\partial).
\]
By Lemma~\ref{lem:covariance}(ii),
\[
\begin{aligned}
|\Sigma_n-\Sigma_n^\circ|_\infty
&\le
\frac2n
\sum_{i\in V_n\setminus V_n^\circ}
\sum_{i'\in V_n}
\max_{j,k}
|\mathrm{Cov}(X_{n,ij},X_{n,i'k})|\\
&\lesssim
\partial_n \Psi_{n,2}^2(0)\\
&\lesssim
\partial_n.
\end{aligned}
\]

\subsubsection{Final Step.}
Combining the previous bounds,
\[
|B^\circ-\Sigma_n|_\infty
\lesssim
\sqrt{l_n}\Psi_{n,2}(m)\Psi_{n,2}(0)
+
\partial_n.
\]
Therefore,
\[
\left|
\Sigma_n^{Z,m}
-
D^{-1/2}\Sigma_nD^{-1/2}
\right|_\infty
\lesssim
\max_{1\le j\le p} d_{jj}^{-1}\,
(\sqrt{l_n}\Psi_{n,2}(m)\Psi_{n,2}(0)
+
\partial_n).
\]
Finally, Lemma~3.1 of \cite{CCK2013} implies that the Kolmogorov distance between the corresponding Gaussian distributions is bounded by
\[
\pi\!\left(\max_{1\le j\le p} d_{jj}^{-1}\,
(
\sqrt{l_n}\Psi_{n,2}(m)\Psi_{n,2}(0)
+
\partial_n)
\right).
\]
This completes a proof.
\qed

\subsection{Verification of the Conditions in Lemma~\ref{lem:cluster-ga} }{\label{sec:valid}}
To apply Lemma~\ref{lem:cluster-ga} in Appendix \ref{sec:proof}, it remains to verify its variance
normalization condition
\[
c<\min_{1\le j\le p} d_{jj},
\qquad
c_1 \le \tilde b_{g,(jj)}/d_{jj} \le c_2,
\quad 1\le j\le p, \quad 1\le g\le G_n.
\]
where
\[
(\tilde b_{g,jk})_{j,k=1}^p=\widetilde B_g
=
\mathrm{Cov}\!\left(Y_{g,m}/\sqrt{M_g}\right).
\]
Let
\[
S_{g,j}
:=
\sum_{i\in\mathcal C_g}X_{n,ij}.
\]
By Assumption~\ref{ass:variance lb} and the covariance summability bound in Lemma \ref{lem:covariance},
\[
c_1\sigma_{jj}
\le
\mathrm{Var}(l_g^{-1/2}S_{g,j})
\le
c_2\sigma_{jj},
\qquad
g=1,\ldots,G_n,\quad
j=1,\ldots,p.
\]
where $\mathrm{diag}(\sigma_{jj})_{j=1}^p=\Sigma_0$.
Moreover,
\[
S_{g,j}-Y_{g,m,j}
=
\sum_{i\in\mathcal C_g}
(X_{n,ij}-X_{n,ij}^{(m)})
+
\sum_{i\in\mathcal B_g(m)}
X_{n,ij}^{(m)}.
\]
By the proof for Lemma~\ref{lem:trunc_omega},
\[
\sup_{i\in V_n}
\|X_{n,ij}-X_{n,ij}^{(m)}\|_2
\le
\Psi_{n,2}(m),
\]
and therefore
\[
\left\|
l_g^{-1/2}
\sum_{i\in\mathcal C_g}
(X_{n,ij}-X_{n,ij}^{(m)})
\right\|_2
\le
l_g^{1/2}\Psi_{n,2}(m).
\]
For the buffer term, Lemma~\ref{lem:covariance}(ii) yields
\[
\left\|
l_g^{-1/2}
\sum_{i\in\mathcal B_g(m)}
X_{n,ij}^{(m)}
\right\|_2^2
\lesssim
\frac{|\partial\mathcal C_g|\,N_n(2m)}
{l_g}
\Psi_{n,2}^2(0).
\]
Since
\[
m=\lceil C_m\log(np)\rceil
\]
with $C_m$ sufficiently large,
\[
l_g^{1/2}\Psi_{n,2}(m)=o(1),
\]
and the condition (ii) in theorem \ref{th:GA_q} and \ref{th:GA_sub} implies
\[
\max_{g\le G_n}
\frac{|\partial\mathcal C_g|\,N_n(2m)}
{l_g}
=o(1),
\]
we obtain
\[
\max_{g\le G_n}
\max_{j\le p}
\left\|
l_g^{-1/2}S_{g,j}
-
l_g^{-1/2}Y_{g,m,j}
\right\|_2
=o(1).
\]
Also, by
 \[
\max_{g\le G_n}
\frac{|\partial\mathcal C_g|\,N_n(2m)}
{l_g}
=o(1),
\]
we can get \(
M_g\asymp l_g.
\) 
Then it follows that
\[
Var(M_g^{-1/2}Y_{g,m,j})
\asymp
Var(l_g^{-1/2}Y_{g,m,j})
\asymp
Var(l_g^{-1/2}S_{g,j})
\asymp
\sigma_{jj},
\]
uniformly over \(g=1,\ldots,G_n\) and \(j=1,\ldots,p\). Hence,
\[
0<c_1'
\le
\frac{\tilde b_{g,jj}}
{\sigma_{jj}}
\le
c_2'
<\infty,
\qquad
g=1,\ldots,G_n,\quad
j=1,\ldots,p.
\]
Therefore, the variance normalization condition required in
Lemma~\ref{lem:cluster-ga} is satisfied.

\section{Proof of the Theorems in Section \ref{sec:estimation-cov}}
\subsection{Quadratic-Form Coupling Lemma}
\begin{lemma}[Quadratic-Form Coupling Bound]
\label{lem:hac-coupling}

Let
\[
S_{ab}
=
\sum_{i\in V_n}
\sum_{i'\in V_n}
w_{ii'}\,
X_{n,ia}X_{n,i'b},
\]
where the weights satisfy
\[
|w_{ii'}|\le1
\qquad\text{and}\qquad
w_{ii'}=0
\quad\text{whenever }d_n(i,i')>b_n.
\]
For each shock $u\in V_n$, let
$S_{ab}^{*(u)}$
denote the coupled version obtained by replacing
$\varepsilon_{n,u}$
with an independent copy while leaving all other shocks unchanged.
Then for every $r\ge2$,
\[
\|S_{ab}-S_{ab}^{*(u)}\|_r
\le
C
N_n(b_n)
\Psi_{n,2r}^2(0).
\]
\end{lemma}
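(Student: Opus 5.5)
The plan is to expand the coupling difference with the elementary product identity, and then bound each product in $L_r$ by Hölder's inequality. For each pair $(i,i')$ write
\[
X_{n,ia}X_{n,i'b}-X^{*(u)}_{n,ia}X^{*(u)}_{n,i'b}
=
\bigl(X_{n,ia}-X^{*(u)}_{n,ia}\bigr)X_{n,i'b}
+
X^{*(u)}_{n,ia}\bigl(X_{n,i'b}-X^{*(u)}_{n,i'b}\bigr).
\]
Summing with weights $w_{ii'}$ and using Minkowski's inequality together with $|w_{ii'}|\le 1$ gives
\[
\|S_{ab}-S^{*(u)}_{ab}\|_r
\le
\sum_{i}\sum_{i':d_n(i,i')\le b_n}
\Bigl(\delta_{n,i,u,2r,a}\,\|X_{n,i'b}\|_{2r}+\|X^{*(u)}_{n,ia}\|_{2r}\,\delta_{n,i',u,2r,b}\Bigr).
\]
Here Hölder's inequality has been applied with exponents $(2r,2r)$.

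The next step bounds the marginal moments. Since $X^{*(u)}_{n,ia}$ has the same distribution as $X_{n,ia}$, it suffices to control $\|X_{n,ia}\|_{2r}$. I would use the martingale decomposition over shocks from the proof of Lemma~\ref{lem:shock-moment}, together with the mean-zero assumption. The triangle inequality then gives $\|X_{n,ia}\|_{2r}\le\sum_{u'}\delta_{n,i,u',2r,a}\le\Delta_{n,0,2r,a}\le\Psi_{n,2r}(0)$. This uses the crude $\ell_1$ bound instead of Burkholder, which costs nothing at this order.

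The final step is the counting, which is where the network structure enters. In the first term, for fixed $i$ the number of $i'$ with $d_n(i,i')\le b_n$ is at most $N_n(b_n)$. The first term is therefore at most
\[
N_n(b_n)\,\Psi_{n,2r}(0)\sum_i\delta_{n,i,u,2r,a}.
\]
The remaining sum runs over observations $i$ for a fixed shock $u$. Grouping by shell,
\[
\sum_i\delta_{n,i,u,2r,a}\le\sum_{s\ge0}N^\partial_n(u,s)\,\delta_{n,s,2r,a}\le\Delta_{n,0,2r,a}\le\Psi_{n,2r}(0).
\]
This works because $d_n$ is symmetric, so the nodes at distance $s$ from $u$ are exactly $\mathcal N^\partial_n(u,s)$, and $\delta_{n,s,q,j}$ is a supremum over all pairs at distance $s$. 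The second term is handled symmetrically, interchanging the roles of $i$ and $i'$ and fixing $i'$ first. Combining the two terms yields $\|S_{ab}-S^{*(u)}_{ab}\|_r\le 2N_n(b_n)\Psi^2_{n,2r}(0)$.

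The main obstacle I expect is this column-sum step, $\sum_i\delta_{n,i,u,2r,a}\le\Psi_{n,2r}(0)$. The envelope $\Delta$ is defined via a supremum over observations $i$ of sums over shocks, whereas here the sum runs over observations with the shock fixed. The step goes through only because the distance-wise measure $\delta_{n,s,q,j}$ is uniform in $(i,u)$ and shells are symmetric. This justification should be written explicitly, as the proof of Lemma~\ref{lem:covariance}(ii) already uses the same bound implicitly. Beyond that, only the Hölder pairing of exponents needs care: an $L_r$ norm of a product is controlled by $L_{2r}$ norms, which is why $\Psi_{n,2r}$ rather than $\Psi_{n,r}$ appears.
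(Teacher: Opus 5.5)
Your proof is correct and follows the paper's argument essentially step for step: the product identity, Hölder with exponents $(2r,2r)$, the row count bounded by $N_n(b_n)$, and the column-sum bound $\sum_i\delta_{n,i,u,2r,a}\le\Psi_{n,2r}(0)$. Your added justifications fill in steps the paper only asserts: $\|X_{n,ia}\|_{2r}\le\Psi_{n,2r}(0)$ via the shock-by-shock telescoping using mean zero, the column-sum step via shell symmetry, and handling the second term by fixing $i'$ first (where the paper loosely merges both terms into $\delta_{n,i,u,2r,a}$).
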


\begin{proof}
Using
\[
XY-X'Y'
=
(X-X')Y
+
X'(Y-Y'),
\]
we obtain
\[
S_{ab}-S_{ab}^{*(u)}
=
\sum_{i,i'}
w_{ii'}
\Big[
(X_{n,ia}-X_{n,ia}^{*(u)})X_{n,i'b}
+
X_{n,ia}^{*(u)}
(X_{n,i'b}-X_{n,i'b}^{*(u)})
\Big].
\]
Applying the triangle inequality and Hölder's inequality with
exponents $(2r,2r)$ gives
\[
\|S_{ab}-S_{ab}^{*(u)}\|_r
\le
\sum_{i,i'}
|w_{ii'}|
\Big(
\|X_{n,ia}-X_{n,ia}^{*(u)}\|_{2r}
\|X_{n,i'b}\|_{2r}
+
\|X_{n,ia}^{*(u)}\|_{2r}
\|X_{n,i'b}-X_{n,i'b}^{*(u)}\|_{2r}
\Big).
\]
Since
\[
X_{n,i}^{*(u)}
\stackrel{d}{=}
X_{n,i},
\]
and
\[
\|X_{n,ij}\|_{2r}
\le
\Psi_{n,2r}(0),
\]
we obtain
\[
\|S_{ab}-S_{ab}^{*(u)}\|_r
\le
2\Psi_{n,2r}(0)
\sum_{i,i'}
|w_{ii'}|
\,
\delta_{n,i,u,2r,a}.
\]
Because
\[
w_{ii'}=0
\quad\text{if }d_n(i,i')>b_n,
\]
each row contains at most
\(N_n(b_n)\)
nonzero entries. Hence
\[
\sum_{i'}
|w_{ii'}|
\le
N_n(b_n),
\]
which yields
\[
\|S_{ab}-S_{ab}^{*(u)}\|_r
\le
2
N_n(b_n)
\Psi_{n,2r}(0)
\sum_{i}
\delta_{n,i,u,2r,a}.
\]

Finally,
\[
\sum_i
\delta_{n,i,u,2r,a}
\le
\Psi_{n,2r}(0),
\]
by the definition of
\(\Psi_{n,2r}(0)\).
Therefore,
\[
\|S_{ab}-S_{ab}^{*(u)}\|_r
\le
C
N_n(b_n)
\Psi_{n,2r}^2(0).
\]
This completes a proof.
\end{proof}

\subsection{Proof for Theorem \ref{lem:bias}: Bias of the HAC estimator }
Recall
\[
\Gamma_n(s)
=
\frac1n
\sum_{i\in V_n}
\sum_{j\in \mathcal{N}_n^\partial(i,s)}
\mathbb{E}[X_{n,i}X_{n,j}^\top].
\]
Then,
\[
\mathbb{E}(\widehat\Sigma_n(b_n))
=
\sum_{s\ge0}w(s/b_n)\Gamma_n(s)
\qquad\text{and}
\Sigma_n=\sum_{s\ge0}\Gamma_n(s).
\]
Hence,
\[
\mathbb E(\widehat\Sigma_n(b_n))-\Sigma_n
=
\sum_{s\ge1}
\{w(s/b_n)-1\}\Gamma_n(s),
\]
so that
\[
\big|
\mathbb E(\widehat\Sigma_n(b_n))-\Sigma_n
\big|_\infty
\le
\sum_{s\ge1}
|w(s/b_n)-1|
\,|\Gamma_n(s)|_\infty .
\]
It therefore remains to bound the shell covariance
\(|\Gamma_n(s)|_\infty\).

\begin{lemma}[Shell Covariance Bound]
\label{lem:control-Afd}
Under Assumptions \ref{ass:decay rate} and ~\ref{ass:volume},
\[
|\Gamma_n(s)|_\infty
\le
CA_{n,s}^{\mathrm{FD}},
\]
where
\[
A_{n,s}^{\mathrm{FD}}
:=
\max_{a,b\le p}
\frac1n
\sum_{i\in V_n}
\sum_{j\in\mathcal N_n^\partial(i,s)}
\sum_{u\in V_n}
\delta_{n,i,u,2,a}
\delta_{n,j,u,2,b}.
\]
Moreover,
\[
A_{n,s}^{\mathrm{FD}}
\le
C(1+s)^{2d+1}\rho^s,
\]
and consequently,
\[
\sum_{s\ge1}
A_{n,s}^{\mathrm{FD}}
<\infty.
\]
\end{lemma}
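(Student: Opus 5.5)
The plan has three parts: bound each entry of $\Gamma_n(s)$ entrywise by the Efron--Stein covariance inequality, then bound the resulting triple sum by splitting on which of $i$, $j$ is far from the shock $u$, and finally sum over $s$.

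\textbf{First claim.} Fix $a,b\le p$. The $(a,b)$ entry of $\Gamma_n(s)$ is $n^{-1}\sum_{i}\sum_{j\in\mathcal N_n^\partial(i,s)}\mathrm{Cov}(X_{n,ia},X_{n,jb})$, because $\mathbb E X_{n,i}=0$. Lemma~\ref{lem:covariance}(i) bounds each summand in absolute value by $C\sum_u \delta_{n,i,u,2,a}\delta_{n,j,u,2,b}$. Taking the maximum over $(a,b)$ then gives $|\Gamma_n(s)|_\infty\le CA^{\mathrm{FD}}_{n,s}$ directly.

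\textbf{Bound on $A^{\mathrm{FD}}_{n,s}$.} Fix $i$ and $j$ with $d_n(i,j)=s$. By the triangle inequality, every $u$ satisfies $d_n(i,u)\ge s/2$ or $d_n(j,u)\ge s/2$. I will split the $u$-sum into these two regions.

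On the region $d_n(i,u)\ge s/2$, bound $\delta_{n,j,u,2,b}\le C_2$ using Assumption~\ref{ass:decay rate} (an $L_2$ bound follows from $L_q$ with $q>4$). The remaining sum $\sum_{u:d_n(i,u)\ge s/2}\delta_{n,i,u,2,a}$ is at most
\[
\sum_{t\ge s/2} N_n^\partial(i,t)\,C\rho^t\lesssim (1+s)^d\rho^{s/2},
\]
which is essentially $\Psi_{n,2}(\lceil s/2\rceil)$ from Remark~\ref{rem:local}. The other region is symmetric, with the roles of $i$ and $j$ exchanged.

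Summing over $j\in\mathcal N^\partial_n(i,s)$ contributes a factor $N_n^\partial(i,s)\le C_V(1+s)^d$ by Assumption~\ref{ass:volume}. Averaging over $i$ costs nothing. This yields $A^{\mathrm{FD}}_{n,s}\lesssim (1+s)^{2d}\rho^{s/2}$.

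\textbf{Main obstacle: the rate.} This crude split gives $\rho^{s/2}$, not the stated $\rho^s$. Writing $\tilde\rho=\rho^{1/2}$ repairs this, since the assumption only requires some $\rho\in(0,1)$; but if the exact rate $\rho^s$ matters, I need a finer count. In that case I would keep both decay factors and bound
\[
\sum_u \rho^{d_n(i,u)+d_n(j,u)}\le\sum_{t}\#\{u:d_n(i,u)+d_n(j,u)=t\}\,\rho^t,
\]
using $t\ge s$. The number of such $u$ is at most $N_n(t)\lesssim(1+t)^d$. Summing over $t\ge s$ then gives $\lesssim(1+s)^{d}\rho^s\cdot\sum_{k\ge0}(1+k/(1+s))^{d}\rho^{k}\lesssim(1+s)^{d+1}\rho^s$, and the extra polynomial factor is absorbed into the stated $(1+s)^{2d+1}$. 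Multiplying by the shell count $(1+s)^d$ gives $A^{\mathrm{FD}}_{n,s}\le C(1+s)^{2d+1}\rho^s$. This refined counting is the step I expect to need care, in particular making sure the level-set count of $d_n(i,u)+d_n(j,u)$ is uniform in $n$.

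\textbf{Summability.} Finally, $\sum_{s\ge1}(1+s)^{2d+1}\rho^s<\infty$ because $\rho<1$, so $\sum_{s\ge1}A^{\mathrm{FD}}_{n,s}<\infty$.
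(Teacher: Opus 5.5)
Your refined level-set argument is correct and is essentially the paper's proof. Both rest on Lemma~\ref{lem:covariance}(i), the bound $\delta_{n,i,u,2,a}\le C_q\rho^{d_n(i,u)}$, and the triangle inequality $d_n(i,u)+d_n(j,u)\ge s$. The only difference is the order of counting: the paper fixes $(u,i)$, bounds the shell sum by $N_n^\partial(i,s)\rho^{|s-r|}$, and counts $i$ in shells around $u$, whereas you fix $(i,j)$ and count $u$ in level sets of $d_n(i,u)+d_n(j,u)$, which done carefully even gives $(1+s)^{2d}\rho^s$.

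Drop the $\tilde\rho=\rho^{1/2}$ shortcut, though. The $\rho$ in the lemma is the one from Assumption~\ref{ass:decay rate}, and it feeds into the rate $b_n^{2d+1}\rho^{b_n}$ of Theorem~\ref{lem:bias}. So the crude split at $s/2$ does not prove the statement as written, and the refined count is genuinely needed.
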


\begin{proof}
By Lemma~\ref{lem:covariance}(i), for every $a,b\le p$,
\[
\left|
E\!\left[X_{n,i,a}X_{n,i',b}\right]
\right|
\le
C
\sum_{u\in V_n}
\delta_{n,i,u,2,a}
\delta_{n,i',u,2,b}.
\]
Therefore,
\[
|\Gamma_n(s)|_\infty
\le
C A_{n,s}^{\mathrm{FD}}.
\]
By Assumption~\ref{ass:decay rate},
\[
A_{n,s}^{\mathrm{FD}}
\le
C_2^2
\frac1n
\sum_{i\in V_n}
\sum_{i'\in \mathcal{N}_n^\partial(i,s)}
\sum_{u\in V_n}
\rho^{d_n(i,u)}\rho^{d_n(i',u)}.
\]
Rearranging the sums gives
\[
A_{n,s}^{\mathrm{FD}}
\le
C_2^2
\frac1n
\sum_{u\in V_n}
\sum_{i\in V_n}
\rho^{d_n(i,u)}
\sum_{i'\in \mathcal{N}_n^\partial(i,s)}
\rho^{d_n(i',u)}.
\]
Fix \(u\) and \(i\), and write \(r=d_n(i,u)\). If \(i'\in \mathcal{N}_n^\partial(i,s)\), then
by the triangle inequality,
\[
d_n(i',u)\ge |s-r|.
\]
Therefore, by Assumption~\ref{ass:volume},
\[
\sum_{i'\in \mathcal{N}_n^\partial(i,s)}
\rho^{d_n(i',u)}
\le
N_n^\partial(i,s)\rho^{|s-r|}
\le
C_N(1+s)^{d}\rho^{|s-r|}.
\]
Hence
\[
A_{n,s}^{\mathrm{FD}}
\le
C
(1+s)^{d}
\frac1n
\sum_{u\in V_n}
\sum_{i\in V_n}
\rho^{d_n(i,u)}\rho^{|s-d_n(i,u)|}.
\]
Grouping \(i\)'s according to \(r=d_n(i,u)\), we get
\[
\sum_{i\in V_n}
\rho^{d_n(i,u)}\rho^{|s-d_n(i,u)|}
=
\sum_{r\ge0}
\sum_{i\in \mathcal{N}_n^\partial(u,r)}
\rho^r\rho^{|s-r|}.
\]
By polynomial shell growth  Assumption~\ref{ass:volume},
\[
\sum_{r\ge0}
\sum_{i\in \mathcal{N}_n^\partial(u,r)}
\rho^r\rho^{|s-r|}
\le
C
\sum_{r\ge0}
(1+r)^{d}\rho^r\rho^{|s-r|}.
\]
The last sum satisfies
\[
\sum_{r\ge0}
(1+r)^{d}\rho^r\rho^{|s-r|}
\le
C(1+s)^{d+1}\rho^s.
\]
Thus,
\[
A_{n,s}^{\mathrm{FD}}
\le
C(1+s)^{d}(1+s)^{d+1}\rho^s
=
C(1+s)^{2d+1}\rho^s.
\]
Since a polynomial times a geometrically decaying sequence is summable,
\[
\sum_{s\ge1}A_{n,s}^{\mathrm{FD}}<\infty.
\]

By Lemma~\ref{lem:control-Afd},
\[
\big|
E(\widehat\Sigma_n(b_n))-\Sigma_n
\big|_\infty
\le
C
\sum_{s\ge1}
|w(s/b_n)-1|
A_{n,s}^{\mathrm{FD}}.
\]
Since \(w(x)=0\) for \(x>1\), we decompose
\[
\sum_{s\ge1}
|w(s/b_n)-1|
A_{n,s}^{\mathrm{FD}}
=
I_{n,1}+I_{n,2},
\]
where
\[
I_{n,1}
=
\sum_{1\le s\le b_n}
|w(s/b_n)-1|
A_{n,s}^{\mathrm{FD}}
\]
and
\[
I_{n,2}
=
\sum_{s>b_n}
A_{n,s}^{\mathrm{FD}}.
\]

\paragraph{Step 1. Kernel Approximation Error.}

By Assumption~\ref{ass:kernel},
\[
|w(s/b_n)-1|
\le
C
\left(\frac{s}{b_n}\right)^\kappa,
\qquad
1\le s\le b_n.
\]
Hence,
\[
\begin{aligned}
I_{n,1}
&\le
Cb_n^{-\kappa}
\sum_{s=1}^{b_n}
s^\kappa
A_{n,s}^{\mathrm{FD}}   \\
&\le
Cb_n^{-\kappa}
\sum_{s=1}^{b_n}
(1+s)^{\kappa+2d+1}\rho^s.
\end{aligned}
\]
Since
\[
\sum_{s\ge1}
(1+s)^{\kappa+2d+1}\rho^s
<\infty,
\]
we obtain
\[
I_{n,1}
=
O(b_n^{-\kappa}).
\]

\paragraph{Step 2. Tail Truncation Error.}

Let \(r=2d+1\). By Lemma~\ref{lem:control-Afd},
\[
I_{n,2}
\le
C
\sum_{s>b_n}
(1+s)^r\rho^s.
\]
Writing \(s=b_n+k\) gives
\[
\begin{aligned}
I_{n,2}
&\le
C\rho^{b_n}
\sum_{k\ge1}
(1+b_n+k)^r\rho^k   \\
&\le
C(1+b_n)^r\rho^{b_n}
\sum_{k\ge1}
(1+k)^r\rho^k,
\end{aligned}
\]
where we use
\[
(1+b_n+k)^r
\le
C(1+b_n)^r(1+k)^r.
\]
Since
\[
\sum_{k\ge1}(1+k)^r\rho^k<\infty,
\]
it follows that
\[
I_{n,2}
=
O(b_n^r\rho^{b_n}).
\]

Combining the bounds for \(I_{n,1}\) and \(I_{n,2}\),
\[
\big|
E(\widehat\Sigma_n(b_n))-\Sigma_n
\big|_\infty
=
O(b_n^{-\kappa})
+
O(b_n^r\rho^{b_n})
=
o(1),
\]
since \(b_n\to\infty\).
\end{proof}

\subsection{Proofs of Theorems \ref{lem:hac-doob-variance_q} and \ref{lem:hac-doob-variance_subwei} }
Fix a pair of coordinates $(a,b)\in[p]^2$ and write the covariance matrix in the form \[ \widehat\Sigma_{n,ab}(b_n) = \frac1n \sum_{i\in V_n} \sum_{i'\in V_n} w_{ii'}(b_n)X_{n,ia}X_{n,i'b}, \qquad w_{ii'}(b_n) = w\!\left(\frac{d_n(i,i')}{b_n}\right). \] Define \[
Q_{ab}:=n\Big(\widehat\Sigma_{n,ab}(b_n)-\mathbb E\widehat\Sigma_{n,ab}(b_n)\Big).
\] 
Then
\[
Q_{ab}=\sum_{i\in V_n}\sum_{i'\in V_n} w_{ii'}(b_n)\Big(X_{n,ia}X_{n,i'b}-\mathbb E[X_{n,ia}X_{n,i'b}]\Big)=S_{ab}-\mathbb E(S_{ab}),
\]
where $S_{ab}$ is defined in Lemma \ref{lem:hac-coupling}.

Fix an arbitrary ordering of the shock field
\(
\varepsilon_n=\{\varepsilon_{n,1},\ldots,\varepsilon_{n,n}\}
\)
and let
\[
\mathcal F_k
=
\sigma(\varepsilon_{n,1},\ldots,\varepsilon_{n,k}).
\]
Define
\[
M_k
=
E(S_{ab}\mid\mathcal F_k),
\qquad
D_k
=
M_k-M_{k-1}.
\]
Since
\[
M_n=S_{ab},
\qquad
M_0=E(S_{ab}),
\]
we have
\[
Q_{ab}
=
S_{ab}-E(S_{ab})
=
M_n-M_0
=
\sum_{k=1}^nD_k.
\]
Moreover, by conditional Jensen's inequality,
\[
\|D_k\|_r
\le
\|S_{ab}-S_{ab}^{*(k)}\|_r.
\]
Applying Lemma~\ref{lem:hac-coupling},
\[
\|D_k\|_r
\le
CN_n(b_n)\Psi_{n,2r}^2(0),
\qquad
r\ge1.
\]

\paragraph{(i) Finite-$q$-th-Moment Regime:}
By Burkholder’s inequality, 
\[
\|Q_{ab}\|_{q/2}
\le
C_q
\left(
\sum_{k=1}^n
\|D_k\|_{q/2}^2
\right)^{1/2} 
\le
C_q
n^{1/2}
N_n(b_n)
\Psi_{n,q}^2(0).
\]
Therefore, by Markov's inequality, for every \(x>0\),
\[
\mathbb P\left(
|Q_{ab}|>x
\right)
\le
C_q
\frac{
n^{q/4}
N_n(b_n)^{q/2}
\Psi_{n,q}^{q}(0)
}{
x^{q/2}
}.
\]
Applying a union bound over \((a,b)\in[p]^2\), we obtain
\[
\mathbb P\left(
n\left|
\widehat\Sigma_n(b_n)
-
\mathbb E\widehat\Sigma_n(b_n)
\right|_\infty
>x
\right)
\le
C_q
\frac{
p^2
n^{q/4}
N_n(b_n)^{q/2}
\Psi_{n,q}^{q}(0)
}{
x^{q/2}
}.
\]

\paragraph{(ii) Sub-Weibull Regime:}
By Burkholder’s inequality, 
\[
\|Q_{ab}\|_{q}
\le\sqrt {q-1}
\left\|
\left(
\sum_{k=1}^n
\mathbb E[D_k^2\mid \mathcal F_{k-1}]
\right)^{1/2}
\right\|_{q}\le \sqrt {q-1}
\left(
\sum_{k=1}^n
\|D_k\|_{q}^2
\right)^{1/2} 
\le \sqrt q\sqrt n N_n(b_n)\Psi^2_{n,2q}(0).
\]
Also,
\[
\Psi_{n,2q}(0)\le\Phi_{n,\psi_{\nu}}\times (2q)^\nu
\]
Therefore \[
\frac{\|Q_{ab}\|_{q}}{q^{1+2v}}\le \sqrt n N_n(b_n) \Phi_{n,\psi_{\nu}}^2
\]
Therefore we get the tail probability
   \[
\mathbb P\!\left(
|Q_{ab}|
\ge x
\right)
\le
p\exp\!\left(-c\,\frac{x^\gamma}{\left(\sqrt n N_n(m)\Phi_{n,\psi_{\nu}}^2\right)^\gamma}\right),
\]
Furthermore, for infinite norm of covariance matrix
    \[
\mathbb P\!\left(
n\bigl|\widehat\Sigma_n(m)-\mathbb E\widehat\Sigma_n(m)\bigr|_{\infty}
\ge x
\right)
\le
p^2\exp\!\left(-c\,\frac{x^\gamma}{\left(\sqrt n N_n(m)\Phi_{n,\psi_{\nu}}^2\right)^\gamma}\right).
\]
where $\gamma=\frac{1}{1+2\nu}$.

\subsection{Proof of Corollary ~\ref{cor:sigmafea}: Feasible HAC Estimator}

Recall the feasible HAC estimator when we do not know the expectation 
\[
\widetilde\Sigma_n(b_n)=\sum_{s\ge 0}w_n(s)\widetilde\Gamma_n(s),
\]
where
\[
\widetilde\Gamma_n(s)
=
\frac{1}{n}
\sum_{i\in V_n}
\sum_{i'\in \mathcal{N}_n^\partial(i,s)}
(X_{n,i}-\bar X_n)(X_{n,i'}-\bar X_n)^\top .
\]
Let
\[
Y_{n,i}=X_{n,i}-\mu_0,\qquad 
\bar Y_n=\bar X_n-\mu_0 .
\]
Then
\[
X_{n,i}-\bar X_n
=
Y_{n,i}-\bar Y_n .
\]
We can rewrite the feasible HAC estimator as 
\[
\widetilde\Sigma_{n}(b_n)=\frac1n\sum_{i\in V_n}\sum_{i'\in V_n} w_{ii'}(b_n)\,(Y_{n,i}-\bar Y_n)(Y_{n,i'}-\bar Y_n)^\top,
\]
Expanding the product yields
\[
\begin{aligned}
\widetilde\Sigma_n-\widehat\Sigma_n
=&
-\frac1n
\sum_{i,i'}
w_{ii'}(b_n)
Y_{n,i}\bar Y_n^\top
-\frac1n
\sum_{i,i'}
w_{ii'}(b_n)
\bar Y_nY_{n,i'}^\top  \\
&+
\frac1n
\sum_{i,i'}
w_{ii'}(b_n)
\bar Y_n\bar Y_n^\top .
\end{aligned}
\]

For the first term, define the row-sum weights
\[
r_{n,i}:=\sum_{i'\in V_n}w_{ii'}(b_n).
\]
Then
\[
\frac1n
\sum_{i,i'\in V_n}
w_{ii'}(b_n)
Y_{n,i}\bar Y_n^\top
=
\left(
\frac1n
\sum_{i\in V_n}
r_{n,i}Y_{n,i}
\right)
\bar Y_n^\top .
\]
Moreover,
\[
|r_{n,i}|
\le
\sum_{i'\in V_n}|w_{ii'}(b_n)|
\le
N_n(b_n).
\]
Since \(r_{n,i}\) depends only on the network and bandwidth, it is deterministic
conditional on the network. 

We use the same Doob martingale argument as above. We define $S_{r,j}=\sum_{i \in V_n}r_{n,i}Y_{n,ij}$. Let $\{\mathcal F_{n,u}\}_{u\in V_n}$ be the filtration generated by the shock field on all the nodes $\varepsilon_n$. We define the Doob  martingale differences \[
D_{n,u,r,j}
:=\mathbb E\!\left[ S_{r,j}-S_{r,j}^{*(u)}\mid \mathcal F_{n,u}\right],
\qquad u\in V_n.
\]
And $S_{r,j}=\sum_{u\in V_n}D_{n,u,r,j}$.

Since \(r_{n,i}\) is deterministic conditional on the network, we have
\[
S_{r,j}-S_{r,j}^{*(u)}
=
\sum_{i\in V_n}r_{n,i}
\left(Y_{n,ij}-Y_{n,ij}^{*(u)}\right).
\]
Hence,
\[
\begin{aligned}
\|D_{n,u,r,j}\|_q
&=
\left\|
\mathbb E\left[S_{r,j}-S_{r,j}^{*(u)}\mid \mathcal F_{n,u}\right]
\right\|_q  \\
&\le
\left\|
S_{r,j}-S_{r,j}^{*(u)}
\right\|_q  \\
&\le
\sum_{i\in V_n}
|r_{n,i}|
\left\|
Y_{n,ij}-Y_{n,ij}^{*(u)}
\right\|_q .
\end{aligned}
\]
Since
\[
|r_{n,i}|\le N_n(b_n),
\]
we obtain
\[
\|D_{n,u,r,j}\|_q
\le
N_n(b_n)
\sum_{i\in V_n}
\left\|
Y_{n,ij}-Y_{n,ij}^{*(u)}
\right\|_q .
\]
Therefore,
\[
\|D_{n,u,r,j}\|_q
\le
N_n(b_n)
\sum_{i\in V_n}
\delta_{n,i,u,q,j}  
\le
N_n(b_n)
\Psi_{n,q}(0)
\]

Now, since \(\{D_{n,u,r,j},\mathcal F_{n,u}\}_{u\in V_n}\) is a martingale
difference array, Burkholder's inequality gives
\[
\begin{aligned}
\|S_{r,j}\|_q
&=
\left\|
\sum_{u\in V_n}D_{n,u,r,j}
\right\|_q  \\
&\le
Cq^{1/2}
\left(
\sum_{u\in V_n}
\|D_{n,u,r,j}\|_q^2
\right)^{1/2}  \\
&\le
Cq^{1/2}
\sqrt n\,N_n(b_n)\Psi_{n,q}(0).
\end{aligned}
\]

\paragraph{(i) Finite-$q$-th-Moment Regime:} 
Under fixed $q$,
\[
\left\|
\frac1n S_{r,j}
\right\|_q
\le
C_q
N_n(b_n)\Psi_{n,q}(0)n^{-1/2}.
\]
Taking the maximum over \(j\le p\), we use
\[
\max_{1\le j\le p}|x_j|
\le
\left(\sum_{j=1}^p |x_j|^q\right)^{1/q}.
\]
Thus,
\[
\left\|
\left|
\frac1n\sum_{i\in V_n}r_{n,i}Y_{n,i}
\right|_\infty
\right\|_q
\le
\left(
\sum_{j=1}^p
\left\|
\frac1n S_{r,j}
\right\|_q^q
\right)^{1/q} 
\le
C_q
p^{1/q}
N_n(b_n)\Psi_{n,q}(0)n^{-1/2}.
\]
By Markov's inequality,
\[
\left|
\frac1n\sum_{i\in V_n}r_{n,i}Y_{n,i}
\right|_\infty
=
O_p\left(
N_n(b_n)\Psi_{n,q}(0)p^{1/q}n^{-1/2}
\right).
\]
Similarly, \[|\bar Y_n|_\infty=O_p(\Psi_{n,q}(0)p^{1/q}n^{-1/2})\]
Combining the three terms yields
\[
|\widetilde\Sigma_n-\widehat\Sigma_n|_\infty
=
O_p\left(
N_n(b_n)\Psi_{n,q}^2(0) p^{2/q}n^{-1}
\right).
\]

\paragraph{(ii) Sub-Weibull Regime:}
Burkholder's inequality gives 
\[
\|S_{r,j}\|_q
\le
\sqrt{q}
\sqrt n\,N_n(b_n)\Psi_{n,q}(0).
\]
Since
\[
\Psi_{n,q}(0)
\le
q^\nu\Phi_{n,\psi_\nu},
\]
we have
\[
\|S_{r,j}\|_q
\le
Cq^{1/2+\nu}
\sqrt n\,N_n(b_n)\Phi_{n,\psi_\nu}
=
Cq^{1/\beta}
\sqrt n\,N_n(b_n)\Phi_{n,\psi_\nu},
\]
where
\(
\beta=\frac{2}{1+2\nu}.
\)
Hence,
\[
\left\|
\frac1nS_{r,j}
\right\|_{\psi_{1/\beta}}
\le
Cn^{-1/2}
N_n(b_n)
\Phi_{n,\psi_\nu}.
\]
Applying the maximal inequality for sub-Weibull random variables yields
\[
\left|
\frac1n
\sum_{i\in V_n}
r_{n,i}Y_{n,i}
\right|_\infty
=
O_p\!\left(
N_n(b_n)
\Phi_{n,\psi_\nu}
(\log p)^{1/\beta}
n^{-1/2}
\right).
\]
Similarly, \[|\bar Y_n|_\infty=O_p(\Phi_{n,\psi_\nu}\log p^{1/\beta}n^{-1/2}).\]
Combining the three terms yields
\[
|\widetilde\Sigma_n-\widehat\Sigma_n|_\infty
=
O_p\left(
N_n(b_n)\Phi_{n,\psi_\nu}^2 \log p^{1/\gamma}n^{-1}
\right),
\]
where $\gamma=\frac{1}{1+2\nu}$.
\qed
\end{document}